\documentclass[11pt,a4paper]{article}
\usepackage[hmargin=1.25in,vmargin=1.25in]{geometry}
\usepackage{appendix,enumitem,setspace,titlesec}
\usepackage{amssymb,amsthm,mathrsfs,mathtools}
\usepackage[linesnumbered,ruled,vlined]{algorithm2e}
\usepackage[dvipsnames,svgnames,x11names]{xcolor} % Must load before tikz
\usepackage{natbib,bibunits,float,subcaption,tabularray,xurl}
\UseTblrLibrary{booktabs,siunitx}
\usepackage{pgfplots,pgfplotstable}
\pgfplotsset{compat=1.18}
\usetikzlibrary{arrows.meta,calc,plotmarks,positioning}
\usepgfplotslibrary{fillbetween,groupplots,colorbrewer}
\usepackage[hyperindex,breaklinks,hypertexnames=false]{hyperref} % hyperindex,breaklinks needed for arXiv
\usepackage[open,openlevel=2,numbered]{bookmark}  % Must load after hyperref

\AtBeginDocument{%
  \setlength{\abovedisplayskip}{10.3pt plus 3pt minus 6pt}%
  \setlength{\belowdisplayskip}{10.3pt plus 3pt minus 6pt}%
}

\hypersetup{colorlinks, linkcolor={NavyBlue}, citecolor={NavyBlue}, urlcolor={NavyBlue}}
\DontPrintSemicolon
\SetAlgoCaptionSeparator{.} 
\SetKwComment{Comment}{\#\ }{}
\AtBeginEnvironment{figure}{\captionsetup[figure]{font=small,labelsep=period,labelfont=sc}}
\AtBeginEnvironment{table}{\captionsetup[table]{font=small,labelsep=period,labelfont=sc}}
\DeclareTblrTemplate{caption-sep}{default}{.\space}
\DeclareTblrTemplate{remark-sep}{default}{.\space}
\SetTblrStyle{caption-tag}{font=\small\scshape}
\SetTblrStyle{remark-tag}{font=\footnotesize\itshape}
\SetTblrStyle{caption-text}{font=\small}
\SetTblrStyle{remark-text}{font=\footnotesize}
\SetTblrInner[tblr,talltblr,longtblr]{cells={font=\footnotesize}}
\SetTblrStyle{remark}{halign=l}
\pgfplotsset{title style={font=\small}}

\newtheorem{thm}{Theorem}[section]

\newtheorem{lem}{Lemma}[section]
\newtheorem{pro}{Proposition}[section]
\newtheorem{ass}{Assumption}[section]
\theoremstyle{definition}

\newtheorem{ex}{Example}[section]
\newtheorem{rem}{Remark}[section]
\newtheoremstyle{exctd}
{\topsep} {\topsep}%
{\upshape}% Body font
{}% Indent amount (empty = no indent, \parindent = para indent)
{\bfseries}% Thm head font
{.}% Punctuation after thm head
{1em}% Space after thm head (\newline = linebreak)
{\thmname{#1} \thmnumber{ #2}\thmnote{#3} (cont.)}% Thm head spec
\theoremstyle{exctd}
\newtheorem*{exctd}{Example}

\begin{document}
\makeatletter
\gdef\@extra@b@citeb{@bu1}
\gdef\@extra@binfo{@bu1}
\makeatother
\begin{bibunit}[ecta]
\pdfbookmark[1]{Title}{title}
\title{Debiased Machine Learning: Identification, Estimation, and Shape Constraints\thanks{We are grateful to Whitney K.\ Newey for his insights and encouragement. We also thank St\'{e}phane Bonhomme, Florian F.\ Gunsilius, Essie Maasoumi, Mikkel Plagborg-M{\o}ller, Pedro H.\ C.\ Sant'Anna, and participants at seminars and conferences for their helpful comments and discussions. We acknowledge the use of OpenAI's ChatGPT for language editing, literature searches, and coding assistance. }} 
\author{
Qihui Chen\\ School of Management and Economics \\CUHK--Shenzhen\\ qihuichen@cuhk.edu.cn
\and
Ka Yan Cheng \\ Department of Economics \\ Emory University\\ ka.yan.cheng@emory.edu
\and
Zheng Fang \\ Department of Economics \\ Emory University\\ zheng.fang@emory.edu}
\date{July 29, 2026}
\maketitle

\begin{abstract}
We develop a general framework of identification and estimation for automatic debiased machine learning (DML) where the parameter of interest $\theta_0$ is identified by a moment condition involving a nuisance $\gamma_0$ that may be high dimensional. We establish conditions under which the Riesz representer $\alpha_0$, which is at the core of DML, is identified, and show that the identification occurs precisely when $\alpha_0$ uniquely optimizes a quadratic functional. This characterization enables us to develop a general estimation procedure for $\alpha_0$ that allows for generic $\gamma_0$ including those defined by models with endogeneity and encompasses both classical sieves and modern architectures such as deep neural networks. To improve estimation precision and mitigate the curse of dimensionality, we incorporate shape constraints on $\gamma_0$ by embedding them into a possibly nonlinear parameter space. We illustrate our estimation procedure through simulations and empirical applications. 
\end{abstract}

\begin{center}
\textsc{Keywords:}  Debiased machine learning, Riesz representer, Riesz regression, shape constraints, deep learning
\end{center}

\newpage

\section{Introduction}

Debiased machine learning (DML) has emerged as a powerful framework for statistical inference in empirical research, particularly in applications with big data \citep{AhrensChernozhukovHansenKozburSchafferWiemann2026DML}. In these settings, the parameter of interest $\theta_0$ is identified by a moment restriction that depends on a first step nuisance parameter $\gamma_0$. Thus, inference on $\theta_0$ must account for estimation error in $\gamma_0$, a challenge exacerbated when $\gamma_0$ is high dimensional. While machine learning is well suited for estimating such high dimensional objects, it tends to induce regularization and overfitting biases that may lead to biased and $\sqrt{n}$-inconsistent estimation of $\theta_0$ \citep{ChernozhukovChetverikovDemirerDufloHansenNeweyRobins2018Double,ChernozhukovEscancianoIchimuraNewey2022LocalRobust}. DML alleviates these biases by orthogonalizing the moment with respect to $\gamma_0$, coupled with cross-fitting (a form of sample splitting). At the core of the general DML framework in \citet{ChernozhukovEscancianoIchimuraNewey2022LocalRobust,ChernozhukovNeweySingh2022Automatic} is an additional nuisance parameter $\alpha_0$, known as the Riesz representer, which is in general necessary to achieve orthogonality through the so-called first step influence function. As argued in the literature, since $\alpha_0$ may lack a closed-form expression or involve inverting unknowns, it is important to estimate $\alpha_0$ in an automatic way without the knowledge of its analytic form. 

Despite several existing strategies for the automatic estimation of $\alpha_0$ \citep{ChernozhukovNeweyQuintasSyrgkanis2024RieszReg}, the identification of $\alpha_0$, to the best of our knowledge, remains undeveloped at a general level, particularly for first steps defined by models with endogeneity. As our first main contribution, we establish conditions under which $\alpha_0$ is identified by the moment conditions underlying the automatic estimation procedures in \citet{ChernozhukovEscancianoIchimuraNewey2022LocalRobust,ChernozhukovNeweySingh2022Automatic}. The conditions we impose are mild in the sense that they consist of standard smoothness requirements, a linearity condition with respect to $\alpha_0$ that reflects a common feature of the first step influence function, and a {\it coercivity} condition that is not only sufficient but also necessary in a certain sense.  

While the identifying equation may in principle be exploited to directly estimate $\alpha_0$, it is important to obtain a tractable extremal reformulation amenable to penalization as common in machine learning. Our second contribution shows that $\alpha_0$ is identified precisely when it uniquely optimizes a quadratic functional $\mathfrak C_0$. Crucially, the optimization only depends on the knowledge of the functional forms of the original moment restriction and the first step influence function but not the analytic expression of $\alpha_0$, thereby enabling automatic estimation. Our result also reveals that the quadratic characterization recurring in the literature (see, e.g., \citet{ChernozhukovEscancianoIchimuraNewey2022LocalRobust,ChernozhukovNeweySingh2022Automatic,ChernozhukovNeweySinghSyrgkanis2024Adversarial,ChernozhukovNeweyQuintasSyrgkanis2024RieszReg} \citet{Singh2024KRRR}, and \citet{LaanBibautKallusLuedtke2026AutoDML}) arises from a more general level. 

As our third contribution, we develop an automatic estimation procedure of $\alpha_0$ based on the empirical analog of $\mathfrak C_0$, known as Riesz regression, which generalizes \citet{ChernozhukovEscancianoIchimuraNewey2022LocalRobust,ChernozhukovNeweySingh2022Automatic,ChernozhukovNeweyQuintasSyrgkanis2024RieszReg} to settings where $\gamma_0$ may be defined by models with endogeneity, e.g., nonparametric instrumental variable (NPIV) models. The generalization appears nontrivial and poses nontrivial challenges because the parameter space of $\alpha_0$ may be unknown. We derive the convergence rates of the resulting estimator under conditions that allow for penalization and nonlinear sieve approximations (e.g., neural networks). At the technical level, a key ingredient in characterizing the rates is the modulus of continuity of an empirical process, which connects to the critical radius in statistical learning \citep{ChernozhukovNeweySinghSyrgkanis2024Adversarial}.

Our final contribution aims to integrate shape constraints into DML. While machine learning is powerful for processing big data, it is no panacea in delivering statistical guarantees and equally suffers from the curse of dimensionality dictated by the optimal convergence rates \citep{Stone1982RateGlobal,HallHorowitz2005NPIV,ChenReiss2011Rate}. Therefore, in high dimensional settings, additional structures must be exploited to ensure that the convergence rates are fast enough. Following \citet{Stone1985Additive,Stone1994Spline}, some recent studies in deep learning exploit ``low dimensional'' structures such as hierarchical interaction \citep{Schmidt2020ReLU,KohlerLanger2021RateDNN}. These structures, however, may not always be easy to motivate. Shape constraints, on the other hand, are deeply grounded in economics because extensive economic theories are formulated as shape constraints \citep{Matzkin1994Handbook,ChetverikovSantosAzeem2018Shape}. Thus, they may be viewed as alternative structures that help improve estimation precision. In developing our general theory, we operate under the restriction $\gamma_0\in\Gamma$ for some possibly nonlinear parameter space $\Gamma$ which in turn restricts $\alpha_0$. Our identification, characterization, and estimation results for $\alpha_0$ thus allow for general nonlinear shape constraints, though we stress that these results appear novel even without shape constraints.

We complement our contributions on Riesz regression and shape constraints by simulation studies and empirical applications. We focus on implementation via deep learning, not only to maintain a coherent treatment but also because neural networks provide a relatively flexible and tractable way to incorporate shape constraints. Our simulation designs include endogenous as well as exogenous first steps. In both cases, incorporating shape constraints overall improves the performance of point estimates and confidence intervals, with larger gains as more constraints are imposed. Our first application is a difference-in-differences (DiD) analysis of Medicaid expansions and mortality, where monotonicity is imposed based on institutional knowledge well documented in the literature. The second application studies wage growth and working hours, imposing convexity as an implication of economic theory. 

% Arguments for incorporating shape constraints: p.50 in \citet{IchimuraNewey2022IF}.

% [ADD DISCUSSIONS ON NUMERICAL FINDINGS; ADJUST THE WHOLE INTRODUCTION AND THE ABSTRACT ACCORDINGLY]
% [ADJUST THE WHOLE INTRODUCTION AND THE ABSTRACT ACCORDINGLY BASED ON PARTIAL ID EXAMPLE]

This paper contributes to the extensive and rapidly evolving literature on debiased machine learning. Building on the semiparametric literature \citep{Newey1994AsymptoticVar,IchimuraNewey2022IF}, \citet{ChernozhukovEscancianoIchimuraNewey2022LocalRobust,ChernozhukovNeweySingh2022Automatic} developed a general framework for automatic DML while leaving the identification of $\alpha_0$ for future work. We thus complement their work by laying the identification foundation. The automatic estimation of $\alpha_0$ in \citet{ChernozhukovEscancianoIchimuraNewey2022LocalRobust,ChernozhukovNeweySingh2022Automatic} is based on a LASSO minimum distance criterion where $\gamma_0$ is defined by a ``generalized linear regression'' without endogeneity and subject to linear constraints. \citet{ChernozhukovNeweyQuintasSyrgkanis2024RieszReg} obtained the quadratic characterization of $\alpha_0$ for the same class of first steps and developed the Riesz regression that facilitates the use of nonlinear machine learners---see also \citet{ChernozhukovNeweySingh2022GlobalLocalDML}, \citet{ChernozhukovNeweySinghSyrgkanis2024Adversarial}, \citet{Singh2024KRRR}, and \citet{LaanBibautKallusLuedtke2026AutoDML} for constructions in related settings. Our quadratic characterization is more general in that $\theta_0$ may be defined by nonlinear moment conditions while $\gamma_0$ may be defined by models with endogeneity subject to general nonlinear shape constraints.

There is relatively less work on automatic DML for endogenous first steps, even under linear constraints. While the setup in \citet{ChernozhukovEscancianoIchimuraNewey2022LocalRobust} does accommodate endogenous $\gamma_0$, they primarily focus on exogenous $\gamma_0$ after showing how first step influence functions can be used to construct orthogonal moment conditions. We build on their work by taking the orthogonality as given. \citet{Bakhitov2026PGMM} developed a LASSO minimum distance estimator of $\alpha_0$ for functionals of the NPIV regression. Existing simulation evidence, however, suggests that allowing for nonlinear machine learners can materially affect inference; see, e.g., \citet{ChernozhukovChetverikovDemirerDufloHansenNeweyRobins2018Double} and \citet{AhrensChernozhukovHansenKozburSchafferWiemann2026DML}. \citet{BennettKallusMaoNeweySyrgkanisUehara2025StrongID} studied linear functionals of first steps that may be partially identified by linear conditional moment restrictions under a ``strong identification'' condition. When specialized to this setup, our general theory implies that the strong identification condition is stronger than necessary for the identification of $\alpha_0$---see also Appendix \ref{Sec: NPIV extensions} for more comparisons. \citet{Bruns2025TSML} developed a two-stage machine learning estimator of the NPIV regression by adaptively learning the instrument basis.  Our identification, characterization, and estimation results appear novel and more general while also allowing for nonlinear constraints on $\gamma_0$. 

% Based on their characterization of $\alpha_0$, \citet{BennettKallusMaoNeweySyrgkanisUehara2025StrongID} designed a minimax estimator of $\alpha_0$, which may be computationally intensive to implement. 

Our work is also related to the literature on shape constraints which have long played prominent roles in economics. Statistical studies on shape constraints emerged in the 1950s \citep{Hildreth1954Concave,Grenander1956II,BrunkEwingUtz1957Minimize} and have since developed into an influential area of research \citep{Matzkin1994Handbook,ChetverikovSantosAzeem2018Shape}. Our use of shape restrictions is driven by their potential in improving estimation \citep{BlundellHorowitzParey2012Measuring,BlundellHorowitzParey2017Slutsky,ChetverikovWlihelm2017MonNPIV,HorowitzLee2017Shape}. Shape constraints may also be viewed as another form of regularization, as emphasized in \citet{ChetverikovWlihelm2017MonNPIV}. This perspective is particularly helpful in deep learning which often depends on various forms of hyperparameters and regularization and yet a general theory on data-driven tuning is currently unavailable. Our framework brings shape constraints into DML and our simulation results ahead show that shape constraints may improve estimation and inference even with crude tuning. 
 
%  and \citet{Zhu2020Shape}, \citet{FangSeo2019Shape}, \citet{ChernozhukovNeweySantos2023CCMM}, \citet{BreunigChen2024Adaptive}, and references therein for more recent contributions. 
% \footnote{A seminorm $\|\cdot\|_{\mathbf H}\colon \mathbf H\to\mathbf R$ is such that $\|a+b\|_{\mathbf H}\le \|a\|_{\mathbf H}+\|b\|_{\mathbf H}$ and $\|ta\|_{\mathbf H}=|t|\|a\|_{\mathbf H}$ for all $t\in\mathbf R$ and $a,b\in\mathbf H$. If in addition $a=0$ whenever $\|a\|_{\mathbf H}=0$, then $\|\cdot\|_{\mathbf H}$ is a norm.}
 
Finally, we introduce some notation and concepts. For $a,b\in\mathbf R$, set $a\vee b\equiv\max\{a,b\}$ and let $a\lesssim b$ mean $a\le cb$ for some constant $c>0$. For a set $A$ in a vector space $\mathbf H$ with seminorm $\|\cdot\|_{\mathbf H}$, its interior $A^\circ$ is the largest open set contained in $A$, its convex conical hull $\mathrm{con}(A)$ is the smallest convex cone in $\mathbf H$ containing $A$, its linear span $\mathrm{lin}(A)$ is the smallest subspace containing $A$, and its closed linear span  $\overline{\mathrm{lin}}(A)$ is the closure of $\mathrm{lin}(A)$. The distance from $a\in\mathbf H$ to $A$ is $d_{\mathbf H}(a,A)\equiv \inf_{a'\in A}\|a-a'\|_{\mathbf H}$. A map $\phi\colon \Gamma\subset\mathbf H\to\mathbf R$ is Gateaux differentiable at $\gamma_0\in\Gamma$ tangentially to a set $H_0\subset\mathbf H$ if there is a linear map $\nabla_\gamma\phi(\gamma_0)\colon\mathrm{lin}(H_0)\to\mathbf R$ such that $\lim_{t\downarrow 0}\{\phi(\gamma_0+th)-\phi(\gamma_0)\}/t=\nabla_\gamma\phi(\gamma_0)[h]$ whenever $h\in H_0$ and $\gamma_0+th\in\Gamma$ for all small $t\ge 0$. For $X\in\mathcal X$ with law $P$, let $L^2(X)\equiv\{f\colon\mathcal X\to\mathbf R\colon \|f\|_{P,2}<\infty\}$ with $\|f\|_{P,2}\equiv\{E[|f(X)|^2]\}^{1/2}$ and $L^\infty(X)\equiv\{f\colon\mathcal X\to\mathbf R\colon \|f\|_{P,\infty}<\infty\}$ with $\|f\|_{P,\infty}\equiv\inf\{a\ge 0\colon P(|f(X)|>a)=0\}$. For $\{\phi(\cdot,\gamma)\colon\gamma\in\Gamma\}\subset L^2(X)$, we say that $\phi(X,\cdot)$ is mean-square continuous at $\gamma_0$ if the map $\gamma\mapsto\phi(\cdot,\gamma)\in L^2(X)$ is continuous at $\gamma_0$. 

The remainder of the paper is organized as follows. Section \ref{Sec: Setup} introduces the setup and related examples. Section \ref{Sec: Main theory} develops our general framework of automatic DML under shape constraints. Section \ref{Sec: Implementation and Sims} discusses implementation details and presents simulation studies, while Section \ref{Sec: Applications} showcases empirical applications. Section \ref{Sec: Conclusion} concludes. All proofs and supporting results are relegated to the appendix and supplement.
 
\section{The Setup and Examples}\label{Sec: Setup}

As in the literature \citep{Newey1994AsymptoticVar,ChernozhukovEscancianoIchimuraNewey2022LocalRobust}, we work with a generalized method of moments (GMM) model that accommodates numerous causal and structural parameters. Let $X\in\mathcal X$ be a vector of observables with $\mathcal X$ a sample space and $\theta_0\in\Theta\subset\mathbf R^{d_\theta}$ be a parameter of interest such that
\begin{align}\label{Eqn: GMM}
E[g(X,\theta,\gamma_0)]=0
\end{align}
admits a unique solution at $\theta=\theta_0$, where $g\colon \mathcal X\times \Theta\times\Gamma^\dag\to\mathbf R^{d_g}$ is a known map with $d_g\ge d_\theta$ and $\gamma_0\in\Gamma^\dag\subset \mathbf H$ is a nuisance parameter in a vector space $\mathbf H$. The generality of our setup also stems from the flexibility in specifying $\gamma_0$ and its parameter space $\mathbf H$. First, we do not take a stand on the nature of $\gamma_0$ as it may be parametric or nonparametric, causal or structural, or a vector of nuisance parameters. Second, $\mathbf H$ may be equipped with a strong or weak topology depending on the application. There are settings where it is more appropriate to work with a (weak) seminorm than a (strong) norm,  and vice versa---see Remark \ref{Rem: weak norm and partial ID} for details.

Our general theory extends the existing literature \citep{ChernozhukovEscancianoIchimuraNewey2022LocalRobust,ChernozhukovNeweySingh2022Automatic,ChernozhukovNeweyQuintasSyrgkanis2024RieszReg} by accommodating two important additional features of the setup in \eqref{Eqn: GMM}. First, the first step $\gamma_0$ entering the nonlinear moment conditions \eqref{Eqn: GMM} may be defined by a model with endogeneity; e.g., $\gamma_0$ may be an NPIV regression. Such endogeneity poses substantial challenges for identifying and estimating the debiasing nuisance $\alpha_0$, because the relevant parameter space for $\alpha_0$ is then unknown. Second, $\gamma_0$ may be subject to nonlinear shape constraints, which is encoded by restricting $\gamma_0\in\Gamma\subset\Gamma^\dag$. For example, if $\gamma_0$ is monotone, then $\Gamma$ may be taken to be the set of monotone functions in $\Gamma^\dag$. These shape constraints, in turn, induce corresponding restrictions on $\alpha_0$, as our theory below makes clear. Although a few studies allow for endogenous first steps in special settings (see the introduction), they focus on estimation. By contrast, we study not only the estimation of $\alpha_0$ but also its identification and characterization, and we do so in a more general framework in the sense that $\theta_0$ is defined implicitly by a general GMM model \eqref{Eqn: GMM} and $\gamma_0$ is a generic first step subject to possibly nonlinear shape constraints. To our knowledge, the incorporation of general nonlinear shape constraints appears to be new to the DML literature. We also note that our identification result below for $\alpha_0$ appears novel even for exogenous first steps. 

We follow the blueprint for automatic DML laid out in \citet{ChernozhukovEscancianoIchimuraNewey2022LocalRobust}, which is based on the notion of a first step influence function. Intuitively, this function captures the effect of first step estimation errors on the moment condition in \eqref{Eqn: GMM}. Since it plays a fundamental role in our analysis, we next provide a brief review.

\subsection{The First Step Influence Function}\label{Sec: FSIF}

First step estimation errors may be viewed as finite sample counterparts of local perturbations in $\gamma_0$ at the population level. In order to describe these perturbations, we assume that $X$ is distributed according to $P\in\mathbf P$ where $\mathbf P$ is a collection of distributions that possibly generate $X$ while identifying $\theta_0$ via the model \eqref{Eqn: GMM}. Let $P_t=(1-t)P+tQ$ for a distribution $Q$ such that $P_t\in\mathbf P$ and the associated first step, denoted $\gamma (P_t)$, belongs to $\Gamma^\dag$ for each $t\in[0,1]$. Then the first step influence function is a function $\phi\colon\mathcal X\times\Theta\times\Gamma^\dag\times\mathbf A\to\mathbf R^{d_g}$ with $\mathbf A$ a suitable space that satisfies, for some $\alpha_0\in\mathbf A$ and all $Q$ such that each $P_t\in\mathbf P$ and $\gamma (P_t)\in\Gamma^\dag$, the restrictions (i) all entries of $\phi(\cdot,\theta_0,\gamma_0,\alpha_0)$ are in $L^2(X)$ and (ii)
\begin{align}\label{Eqn: FSIF}
\frac{\mathrm d}{\mathrm dt}E[g(X,\theta_0,\gamma(P_t))]\Big|_{t=0+}= \int\phi(x,\theta_0,\gamma_0,\alpha_0)\,Q(\mathrm dx)~,
\end{align}
where $|_{t=0+}$ indicates the right derivative at $0$. Note that \eqref{Eqn: FSIF} implies $E[\phi(X,\theta_0,\gamma_0,\alpha_0)]=0$. The presence of an additional nuisance $\alpha_0$ is a common feature in DML. 

The parameter $\alpha_0$ plays a key role in DML because it helps orthogonalize the influences of the nuisance parameters through the first step influence function, without compromising the identification of $\theta_0$. Specifically, the modified moment condition 
\begin{align}\label{Eqn: moment function augmented}
E[g(X,\theta,\gamma_0)+\phi(X,\theta,\gamma_0,\alpha_0) ]=0
\end{align}
continues to uniquely identify $\theta_0$ and is insensitive to small changes in both $\gamma_0$ and $\alpha_0$. In particular, \citet{ChernozhukovEscancianoIchimuraNewey2022LocalRobust} establish under regularity conditions the following local orthogonality condition with respect to $\gamma_0$:
\begin{align}\label{Eqn: Local Orthogonal wtr gamma}
\frac{\mathrm d}{\mathrm dt}E[g(X,\theta_0,\gamma_0+t\delta)+\phi(X,\theta_0,\gamma_0+t\delta,\alpha_0)]\Big|_{t=0+}=0
\end{align}
with $\delta$ ranging over $\mathbf H$ such that $\gamma_0+t\delta\in\Gamma^\dag$. A stronger orthogonality condition holds for $\alpha_0$ but it is \eqref{Eqn: Local Orthogonal wtr gamma} that matters for our discussions below. 

We do not aim to derive first step influence functions but instead take them as given by leveraging existing work; see, e.g., \citet{ChernozhukovEscancianoIchimuraNewey2022LocalRobust}, \citet{IchimuraNewey2022IF}, and references therein. We stress that, in line with \citet{ChernozhukovEscancianoIchimuraNewey2022LocalRobust}, we view first step influence functions as means of correcting first step biases rather than as a route to deriving efficient influence functions. Indeed, the asymptotic theory in Section \ref{Sec: AN} merely relies on the orthogonality---see the discussions of Assumption \ref{Ass: FSIF}. A common structure of first step influence functions is that they are linear in $\alpha_0$, which we illustrate through examples below.

\subsection{Examples}\label{Sec: Examples}

To fix ideas, we introduce examples that play predominant roles in empirical work. Importantly, the first step influence functions in these settings are all linear in $\alpha$.  While we focus on these examples for conciseness, we note that such linearity also appears in related settings such as moment conditions with density as first steps \citep{Newey1994AsymptoticVar}, dynamic discrete choice models \citep{ChernozhukovEscancianoIchimuraNewey2022LocalRobust}, inference on support functions in a class of partially identified models \citep{Semenova2023SetDML}, and models with generated regressors \citep{EscancianoPerez2025DML}. % endogenous orthgonality conditions under misspecification \citep{IchimuraNewey2022IF}, 

The first example concerns a parameter $\theta_0$ that depends on a first step $\gamma_0$ defined by exogenous orthogonal conditions. % \citep{ChernozhukovNeweySingh2022Automatic,ChernozhukovNeweyQuintasSyrgkanis2024RieszReg}.

\begin{ex}(Average effects)\label{Ex: average linear effects}
Let $Y\in\mathbf R$ be an outcome variable and $Z\in \mathbf R^{d_z}$ a vector of covariates. The first step $\gamma_0$ is defined by:  for $X= (Y,Z)$,
\begin{align}\label{Eqn: average linear effects gamma}
E[\rho(X,\gamma_0) h(Z)]=0 \text{ for all } h\in \Gamma^\dag\subset\mathbf H= L^2(Z)~,
\end{align}
where $\rho(X,\gamma_0)$ is a generalized residual such as $Y-\gamma_0(Z)$ for the mean regression and $u-1\{Y< \gamma_0(Z)\}$ for the quantile regression with $u\in(0,1)$, and $\Gamma^\dag$ is a closed subspace of $L^2(Z)$. The parameter of interest is: for some known $m\colon\mathcal X\times \Gamma^\dag\to\mathbf R$,
\begin{align}\label{Eqn: average linear effects}
\theta_0=E[m(X,\gamma_0)]~.
\end{align}
Special cases of \eqref{Eqn: average linear effects} include various treatment effects \citep{Imbens2004TreatReview}, policy effects \citep{ChernozhukovNeweySingh2022Automatic}, and marginal effects \citep{Stoker1986Scaled}. For $g(X,\theta,\gamma)=m(X,\gamma)-\theta$, the first step influence function is given by:
\begin{align}\label{Eqn: average linear effects, FSIF}
\phi(X,\theta,\gamma,\alpha)= \alpha(Z) \rho(X,\gamma)~,
\end{align}
for all $\gamma\in\Gamma^\dag$ and $\alpha\in\mathbf A\equiv\Gamma^\dag$ \citep{IchimuraNewey2022IF}. \qed
\end{ex}

Our second example is a special case of Example \ref{Ex: average linear effects}, which we single out due to its importance in the study of welfare analysis \citep{HausmanNewey2017Welfare}.

% \citet{FanHsuLieliZhang2022CATE}: Estimation of CATE with high dimensional controls
% \citet{XuOtsu2022Isotonic} and \cite{Xu2022Isotonic}:
% Empirical papers using MTR: Empirical papers: \citet{Gonzalez2005Language}, \citet{KreiderHill2009Uninsured}, \citet{AngristBettingerKremer2006Vouchers}, \citet{Haan2011Schooling}, \citet{ChenFloresFlores2018JCT}, \citet{JunLee2023Persuasion}, \citet{KongDubeDaljord2024Demand}
% Empirical papers using MIV: \citet{BlundellGoslingIchimuraMeghir2007Wages}, \citet{KreiderPepper2007Disability}. See also \citet{MourifieHenryMeango2020Roy}.

\begin{ex}(Average welfare)\label{Ex: welfare}
Empirical welfare analysis of consumer demand often relies on bounds to account for individual heterogeneity \citep{HausmanNewey2016Individual}. Let $Y$ be the expenditure share of a continuous good, $T$ its price, $S$ income, and $Q$ a vector of covariates including prices of other goods. Thus, in the notation of Example \ref{Ex: average linear effects}, $Z=(T,S,Q)$, $X=(Y,Z)$, and $\gamma_0(Z)=E[Y|Z]$. If $b$ is a known lower (upper) bound on the income effect, then the average equivalent variation as $T$ varies from $t_0$ to $t_1$ may be bounded above (below) by \eqref{Eqn: average linear effects} where
\begin{align}
m(x,\gamma) = \varpi(s,q)\int_{t_0}^{t_1} \frac{s}{u}\gamma(u,s,q) \exp\{-b(u-t_0)\}\,\mathrm du
\end{align}
for $x=(y,t,s,q)$ and $\varpi(s,q)$ a known weight function. \qed
\end{ex}

Our third example generalizes Example \ref{Ex: average linear effects} to settings where the first step is defined by endogenous orthogonality conditions.

\begin{ex}[Endogenous first steps]\label{Ex: NPIV}
Let $Y\in\mathbf R$ be an outcome, $Z\in \mathbf R^{d_z}$ a vector of covariates, and $W\in\mathbf R^{d_w}$ a vector of instruments, and set $X=(Y,Z,W)$. For a closed subspace $\Gamma^\dag\subset \mathbf H=L^2(Z)$  and  $\mathbf A\subset L^2(W)$, consider the parameter in \eqref{Eqn: average linear effects} with $\gamma_0\in\Gamma^\dag$ characterized by: for $\rho(X,\gamma_0)$ a generalized residual and all $b\in \mathbf A$,
\begin{align}\label{Eqn: NPIV}
E[\rho(X,\gamma_0) b(W)]=0~.
\end{align}
Special cases of \eqref{Eqn: NPIV} include nonparametric instrumental variable models (NPIV) and other conditional moment restriction models \citep{NeweyPowell2003NPIV,ChernozhukovHansen2005IVQR,ChenPouzo2012EstimationNonsmooth}. The first step influence function is
\begin{align}\label{Eqn: NPIVinfluence}
\phi(X,\theta,\gamma,\alpha)=\alpha(W) \rho(X,\gamma)~,
\end{align}
for $\gamma\in\Gamma^\dag$ and $\alpha\in\mathbf A$ \citep{IchimuraNewey2022IF}. Here, we embed $\gamma_0$ into the space $\mathbf H=L^2(Z)$ which is defined by the norm $\|\cdot\|_{P_Z,2}$ with $P_Z$ the distribution of $Z$. In certain ill-posed problems, however, the convergence rate for estimating $\gamma_0$ may be too slow under $\|\cdot\|_{P_Z,2}$. Nonetheless, it is possible to derive a fast enough convergence rate under a weaker seminorm; see Remark \ref{Rem: weak norm and partial ID} where we also emphasize that our framework allows the first step model to be partially identified. \qed
\end{ex}

\begin{rem}\label{Rem: weak norm and partial ID}
Conditional moment restrictions such as NPIV models often feature a first step $\gamma_0$ that depends on endogenous variables. In these settings, it can be difficult to obtain sufficiently fast convergence rates under a strong norm such as $\|\cdot\|_{P_Z,2}$. This may occur if the model is severely ill-posed \citep{ChenPouzo2009EfficientNonsmooth,ChenPouzo2015SieveWald}. Nonetheless, fast rates may still be achievable under a weaker seminorm when the generalized residual is smooth \citep{AiChen2003Efficient}. A second, conceptually distinct complication is that identification of $\gamma_0$ in these models typically relies on assumptions that are restrictive and may not be testable \citep{Santos2012NPIV,CanaySantosShaikh2013Testability}. Consistent with \citet{Santos2011Recover}, \citet{SeveriniTripathi2012Efficiency}, and \citet{Chen2021RobustPLM}, our framework permits debiased machine learning of an identified linear functional $\theta_0$ of $\gamma_0$ without requiring the point identification of $\gamma_0$. In doing so, we complement \citet{BennettKallusMaoNeweySyrgkanisUehara2025StrongID} by accommodating general shape constraints on $\gamma_0$ and by studying identification, characterization, and estimation of $\alpha_0$ without imposing their strong identification condition---see Appendix \ref{Sec: NPIV extensions} for more detailed discussions and comparisons. \qed 
\end{rem}

\section{The General Framework}\label{Sec: Main theory}

We develop our general theory in three steps. In Section \ref{Sec: Riesz ID}, we establish the identification of $\alpha_0$ when the nuisance parameter $\gamma_0$ is subject to possibly nonlinear shape constraints in addition to those already incorporated in $\Gamma^\dag$, and then characterize $\alpha_0$ as the unique solution to a quadratic optimization problem. In Section \ref{Sec: Riesz regression}, we study automatic estimation of $\alpha_0$ based on the quadratic characterization. In Section \ref{Sec: AN}, we obtain the asymptotic distribution of the debiased GMM estimator $\hat\theta_n$. 

\subsection{Identification and Characterization}\label{Sec: Riesz ID}

The parameter $\alpha_0$ originates from deriving the first step influence function in \eqref{Eqn: FSIF}. As mentioned in the introduction, $\alpha_0$ may not admit a closed-form expression, especially under shape constraints on $\gamma_0$. When it does have a closed-form expression, it may involve inverting unknown objects (e.g., densities) that lead to estimators with poor finite sample performance. These concerns motivate automatic estimators, which tend to perform better, as shown in, e.g., \citet{ChernozhukovNeweyQuintasSyrgkanis2022RieszNet} and \citet{AhrensChernozhukovHansenKozburSchafferWiemann2026DML}. Thus, it is important to estimate $\alpha_0$ in an automatic way without requiring knowledge of its analytical form. A general strategy for the automatic estimation of $\alpha_0$ is proposed by \citet{ChernozhukovEscancianoIchimuraNewey2022LocalRobust} who developed a LASSO minimum distance estimator of $\alpha_0$ based on sample analogs of the moment conditions in \eqref{Eqn: Local Orthogonal wtr gamma}. This, however, raises an important question: Does the system \eqref{Eqn: Local Orthogonal wtr gamma} admit a unique solution?

To address this question, we must choose the set of feasible perturbations that reflect the shape restriction(s) incorporated in $\Gamma$. Our theory in Section \ref{Sec: AN} suggests requiring \eqref{Eqn: Local Orthogonal wtr gamma} to hold for all $\delta\in \Gamma-\gamma_0\equiv\{\gamma-\gamma_0\colon \gamma\in\Gamma\}$---see discussions of Assumption \ref{Ass: FSIF}(ii) in Section \ref{Sec: AN}. To this end, we assume throughout that $\Gamma^\dag$ is linear or at least convex so that $\gamma_0+t\delta\in\Gamma^\dag$ for all $t\in[0,1]$ and $\delta\in \Gamma-\gamma_0$, though no such requirements are imposed on $\Gamma$. Since $\alpha_0$ is defined for each moment equation, below we focus on a single moment equation by pretending that $g$ is real-valued (i.e., $d_g=1$). To formalize our discussions, we introduce the first assumption.

\begin{ass}\label{Ass: Riesz}
(i) $E[g(X,\theta_0,\cdot)]$ is Gateaux differentiable at $\gamma_0\in\Gamma\subset\mathbf H$ tangentially to $\Gamma-\gamma_0$ with $\Gamma\subset\mathbf H$ for a vector space $\mathbf H$; (ii) there exist a linear $\Pi_0\colon \mathbf H_0\to\mathbf A$ and a continuous linear $\Psi_0\colon \mathbf A_0\to\mathbf R$ such that $\nabla_\gamma E[g(X,\theta_0,\gamma_0)][\delta]=\Psi_0(\Pi_0(\delta))$ for all $\delta\in \Gamma-\gamma_0$ where $\mathbf H_0\equiv \mathrm{lin}(\Gamma-\gamma_0)$ and $\mathbf A_0$ is the closure of $\Pi_0(\mathbf H_0)$ in a Hilbert space $\mathbf A$ with norm $\|\cdot\|_{\mathbf A}$; (iii) $E[\phi(X,\theta_0,\cdot,\alpha)]$ is Gateaux differentiable at $\gamma_0$ tangentially to $\Gamma-\gamma_0$ for all $\alpha\in\mathbf A_0$; (iv) there exists a continuous bilinear $\Upsilon_0\colon \mathbf A_0\times\mathbf A_0\to\mathbf R$ such that  $\nabla_\gamma E[\phi(X,\theta_0,\gamma_0,\alpha)][\delta]=\Upsilon_0(\alpha,\Pi_0(\delta))$ for all $\delta\in \Gamma-\gamma_0$ and all $\alpha\in\mathbf A_0$.
\end{ass}

Assumptions \ref{Ass: Riesz}(i)(iii) simply impose minimal smoothness on the original moment function and the first step influence function so that \eqref{Eqn: Local Orthogonal wtr gamma} is well-defined, though we note that Fr\'{e}chet differentiability is required when establishing the asymptotic distribution of the debiased GMM estimator $\hat\theta_n$.  Assumptions \ref{Ass: Riesz}(ii)(iv) demand that the resulting derivatives possess certain composition structures. In particular, $\Pi_0$ serves as the link when $\gamma_0$ and $\alpha_0$ are functions of different variables. This is motivated by a necessary condition for $\sqrt n$-estimability of $\theta_0$---see \citet{SeveriniTripathi2012Efficiency} and also \citet{IchimuraNewey2022IF}. In settings such as Examples \ref{Ex: average linear effects} and \ref{Ex: welfare} where $\gamma_0$ and $\alpha_0$ are functions of the same variable, one may simply set $\Pi_0$ to be the identity map. The linearity of $\Upsilon_0$ with respect to $\alpha$ in Assumption \ref{Ass: Riesz}(iv) is inherited from the same property of first step influence functions in Section \ref{Sec: Examples}.

Our first main result below formally establishes under Assumptions \ref{Ass: Riesz} that \eqref{Eqn: Local Orthogonal wtr gamma} admits a unique solution $\alpha_0$ under an additional condition on $\Upsilon_0$. Following \citet{ChernozhukovNeweySingh2022Automatic}, we refer to this unique solution as the Riesz representer.

\begin{thm}\label{Thm: Riesz ID}
If Assumption \ref{Ass: Riesz} holds, then there is a unique $\alpha_0\in\mathbf A_0$ satisfying \eqref{Eqn: Local Orthogonal wtr gamma} for all $\delta\in \Gamma-\gamma_0$ provided $\Upsilon_0$ is coercive, i.e., for some $\varkappa>0$ and all $\alpha\in\mathbf A_0$,
\begin{align}\label{Eqn: Coercivity}
|\Upsilon_0(\alpha,\alpha)|\ge \varkappa\|\alpha\|_{\mathbf A}^2~.
\end{align}
\end{thm}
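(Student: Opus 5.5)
Under Assumption \ref{Ass: Riesz}, the local orthogonality condition \eqref{Eqn: Local Orthogonal wtr gamma} for a candidate $\alpha\in\mathbf A_0$ and all $\delta\in\Gamma-\gamma_0$ becomes, by additivity of the one-sided derivatives and parts (i)--(iv),
\[
\Psi_0(\Pi_0(\delta))+\Upsilon_0(\alpha,\Pi_0(\delta))=0\quad\text{for all }\delta\in\Gamma-\gamma_0 .
\]
The plan is to extend this identity from the (possibly nonconvex) set $\Gamma-\gamma_0$ to the whole space $\mathbf A_0$. We then obtain a Lax--Milgram type problem: find $\alpha\in\mathbf A_0$ with $\Upsilon_0(\alpha,a)=-\Psi_0(a)$ for all $a\in\mathbf A_0$. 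The Lax--Milgram theorem, in its version for continuous coercive bilinear forms that need not be symmetric, yields existence and uniqueness.

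The reduction proceeds in two steps. First, both sides of the displayed identity are linear in $\delta$: $\Psi_0\circ\Pi_0$ and $\Upsilon_0(\alpha,\Pi_0(\cdot))$ are compositions of linear maps on $\mathbf H_0=\mathrm{lin}(\Gamma-\gamma_0)$. So the identity on $\Gamma-\gamma_0$ holds if and only if it holds on $\mathbf H_0$, since every element of $\mathbf H_0$ is a finite linear combination of elements of $\Gamma-\gamma_0$. Second, as $\delta$ ranges over $\mathbf H_0$, $a=\Pi_0(\delta)$ ranges over $\Pi_0(\mathbf H_0)$, which is dense in $\mathbf A_0$ by definition. Continuity of $\Psi_0$ and of $a\mapsto\Upsilon_0(\alpha,a)$ on $\mathbf A_0$ then extends the identity to all of $\mathbf A_0$. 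Conversely, the identity on $\mathbf A_0$ restricts back to the one on $\Gamma-\gamma_0$. Hence the solution set of \eqref{Eqn: Local Orthogonal wtr gamma} in $\mathbf A_0$ coincides exactly with the solution set of the variational equation on the closed subspace $\mathbf A_0$ of the Hilbert space $\mathbf A$, which is itself Hilbert.

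For existence and uniqueness, I would invoke Lax--Milgram directly. The main point needing care is that \eqref{Eqn: Coercivity} is stated with an absolute value, $|\Upsilon_0(\alpha,\alpha)|\ge\varkappa\|\alpha\|_{\mathbf A}^2$, rather than $\Upsilon_0(\alpha,\alpha)\ge\varkappa\|\alpha\|^2_{\mathbf A}$. I would handle this by showing that $\alpha\mapsto\Upsilon_0(\alpha,\alpha)$ has constant sign on $\mathbf A_0\setminus\{0\}$. Suppose $q(\alpha)\equiv\Upsilon_0(\alpha,\alpha)$ took both signs, say $q(\alpha_1)>0>q(\alpha_2)$. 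Then $\alpha_1,\alpha_2$ are linearly independent, so the segment $\alpha_1+t(\alpha_2-\alpha_1)$, $t\in[0,1]$, avoids $0$. The function $t\mapsto q(\alpha_1+t(\alpha_2-\alpha_1))$ is continuous (indeed polynomial), so it vanishes at some $t$, contradicting \eqref{Eqn: Coercivity}. Replacing $\Upsilon_0$ by $-\Upsilon_0$ and $\Psi_0$ by $-\Psi_0$ if necessary, the form is coercive in the usual sense.

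With the sign fixed, Riesz representation gives a bounded operator $T$ on $\mathbf A_0$ with $\Upsilon_0(\alpha,a)=\langle T\alpha,a\rangle_{\mathbf A}$, and an element $\psi\in\mathbf A_0$ with $\Psi_0(a)=\langle\psi,a\rangle_{\mathbf A}$. Coercivity gives $\|T\alpha\|_{\mathbf A}\ge\varkappa\|\alpha\|_{\mathbf A}$, so $T$ is injective with closed range. If $a\perp\mathrm{range}(T)$, then $0=\langle Ta,a\rangle_{\mathbf A}\ge\varkappa\|a\|_{\mathbf A}^2$, so the range is dense and hence all of $\mathbf A_0$. Thus $T\alpha_0=-\psi$ has a unique solution $\alpha_0\in\mathbf A_0$, and by the reduction this is the unique element of $\mathbf A_0$ satisfying \eqref{Eqn: Local Orthogonal wtr gamma} for all $\delta\in\Gamma-\gamma_0$.
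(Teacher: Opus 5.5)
Your proof is correct and follows the same route as the paper. You rewrite \eqref{Eqn: Local Orthogonal wtr gamma} on $\Gamma-\gamma_0$ as $\Psi_0(b)+\Upsilon_0(\alpha,b)=0$ on $\Pi_0(\mathbf H_0)$ by linearity, extend it to $\mathbf A_0$ by density and continuity, and apply Lax--Milgram on the Hilbert space $\mathbf A_0$. The only additions are two details the paper leaves to its citation of Ern and Guermond: your connectedness argument that the absolute-value coercivity in \eqref{Eqn: Coercivity} forces $\Upsilon_0(\alpha,\alpha)$ to have constant sign (so the signed version applies after possibly replacing $(\Upsilon_0,\Psi_0)$ by $(-\Upsilon_0,-\Psi_0)$), and your self-contained proof of Lax--Milgram via the Riesz representation.
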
 % Remark 8 in \citet[p.140]{Brezis2011Sovolev} and \citet[pp.13-4]{ErnGuermond2021FiniteII} provide good intuitions for the Lax-Milgram theorem.

\iffalse
\begin{tikzpicture}[
  node distance=2.5cm and 3cm,
  every node/.style={anchor=center},
  every path/.style={>=Stealth},
  roundnode/.style={draw, minimum height=1.2em, inner sep=3pt}
]

% Top row
\node[roundnode] (n1) {$\gamma_0\in\Gamma\subset\mathbf H$};
\node[roundnode, right=of n1] (n2) {$\Gamma-\gamma_0$};
\node[roundnode, right=of n2] (n3) {$\mathrm{lin}(\Gamma-\gamma_0)$};

% Second row
\node[roundnode, below=of n1] (n6) {$\phi(\cdot,\theta_0,\gamma_0,\alpha_0)$};
\node[roundnode, below=of n2] (n5) {$\alpha_0\in\mathbf A_0$};
\node[roundnode, below=of n3] (n4) {$\mathbf A_0\subset\mathbf A$};

% Bottom node below n5
\node[roundnode, below=of n5] (n7) {$J_0$};

% Arrows top row
\draw[->] (n1) -- (n2);
\draw[->] (n2) -- (n3);

% Arrows downward
\draw[->] (n3) -- node[right=2pt] {$\Pi_0$} (n4);
\draw[->] (n6) -- node[above] {$\Psi_0$} node[below] {$\Upsilon_0$} (n5);
\draw[->] (n5) -- (n4);
\draw[->] (n5) -- (n7);

\end{tikzpicture}
\fi

% Theorem \ref{Thm: Riesz ID} complements the automatic DML literature by laying the identification foundation for the Riesz representer at a general level. 

Theorem \ref{Thm: Riesz ID} holds regardless of whether $\Gamma$ is linear or not. We emphasize that the constraint $\gamma_0\in\Gamma$ induces a corresponding constraint on $\alpha_0$ as reflected by $\alpha_0\in\overline{\Pi_0(\mathbf H_0)}$. However, these two constraints are different in general. For example, if $\gamma_0\in\Gamma$ means that $\gamma_0$ is monotone and $\Pi_0$ is the identity map, then the constraint $\alpha_0\in\overline{\Pi_0(\mathbf H_0)}$ means that $\alpha_0$ belongs to $\Gamma-\Gamma$, which differs from $\Gamma$ unless the latter is linear---see Section \ref{Sec: Riesz regression} for more detailed discussions. To appreciate the coercivity condition, suppose that $\mathbf A_0$ is finite dimensional so that we may identify $\Upsilon_0$ with some matrix $\Phi_0$. Then condition \eqref{Eqn: Coercivity} implies but is not implied by the invertibility of $\Phi_0$. In a certain sense, coercivity is not only sufficient but also necessary---see Remark \ref{Rem: coercivity}. Theorem \ref{Thm: Riesz ID} is a consequence of the Lax-Milgram theorem \citep{LaxMilgram1954Parabolic}, which is a foundational tool in partial differential equations. On a technical note, while the Lax-Milgram theorem may be viewed as a generalization of the Riesz representation theorem, the latter is not directly applicable by defining an inner product through $\Upsilon_0$, unless $\Upsilon_0$ is symmetric and has a definite sign (i.e., positive or negative).\footnote{Recall that $\Upsilon_0\colon \mathbf A_0\times\mathbf A_0\to\mathbf R$ is symmetric if $\Upsilon_0(a,b)=\Upsilon_0(b,a)$ for all $a,b\in\mathbf A_0$, positive if $\Upsilon_0(\alpha,\alpha)\ge 0$ for all $\alpha\in\mathbf A_0$, and negative if $\Upsilon_0(\alpha,\alpha)\le 0$ for all $\alpha\in\mathbf A_0$.} These additional properties, however, allow us to obtain an extremal characterization of $\alpha_0$, which we turn to next.

% We note that it is possible to establish the identification of $\alpha_0$ without the linearity of $a\mapsto \Upsilon_0(a,b)$ by employing a nonlinear version of the Lax-Milgram theorem. However, this linearity is essential to obtain our next result. 

Theorem \ref{Thm: Riesz ID} suggests we may estimate $\alpha_0$ directly based on  \eqref{Eqn: Local Orthogonal wtr gamma}. However, it is crucial to obtain an extremal characterization that facilitates estimation. Our next theorem relates $\alpha_0$ to a quadratic optimization problem. 

\begin{thm}\label{Thm: Characterization} 
Let Assumption \ref{Ass: Riesz} hold. If $\Upsilon_0$ is coercive, symmetric, and positive (resp.\ negative), then $\alpha_0$ is the unique solution in $\mathbf A_0$ satisfying \eqref{Eqn: Local Orthogonal wtr gamma} for all $\delta\in \Gamma-\gamma_0$ if and only if it uniquely minimizes (resp.\ maximizes) over $\mathbf A_0$ the functional
\begin{align}\label{Eqn: characterization}
\mathfrak C_0(\alpha)\equiv\frac{1}{2}\Upsilon_0(\alpha,\alpha)+\Psi_0(\alpha)~.
\end{align}
\end{thm}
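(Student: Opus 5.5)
The plan is to first rewrite condition \eqref{Eqn: Local Orthogonal wtr gamma} as a variational equation on $\mathbf A_0$, and then apply the standard first-order characterization of minimizers of a convex quadratic functional on a Hilbert space. By Assumption \ref{Ass: Riesz}(i)--(iv), for $\alpha\in\mathbf A_0$ the left-hand side of \eqref{Eqn: Local Orthogonal wtr gamma} at direction $\delta\in\Gamma-\gamma_0$ equals $\Psi_0(\Pi_0(\delta))+\Upsilon_0(\alpha,\Pi_0(\delta))$. Both maps are linear in their second argument, and $\Pi_0$ is linear on $\mathbf H_0=\mathrm{lin}(\Gamma-\gamma_0)$. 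Hence vanishing for all $\delta\in\Gamma-\gamma_0$ is equivalent to vanishing for all $\delta\in\mathbf H_0$. This in turn is equivalent to
\begin{align*}
\Upsilon_0(\alpha,a)+\Psi_0(a)=0\quad\text{for all } a\in\Pi_0(\mathbf H_0)~.
\end{align*}
By continuity of $\Upsilon_0$ and $\Psi_0$ and density of $\Pi_0(\mathbf H_0)$ in $\mathbf A_0$, this extends to all $a\in\mathbf A_0$. The same reduction underlies Theorem \ref{Thm: Riesz ID}, which I take as given: under coercivity, exactly one $\alpha_0\in\mathbf A_0$ solves this equation.

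Now treat the positive case; the negative case follows by replacing $(\Upsilon_0,\Psi_0)$ with $(-\Upsilon_0,-\Psi_0)$, which turns minimization into maximization. Symmetry and bilinearity give, for $\alpha,a\in\mathbf A_0$ and $t\in\mathbf R$,
\begin{align*}
\mathfrak C_0(\alpha+ta)=\mathfrak C_0(\alpha)+t\{\Upsilon_0(\alpha,a)+\Psi_0(a)\}+\tfrac{t^2}{2}\Upsilon_0(a,a)~.
\end{align*}

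For the direction ``solution $\Rightarrow$ unique minimizer'': if $\alpha_0$ solves the variational equation, take $t=1$ and $a=\alpha-\alpha_0$. The linear term vanishes, so $\mathfrak C_0(\alpha)-\mathfrak C_0(\alpha_0)=\frac12\Upsilon_0(a,a)$. Positivity and coercivity give $\frac12\Upsilon_0(a,a)\ge\frac{\varkappa}{2}\|a\|_{\mathbf A}^2>0$ whenever $\alpha\ne\alpha_0$. Note that coercivity with positivity means $\Upsilon_0(a,a)\ge\varkappa\|a\|^2_{\mathbf A}$ without absolute values. So $\alpha_0$ is the unique minimizer.

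For the converse: if $\alpha^*$ minimizes $\mathfrak C_0$ over $\mathbf A_0$, then for each $a\in\mathbf A_0$ the scalar quadratic $t\mapsto\mathfrak C_0(\alpha^*+ta)$ is minimized at $t=0$. Its derivative there, $\Upsilon_0(\alpha^*,a)+\Psi_0(a)$, must therefore vanish, so $\alpha^*$ solves the variational equation. By the uniqueness in Theorem \ref{Thm: Riesz ID}, $\alpha^*=\alpha_0$, and it satisfies \eqref{Eqn: Local Orthogonal wtr gamma} for all $\delta\in\Gamma-\gamma_0$. Existence of a minimizer is not needed separately, since the first direction already produces one from Theorem \ref{Thm: Riesz ID}.

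The main obstacle is the first reduction. One needs that linearity of the derivatives, established only along directions in $\Gamma-\gamma_0$ (possibly a nonlinear set), transfers to the span $\mathbf H_0$. Here the representation through the linear composites $\Psi_0\circ\Pi_0$ and $\Upsilon_0(\alpha,\Pi_0(\cdot))$ on all of $\mathbf H_0$ is what is used, and then density is used to pass to the closure $\mathbf A_0$. Since Theorem \ref{Thm: Riesz ID} is taken as given, I would simply cite that this equivalence was part of its argument.
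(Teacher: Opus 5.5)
Your proposal is correct and follows the paper's proof. It uses the same reduction of \eqref{Eqn: Local Orthogonal wtr gamma} to $\Psi_0(b)+\Upsilon_0(\alpha,b)=0$ for all $b\in\mathbf A_0$, takes existence and uniqueness from Theorem \ref{Thm: Riesz ID} (Lax--Milgram), and handles the negative case by passing to $-\Upsilon_0$. The only difference is that you prove the equivalence between the variational equation and unique minimization of $\mathfrak C_0$ directly, via $\mathfrak C_0(\alpha+ta)=\mathfrak C_0(\alpha)+t\{\Upsilon_0(\alpha,a)+\Psi_0(a)\}+\tfrac{t^2}{2}\Upsilon_0(a,a)$, whereas the paper cites Theorem 22.A in Zeidler (1990) for that step, so your version is self-contained.
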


Theorem \ref{Thm: Characterization} states that the Riesz representer is identified by \eqref{Eqn: Local Orthogonal wtr gamma} as $\delta$ ranges over $\Gamma-\gamma_0$ precisely when it uniquely optimizes over $\mathbf A_0$ the quadratic functional $\mathfrak C_0$. Hence, automatic estimation of $\alpha_0$ based on optimizing $\mathfrak C_0$ over $\mathbf A_0$ is possible because this problem only depends on the moment function (through $\Upsilon_0$, $\Psi_0$ and $\Pi_0$). \citet{ChernozhukovNeweyQuintasSyrgkanis2024RieszReg} and \citet{LaanBibautKallusLuedtke2026AutoDML} obtained essentially the same characterization for $\theta_0$ of the form $E[m(X,\gamma_0)]$ with exogenous $\gamma_0$ under conditions that rule out common shape constraints. Finally, symmetry and positiveness/negativeness of $\Upsilon_0$ play different roles in Theorem \ref{Thm: Characterization}---see Remark \ref{Rem: symmetry}.\footnote{We thank Whitney K.\ Newey for suggesting that we clarify the role played by symmetry.} 

% In related fields such as mathematical physics, $\mathfrak C_0$ is also known as an energy functional. 

\begin{rem}\label{Rem: coercivity}
Assumption \ref{Ass: Riesz} implies that the system of equations \eqref{Eqn: Local Orthogonal wtr gamma} holding for all $\delta\in\Gamma-\gamma_0$ is equivalent to: for all $b\in\mathbf A_0$,
\begin{align}\label{Eqn: Local Orthogonal wtr gamma2}
\Psi_0 (b)+\Upsilon_0(\alpha_0,b)=0~.
\end{align}
Since $\Psi_0$ is unknown, it may be desirable to strengthen the problem of finding $\alpha_0$ that satisfies \eqref{Eqn: Local Orthogonal wtr gamma2} for a single $\Psi_0$ to one of finding $\alpha_0$ that satisfies $\Psi (b)+\Upsilon_0(\alpha_0,b)=0$ for each continuous linear $\Psi$ under suitable restrictions on $\Upsilon_0$. Theorem \ref{Thm: Riesz ID2} in Appendix \ref{Sec: Supporting} shows that if $\Upsilon_0$ is symmetric, positive (or negative), continuous, and bilinear, then such an $\alpha_0$ exists for each continuous linear $\Psi$ if and only if $\Upsilon_0$ is coercive. \qed
\end{rem}

\begin{rem}\label{Rem: symmetry}
Given Assumption \ref{Ass: Riesz}, symmetry of $\Upsilon_0$ is not only sufficient but also necessary for \eqref{Eqn: Local Orthogonal wtr gamma2} to be the first order condition (with respect to $\alpha_0$) of a scalar functional defined on $\mathbf A_0$ known as a potential of $\alpha\mapsto \Psi_0(\cdot)+\Upsilon_0(\alpha,\cdot)$. Indeed, by Example 41.7 in \citet{Zeidler1990NonLinearIII}, such a potential exists if and only if $\Upsilon_0$ is symmetric, in which case the potential is precisely $\mathfrak C_0$ in \eqref{Eqn: characterization} (up to a constant). If $\mathbf A_0$ is finite dimensional so that $(a,b)\mapsto\Upsilon_0(a,b)=a^\intercal H_0b$ for some square matrix $H_0$, then symmetry of $\Upsilon_0$ reduces to symmetry of $H_0$ as the Hessian matrix of $\mathfrak C_0$ in \eqref{Eqn: characterization}. Given coercivity and symmetry, the sign of $\Upsilon_0$ determines whether $\mathfrak C_0$ is strictly convex or strictly concave and, accordingly, whether the optimization problem is minimization or maximization. \qed
\end{rem} % Proposition 25.10 in \citet{Zeidler1990IIB}. 

\subsubsection{Examples Revisited}\label{Sec: Examples revisited}

In this section, we revisit the examples in Section \ref{Sec: Examples} by verifying the assumptions in Theorem \ref{Thm: Riesz ID}. We omit Example \ref{Ex: welfare} as it is a special case of Example \ref{Ex: average linear effects}.

\begin{exctd}[\ref{Ex: average linear effects}]
Consider first the case where $E[m(X,\gamma)]$ is continuous and linear in $\gamma$ and $\gamma_0$ is defined by \eqref{Eqn: average linear effects gamma} with $\rho(X,\gamma_0)=Y-\gamma_0(Z)$. Then, $\Pi_0$ is the identity map so that $\mathbf A_0=\overline{\mathrm{lin}}(\Gamma-\gamma_0)$, $\Psi_0(\delta)=E[m(X,\delta)]$, and $\Upsilon_0(\alpha,\delta)=-E[\alpha(Z)\delta(Z)]$. Clearly, Assumption \ref{Ass: Riesz} is satisfied and hence, there is a unique $\alpha_0\in\mathbf A_0$ that satisfies \eqref{Eqn: Local Orthogonal wtr gamma} for all $\delta\in\Gamma-\gamma_0$ and also uniquely maximizes the functional over $\mathbf A_0$
\begin{align}
\mathfrak C_0(\alpha)=-\frac{1}{2} E[\alpha(Z)^2] + E[m(X,\alpha)]~.
\end{align}
This is effectively the same formulation as given in \citet{ChernozhukovNeweyQuintasSyrgkanis2024RieszReg}. For a nonlinear  $\gamma\mapsto E[m(X,\gamma)]$ and a general (possibly nonsmooth) residual $\rho(X,\gamma_0)$, suppose that $\gamma\mapsto E[m(X,\gamma)]$ and $\gamma\mapsto E[\rho(X,\gamma)|Z]\in L^2(Z)$ are both Gateaux differentiable tangentially to $\Gamma-\gamma_0$ such that $\nabla_\gamma E[m(X,\gamma_0)](\delta)=E[\nu_m(Z)\delta(Z)]$ and $\nabla_\gamma E[\rho(X,\gamma_0)|Z](\delta)=-\nu_\rho(Z)\delta(Z)$ for all $\delta\in \Gamma-\gamma_0$, for some $\nu_m\in\mathbf A_0$, and some $\nu_\rho\in L^\infty(Z)$. If $\nu_\rho\ge 0$, then by Theorem \ref{Thm: Riesz ID} there is a unique $\alpha_0\in\mathbf A_0$ that both satisfies \eqref{Eqn: Local Orthogonal wtr gamma} for all $\delta\in\Gamma-\gamma_0$ and maximizes the functional over $\mathbf A_0$
\begin{align}\label{Eqn: quadratic exgogenous gamma}
\mathfrak C_0(\alpha)= -\frac{1}{2} E[\nu_\rho(Z)\alpha(Z)^2] + E[\nu_m(Z)\alpha(Z)]~.
\end{align}
This is the population criterion implicitly underlying the Riesz regression in \citet{ChernozhukovNeweyQuintasSyrgkanis2024RieszReg} in this more general setup.\qed 
\end{exctd}

\begin{exctd}[\ref{Ex: NPIV}]
Consider first when $E[m(X,\gamma)]$ is continuous and linear in $\gamma\in L^2(Z)$, $\rho(X,\gamma_0)=Y-\gamma_0(Z)$, and $\mathbf A=L^2(W)$. Then $\nabla_\gamma E[m(X,\gamma_0)](\delta)=E[m(X,\delta)]$ and, by the Riesz representation theorem, $E[m(X,\delta)]=E[\nu_m(Z)\delta(Z)]$ for all $\delta\in \Gamma-\gamma_0$ and some $\nu_m\in \overline{\mathrm{lin}}(\Gamma-\gamma_0)$. Suppose there exists some $\bar\nu_m\in L^2(W)$ such that $E[\bar\nu_m(W)|Z]=\nu_m(Z)$, which is necessary for $E[m(X,\gamma_0)]$ to be $\sqrt n$-estimable \citep{SeveriniTripathi2012Efficiency}. Then we have by the law of iterated expectations that
\begin{align}\label{Eqn: NPIV, Psi}
E[m(X,\delta)] = E[E[\bar\nu_m(W)|Z]\delta(Z)]= E[\bar\nu_m(W)E[\delta(Z)|W]]~,
\end{align}
suggesting defining $\Pi_0(\delta)=E[\delta(Z)|W]$ for all $\delta\in\mathbf H_0$ and $\Psi_0(b)=E[\bar\nu_m(W)b(W)]$ for all $b\in\mathbf A_0$. Similarly, we also have that: for any $\delta\in \Gamma-\gamma_0$, 
\begin{align}\label{Eqn: NPIV, Upsilon}
\nabla_\gamma E[\phi(X,\theta_0,\gamma_0,\alpha)][\delta] = -E[\alpha(W)\delta(Z)] = -E[\alpha(W)E[\delta(Z)|W]]~,
\end{align}
which suggests defining $\Upsilon_0(a,b)=-E[a(W)b(W)]$ for any $a,b\in \mathbf A_0$. By Theorem \ref{Thm: Riesz ID}, there is a unique $\alpha_0\in\mathbf A_0$ satisfying \eqref{Eqn: Local Orthogonal wtr gamma} for all $\delta\in\Gamma-\gamma_0$, and this $\alpha_0$ must also uniquely maximize the functional over $\mathbf A_0$ 
\begin{align}\label{Eqn: NPIV, energy}
\mathfrak C_0(\alpha)= -\frac{1}{2} E[\alpha(W)^2] +E[\bar\nu_m(W)\alpha(W)]~.
\end{align}

Turning to the general case, suppose that there is some $\nu_\rho\in L^\infty(X)$ such that 
\begin{align}
\nabla_\gamma E[\alpha(W)\rho(X,\gamma_0)][\delta] = -E[\nu_\rho(X) \alpha(W) \delta(Z)]
\end{align}
for all $\delta\in \Gamma-\gamma_0$, and some $\nu_m\in\overline{\mathrm{lin}}(\Gamma-\gamma_0)$ such that, for all $\delta\in \Gamma-\gamma_0$,
\begin{align}\label{Eqn: NPIV, nu0}
\nabla_\gamma E[m(X,\gamma_0)][\delta]=E[\nu_m(Z)\delta(Z)]~.
\end{align}
Following \citet{IchimuraNewey2022IF}, redefine $\Pi_0(\delta)$ as the projection of $\nu_\rho(X)\delta(Z)$ onto $\mathbf A$ and assume that there exists some $\bar\nu_m(W)\in\mathbf A$ such that $\nu_m(Z)$ equals the projection of $\nu_\rho(X)\bar\nu_m(W)$ onto $ \overline{\mathrm{lin}}(\Gamma-\gamma_0)$. Then exploiting the projection theorem and the definition of $\Pi_0$ we may obtain that, for all $\delta\in \Gamma-\gamma_0$,
\begin{align}\label{Eqn: NPIV Psi general}
\nabla_\gamma E[m(X,\gamma_0)][\delta] = E[\bar\nu_m(W)\Pi_0(\delta)]~.
\end{align}
With the new definition of $\Pi_0$ but the same definitions of $\Psi_0$ and $\Upsilon_0$, we see that the assumptions in Theorem \ref{Thm: Riesz ID} hold and hence
\begin{align}\label{Eqn: NPIV C0 general}
\mathfrak C_0(\alpha)= -\frac{1}{2} E[\alpha(W)^2] + E[\bar\nu_m(W)\alpha(W)]
\end{align}
admits a unique maximizer $\alpha_0\in\mathbf A_0$ that also uniquely satisfies \eqref{Eqn: Local Orthogonal wtr gamma} for all $\delta\in \Gamma-\gamma_0$, where as usual $\mathbf A_0$ is the closure (in $\mathbf A$) of $\{\Pi_0(\delta)\colon \delta\in\mathbf H_0\}$. \qed
\end{exctd}

\subsection{Riesz Regression}\label{Sec: Riesz regression}

In this section, we develop a general procedure for estimating the Riesz representer $\alpha_0$ based on Theorem \ref{Thm: Characterization}. As in the literature, the estimation is automatic in the sense that only the knowledge of the functional forms of the moment function and the first step influence function are required but not the analytic expression of $\alpha_0$ itself. We call the procedure a Riesz regression following \citet{ChernozhukovNeweyQuintasSyrgkanis2024RieszReg}, which is a constrained quadratic optimization problem.

As a first step, we approximate the problem by a ``parametrized'' version. Despite the notation, the linear span $\mathbf H_0\equiv\mathrm{lin}(\Gamma-\gamma_0)$ does not involve $\gamma_0$ because $\mathrm{lin}(\Gamma-\gamma_0)=\mathrm{lin}(\Gamma-\Gamma)$, which further simplifies to $\mathrm{con}(\Gamma)-\mathrm{con}(\Gamma)$ if $0\in\Gamma$, to $\Gamma-\Gamma$ if $\Gamma$ is a convex cone, and to $\Gamma$ if $\Gamma$ is linear; see Lemma \ref{Lem: linear span}. For $\Delta\Gamma_n\subset \mathrm{lin}(\Gamma-\Gamma)$ a sieve space of $\mathrm{lin}(\Gamma-\Gamma)$, we then approximate $\mathbf A_0$ by $\mathbf A_{0,n}\equiv\Pi_0(\Delta\Gamma_n)$. In Section \ref{Sec: Implementation}, we shall provide details of concrete constructions.

In order to obtain fruitful results, we next formalize common structures of the quadratic functional $\mathfrak C_0$ by imposing the following assumption. As in Section \ref{Sec: Riesz ID}, we keep pretending $d_g=1$ in what follows. 

\begin{ass}\label{Ass: Convergence rate, population}
(i) $\Upsilon_0(a,b)\equiv E[l(X,a,b,\nu_0)]$ for all $a,b\in\mathbf A$, some possibly unknown $\nu_0$ in a vector space $\mathbf B$ with seminorm $\|\cdot\|_{\mathbf B}$, and some $l\colon\mathcal X\times\mathbf A\times\mathbf A\times\mathbf B\to\mathbf R$ such that $E[l(X,a,b,\nu)]$ is continuous and trilinear in $(a,b,\nu)\in\mathbf A\times\mathbf A\times\mathbf B$ and symmetric in $(a,b)
\in \mathbf A\times\mathbf A$; (ii) $\Psi_0(a)\equiv E[h(X,a,\nu_0)]$ for all $a\in \mathbf A$  and some $h\colon\mathcal X\times\mathbf A\times\mathbf B\to\mathbf R$ such that $E[h(X,a,\nu)]$ is continuous and bilinear in $(a,\nu)\in\mathbf A\times\mathbf B$; (iii) $E[l(X,a,a,\nu_0)]\le -\varkappa\|a\|_{\mathbf A}^2$ for all $a\in\mathbf A$ and some $\varkappa>0$. %; (iv) $E[l(X,a,a,\nu_0)]\ge -\varpi\|a\|_{\mathbf A}^2$ for all $a\in\mathbf A$ and some $\varpi>0$. 
\end{ass} % If $\Upsilon_0$ and $\Psi_0$ are Every Hilbert–Schmidt operators, then this assumption follows; see, e.g., Proposition 7.10.23 in Bogachev and Smolyanov (2020).

Assumption \ref{Ass: Convergence rate, population} essentially reformulates conditions in Theorem \ref{Thm: Characterization} when the maps $\Upsilon_0$ and $\Psi_0$ take the forms of expectations, with two caveats. First, Assumption \ref{Ass: Convergence rate, population}(iii) requires $\Upsilon_0$ to be negative, but this is without loss of generality and one may instead consider positive $\Upsilon_0$. Second, the conditions on $\Upsilon_0$ and $\Psi_0$ are strengthened to hold over $\mathbf A$ instead of $\mathbf A_0$.  This is not necessary when $\Pi_0$ is known but is needed when $\Pi_0$ and hence $\mathbf A_0$ are unknown and therefore must be estimated. 

To estimate $\alpha_0$, we first consider the case when $\Pi_0$ is known (e.g., the identity map). Given Assumptions \ref{Ass: Convergence rate, population} and a sample $\{X_i\}_{i=1}^n$ of $X$, we define: 
\begin{align}
\hat{\mathfrak C}_n(\alpha) \equiv \frac{1}{n}\sum_{i=1}^{n}\{\frac{1}{2}l(X_i,\alpha,\alpha,\hat\nu_n)+h(X_i,\alpha,\hat\nu_n)\}~,
\end{align}
where $\hat\nu_n\in\mathbf B_n\subset\mathbf B$, and let $\hat\alpha_n\in\mathbf A_{0,n}$ satisfy
\begin{align}\label{Eqn: Riesz estimator}
\hat{\mathfrak C}_n(\hat\alpha_n)-\lambda_n J_n(\hat\alpha_n)= \sup_{\alpha\in\mathbf A_{0,n}}\{\hat{\mathfrak C}_n(\alpha)-\lambda_n J_n(\alpha)\}-O_p(\delta_n^2)~,  
\end{align}
where $\delta_n$ is the order of optimization error, and $J_n\colon\mathbf A_{0,n}\to\mathbf R_+$ is a penalty function with the amount of penalization dictated by the tuning parameter $\lambda_n\ge 0$. One may set $\lambda_n=0$, resulting in an estimator without penalization. We also note that, in certain settings such as NPIV models, the linear term $\Psi_0$ in $\mathfrak C_0$ may be automatically estimated without needing to estimate $\nu_0$---see Remark \ref{Rem: linear term}. 

Establishing statistical guarantees of $\hat\alpha_n$ requires restrictions on the sieve, the penalization, and the sample. For ease of notation, let $\mathbb G_nf\equiv \sum_{i=1}^{n}\{f(X_i)-E[f(X)]\}/\sqrt n$ for a generic function $f$ and $f(X,\alpha,\nu)\equiv l(X,\alpha,\alpha,\nu)/2+h(X,\alpha,\nu)$ so that $\mathbb G_nf(\alpha,\nu)=\sum_{i=1}^{n}\{f(X_i,\alpha,\nu)-E[f(X,\alpha,\nu)]\}/\sqrt n$. For each $\delta>0$ let $U_{0,n}(\delta)$ be the set
\begin{align}
\{(\alpha,\nu)\in \mathbf A_{0,n}\times \mathbf B_n\colon \frac{\delta}{2}\le \|\alpha-\alpha_{0,n}\|_{\mathbf A} + \{\lambda_n J_n(\alpha)\}^{1/2}\le\delta, \varrho_n\|\nu-\nu_0\|_{\mathbf B}\le\delta\}~,
\end{align}
where $\varrho_n\ge 1$ and $\alpha_{0,n}\in\mathbf A_{0,n}$ (to be specified). Thus, $U_{0,n}(\delta)$ may be loosely viewed as a (partially punctured) neighborhood of $(\alpha_{0,n},\nu_0)$. We now impose:

\begin{ass}\label{Ass: Convergence rate, approximation}
(i) $\sup_{\eta\in\Delta\Gamma_n}\|\eta\|_{\mathbf H}\le\varrho_n$ for some $\varrho_n\ge 1$ and a seminorm $\|\cdot\|_{\mathbf H}$ on $\mathbf H$ such that $\Pi_0: \mathbf H_0\to\mathbf A$ is continuous and linear; (ii) $\|\alpha_{0,n}-\alpha_0\|_{\mathbf A}=O(\delta_n)$ for some $\alpha_{0,n}\in\mathbf A_{0,n}\subset\mathbf A_0$ and $\delta_n\to 0$; (iii) $\lambda_n J_n(\alpha_{0,n})=O(\delta_n^2)$ for some $\lambda_n\ge 0$ and $J_n\colon\mathbf A_{0,n}\to\mathbf R_+$; (iv) $E[\sup_{(\alpha,\nu)\in U_{0,n}(\delta)}|\mathbb G_n(f(\alpha,\nu)-f(\alpha_{0,n},\nu))|]\lesssim \omega_n(\delta)$ for each $\delta>0$ and some function $\omega_n\colon (0,\infty)\to\mathbf R_+$ such that $\omega_n(\delta)/\delta^\varsigma$ is nonincreasing in $\delta\in(0,\infty)$ for some $\varsigma<2$; (v) $\omega_n(\delta_n)/\sqrt n\le\delta_n^2$.
\end{ass}

\begin{ass}\label{Ass: Convergence rate, sample}
(i) $\{X_i\}_{i=1}^n$ is an i.i.d.\ sample of $X$; (ii) $\hat\nu_n\in\mathbf B_n\subset\mathbf B$ is an estimator of $\nu_0$ such that $\|\hat\nu_n-\nu_0\|_{\mathbf B}=o_p(1)$. 
\end{ass}

Assumption \ref{Ass: Convergence rate, approximation}(i) implies that $\Delta\Gamma_n$ and hence $\mathbf A_{0,n}$ are bounded by possibly diverging sequences. This is consistent with common choices of sieves (including neural networks). Assumption \ref{Ass: Convergence rate, approximation}(ii) formalizes $\alpha_{0,n}$ as an approximation of $\alpha_0$, which may be taken as the maximizer of $\mathfrak C_0$ over $\mathbf A_{0,n}$ in certain settings; see Lemma \ref{Lem: error bound}. Specific approximation errors are known for a variety of sieve spaces; see, e.g., \citet{Chen2007Handbook} for classical sieves and \citet{Yarotsky2017Error}, \citet{Yang2025ReLU}, \citet{NaglerLanger2026Optimal}, and references therein for deep neural networks.

% ANN: \citet{YarotskyZhevnerchuk2020Phase}
% RKHS: \citet{RasmussenWilliams2006Gaussian}, \citet{BerlinetAgnan2004RKHS}, \citet{ChernozhukovNeweySinghSyrgkanis2024Adversarial}, \citet{ScholkopfSmola2002Kernels}, \citet{SteinwartChristmann2008SVM}, \citet{Aronszajn1950RKHS}

Assumption \ref{Ass: Convergence rate, approximation}(iii) regulates the penalization function $J_n$ and the parameter $\lambda_n$. Assumption \ref{Ass: Convergence rate, approximation}(iv) is known as the modulus of continuity condition \citep{VaartWellner1996Book}, which, intuitively speaking, restricts the ``local size'' of the sieve $\mathbf A_{0,n}$ at $\alpha_{0,n}$. In turn, the modulus $\omega_n$ determines the order $\delta_n$ via Assumption \ref{Ass: Convergence rate, approximation}(v). We note that $\delta_n$ may be viewed as the critical radius, a concept that plays important roles in modern machine learning---see Appendix \ref{Sec: Entropies}. Finally, Assumption \ref{Ass: Convergence rate, sample} introduces the sample and a consistent estimator $\hat\nu_n$ of $\nu_0$. % Our results depend on Assumption \ref{Ass: Convergence rate, sample}

\begin{thm}\label{Thm: Convergence rate}
Let Assumptions \ref{Ass: Convergence rate, population}, \ref{Ass: Convergence rate, approximation}, and \ref{Ass: Convergence rate, sample} hold. If $\hat\alpha_n\colon\{X_i\}_{i=1}^n\to \mathbf A_{0,n}$ satisfies equation \eqref{Eqn: Riesz estimator}, then it follows that
\begin{align}
\|\hat\alpha_n-\alpha_0\|_{\mathbf A}= O_p(\delta_n+\varrho_n\|\hat\nu_n-\nu_0\|_{\mathbf B})~.
\end{align}
\end{thm}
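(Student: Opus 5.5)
The proof is a standard M-estimation rate argument in the style of Theorem 3.2.5 of \citet{VaartWellner1996Book}, run on a peeling of the shells $U_{0,n}(\delta)$. The key is to compare everything at the fixed approximant $\alpha_{0,n}$ rather than at $\alpha_0$. Write $\mathfrak C_0(\alpha,\nu)\equiv E[f(X,\alpha,\nu)]$, so that $\mathfrak C_0(\alpha)=\mathfrak C_0(\alpha,\nu_0)$.

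\textbf{Step 1 (quadratic expansion).} By symmetry and trilinearity in Assumption \ref{Ass: Convergence rate, population}(i)(ii),
\begin{align*}
\mathfrak C_0(\alpha,\nu)-\mathfrak C_0(\alpha_{0,n},\nu)=\tfrac12 E[l(X,\alpha-\alpha_{0,n},\alpha-\alpha_{0,n},\nu)]+E[l(X,\alpha_{0,n},\alpha-\alpha_{0,n},\nu)]+E[h(X,\alpha-\alpha_{0,n},\nu)]~.
\end{align*}
At $\nu=\nu_0$, Assumption \ref{Ass: Convergence rate, population}(iii) bounds the first term by $-\frac{\varkappa}{2}\|\alpha-\alpha_{0,n}\|_{\mathbf A}^2$. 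The remaining linear terms equal $\Upsilon_0(\alpha_{0,n},\alpha-\alpha_{0,n})+\Psi_0(\alpha-\alpha_{0,n})$. The first order condition \eqref{Eqn: Local Orthogonal wtr gamma2} at $\alpha_0$ applies with $b=\alpha-\alpha_{0,n}\in\mathbf A_0$, since $\mathbf A_{0,n}\subset\mathbf A_0$. Substituting it turns these terms into $\Upsilon_0(\alpha_{0,n}-\alpha_0,\alpha-\alpha_{0,n})$, which is $\lesssim\delta_n\|\alpha-\alpha_{0,n}\|_{\mathbf A}$ by continuity and Assumption \ref{Ass: Convergence rate, approximation}(ii).

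Replacing $\nu_0$ by $\nu$ creates additional terms that are linear in $\nu-\nu_0$. By continuity of the trilinear and bilinear forms, these are bounded by $\|\nu-\nu_0\|_{\mathbf B}$ times $(\|\alpha\|_{\mathbf A}+\|\alpha_{0,n}\|_{\mathbf A})\|\alpha-\alpha_{0,n}\|_{\mathbf A}+\|\alpha-\alpha_{0,n}\|_{\mathbf A}$. By Assumption \ref{Ass: Convergence rate, approximation}(i) and continuity of $\Pi_0$, $\|\alpha\|_{\mathbf A}\lesssim\varrho_n$ on $\mathbf A_{0,n}$. Hence the perturbation is $\lesssim\varrho_n\|\nu-\nu_0\|_{\mathbf B}\|\alpha-\alpha_{0,n}\|_{\mathbf A}$, which is exactly why the scale $\varrho_n\|\nu-\nu_0\|_{\mathbf B}$ appears in both $U_{0,n}(\delta)$ and the conclusion.

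\textbf{Step 2 (basic inequality).} Set $r_n\equiv\delta_n+\varrho_n\|\hat\nu_n-\nu_0\|_{\mathbf B}$ and $d(\alpha)\equiv\|\alpha-\alpha_{0,n}\|_{\mathbf A}+\{\lambda_nJ_n(\alpha)\}^{1/2}$. Evaluate \eqref{Eqn: Riesz estimator} at the competitor $\alpha_{0,n}$ and use Assumption \ref{Ass: Convergence rate, approximation}(iii). Decomposing $\hat{\mathfrak C}_n=\mathfrak C_0(\cdot,\hat\nu_n)+n^{-1/2}\mathbb G_nf(\cdot,\hat\nu_n)$ then gives
\begin{align*}
\tfrac{\varkappa}{2}\|\hat\alpha_n-\alpha_{0,n}\|_{\mathbf A}^2+\lambda_nJ_n(\hat\alpha_n)\lesssim n^{-1/2}\mathbb G_n\{f(\hat\alpha_n,\hat\nu_n)-f(\alpha_{0,n},\hat\nu_n)\}+r_n\|\hat\alpha_n-\alpha_{0,n}\|_{\mathbf A}+O_p(\delta_n^2)~.
\end{align*}
The left side is $\gtrsim d(\hat\alpha_n)^2$.

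\textbf{Step 3 (peeling).} For $\eta>0$, intersect with the event $\{\varrho_n\|\hat\nu_n-\nu_0\|_{\mathbf B}\le\eta r_n\}$, whose probability tends to one. Partition $\{d(\hat\alpha_n)>2^Mr_n\}$ into the shells $2^{j-1}r_n<d(\hat\alpha_n)\le 2^jr_n$, $j>M$. On the $j$-th shell, $(\hat\alpha_n,\hat\nu_n)\in U_{0,n}(2^jr_n)$, and the basic inequality forces
\begin{align*}
\sup_{U_{0,n}(2^jr_n)}|\mathbb G_n(f(\alpha,\nu)-f(\alpha_{0,n},\nu))|\gtrsim\sqrt n\,2^{2j}r_n^2~,
\end{align*}
once $M$ is large enough to absorb the linear and $O_p(\delta_n^2)$ terms. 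By Markov's inequality and Assumption \ref{Ass: Convergence rate, approximation}(iv), this shell has probability at most $\omega_n(2^jr_n)/(\sqrt n2^{2j}r_n^2)$. The monotonicity of $\omega_n(\delta)/\delta^\varsigma$ together with $r_n\ge\delta_n$ and Assumption \ref{Ass: Convergence rate, approximation}(v) bounds this by $2^{j(\varsigma-2)}$. Summing over $j>M$ gives a tail that vanishes as $M\to\infty$ because $\varsigma<2$. Therefore $d(\hat\alpha_n)=O_p(r_n)$, and the triangle inequality with Assumption \ref{Ass: Convergence rate, approximation}(ii) yields the stated rate.

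\textbf{Main obstacle.} The delicate point is that $r_n$ is random through $\hat\nu_n$, while Assumption \ref{Ass: Convergence rate, approximation}(iv) is stated for deterministic $\delta$. My first attempt is to peel with the deterministic radius $\delta_n$ and treat $\varrho_n\|\hat\nu_n-\nu_0\|_{\mathbf B}$ as an additive linear-term nuisance. Concretely, on shells where $d(\hat\alpha_n)\gg\varrho_n\|\hat\nu_n-\nu_0\|_{\mathbf B}$ the cross term is dominated by the quadratic term. On the complementary event the claimed bound already holds, so a case split on $d(\hat\alpha_n)\le C\varrho_n\|\hat\nu_n-\nu_0\|_{\mathbf B}$ versus the opposite case should handle the randomness. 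A second care point is ensuring $\hat\nu_n$ satisfies the $\nu$-constraint in $U_{0,n}(\delta)$ at every relevant shell. This holds when the shell radius is at least $\varrho_n\|\hat\nu_n-\nu_0\|_{\mathbf B}$, which the case split guarantees.
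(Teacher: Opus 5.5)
Your proof is correct and is essentially the paper's argument, provided Step~3 is run as in your \emph{main obstacle} paragraph. As written, Step~3 applies Markov's inequality on shells of random radius $2^jr_n$, which Assumption~\ref{Ass: Convergence rate, approximation}(iv) does not justify; also, the event $\{\varrho_n\|\hat\nu_n-\nu_0\|_{\mathbf B}\le\eta r_n\}$ need not have probability tending to one for $\eta<1$, and it is not needed. Your fix peels on deterministic shells $2^{j-1}\delta_n<d(\alpha)\le 2^j\delta_n$ and builds $2^M\max\{\varrho_n\|\nu-\nu_0\|_{\mathbf B},\{\lambda_nJ_n(\alpha_{0,n})\}^{1/2}\}\le d(\alpha)$ into each shell, so the shell lies in $U_{0,n}(2^j\delta_n)$ and the curvature absorbs the $\nu$-cross term. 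This is exactly the paper's $S_{n,j,M}$/$\mathcal E_{n,M}$ device.

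The only genuine difference is in Step~1. You expand directly at $\alpha_{0,n}$ and use the first order condition at $\alpha_0$ to get $\Upsilon_0(\alpha_{0,n}-\alpha_0,\alpha-\alpha_{0,n})\lesssim\delta_n\|\alpha-\alpha_{0,n}\|_{\mathbf A}$. The paper instead bounds $Pf(\cdot,\nu)-Pf(\alpha_0,\nu)$ from above and below and chains through $\alpha_0$, which is why it needs $\hat\nu_n$ to lie in a neighborhood $V_0$ of $\nu_0$ with probability approaching one. Your version does not use the consistency of $\hat\nu_n$.
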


Theorem \ref{Thm: Convergence rate} implies that one may employ a sieve $\mathbf A_{0,n}$ with a diverging bound while still having $\|\hat\alpha_n-\alpha_0\|_{\mathbf A}= O_p(\delta_n)$, as long as $\hat\nu_n$ converges to $\nu_0$ fast enough so that $\|\hat\nu_n-\nu_0\|_{\mathbf B}=O_p(\delta_n/\varrho_n)$. The sieve space $\mathbf A_{0,n}$ determines the convergence rate through its critical radius as well as its approximation capability. The convergence rate of $\hat\alpha_n$ is also impacted by the tuning parameter $\lambda_n$. Developing a data driven choice of $\lambda_n$ is an important issue that is beyond the scope of this paper.

Next, for $\Pi_0$ unknown, let $\mathbf D_n$ be a set of linear maps from $\mathbf H_0$ to $\mathbf A$ and define $\|\Pi\|_{op,n}\equiv \sup_{\eta\in\Delta\Gamma_n\colon\|\eta\|_{\mathbf H}\le 1}\|\Pi(\eta)\|_{\mathbf A}$ for any linear $\Pi\colon \mathbf H_0\to\mathbf A$. Accordingly, we redefine for each $\delta>0$ the set $U_{0,n}(\delta)$ as: for $\eta_{0,n}\in\Delta\Gamma_n$ such that $\alpha_{0,n}=\Pi_0(\eta_{0,n})$,
\begin{multline}\label{Eqn: sieve local size}
\{(\eta,\nu,\Pi)\in \Delta\Gamma_n\times \mathbf B_n\times \mathbf D_n\colon  \frac{\delta}{2}\le \|\Pi\eta-\Pi\eta_{0,n}\|_{\mathbf A} + \{\lambda_n \bar J_n(\eta)\}^{1/2}\le\delta,\\
  \varrho_n\{\|\nu-\nu_0\|_{\mathbf B}\vee \|\Pi-\Pi_0\|_{op,n}\}\le\delta\}~.
\end{multline}
We now introduce our final assumption in this section as follows.

\begin{ass}\label{Ass: Convergence rate, Pi} % $d_{\mathbf A}(\alpha,\mathbf A_{0,n})\to 0$ as $n\to\infty$ for any $\alpha\in\mathbf A_0$
(i) $t\eta\in\Delta\Gamma_n$ for all $\eta\in\Delta\Gamma_n$ and all $t\in(0,1]$; (ii) $\hat\Pi_n\in\mathbf D_n$ satisfies $\varrho_n\|\hat\Pi_n-\Pi_0\|_{op,n}=o_p(1)$; (iii) there is some $\omega_n\colon (0,\infty)\to\mathbf R_+$ satisfying $E[\sup_{(\eta,\nu,\Pi)\in U_{0,n}(\delta)}|\mathbb G_n(f(\Pi(\eta),\nu)-f(\Pi(\eta_{0,n}),\nu))|]\lesssim \omega_n(\delta)$ for each $\delta>0$ such that $\omega_n(\delta)/\delta^\varsigma$ is nonincreasing in $\delta$ for some $\varsigma<2$ and $\omega_n(\delta_n)/\sqrt n\le\delta_n^2$; (iv) either $\hat\Pi_n(\Delta\Gamma_n)\subset\mathbf A_0$ with probability approaching one or $\mathfrak C_0(\alpha)-\mathfrak C_0(\alpha_0) \lesssim \|\alpha-\alpha_0\|_{\mathbf A}^\tau$ for some $\tau>1$ whenever $d_{\mathbf A}(\alpha,\mathbf A_0)\le\epsilon$ for some $\epsilon>0$. 
\end{ass}

Assumption \ref{Ass: Convergence rate, Pi}(i) is a mild simplifying condition so that $\|\cdot\|_{op,n}$ may be defined over the unit ball in $\Delta\Gamma_n$. Assumption \ref{Ass: Convergence rate, Pi}(ii) introduces a $\varrho_n$-consistent estimator $\hat\Pi_n$ of $\Pi_0$. Assumption \ref{Ass: Convergence rate, Pi}(iii) is an analog of Assumption \ref{Ass: Convergence rate, approximation}(iv) that accommodates unknown $\Pi_0$. Assumption \ref{Ass: Convergence rate, Pi}(iv) is needed because our estimator $\hat\alpha_n$ below may lie outside the parameter space $\mathbf A_0$ of $\alpha_0$ (due to the estimator error in $\hat\Pi_n$). The condition $\hat\Pi_n(\Delta\Gamma_n)\subset\mathbf A_0$ with probability approaching one means that we may analyze $\hat\alpha_n$ as if it belongs to $\mathbf A_0$. Alternatively, the condition $\mathfrak C_0(\alpha)-\mathfrak C_0(\alpha_0) \lesssim\|\alpha-\alpha_0\|_{\mathbf A}^\tau$ whenever $d_{\mathbf A}(\alpha,\mathbf A_0)\le\epsilon$ regulates the curvature of $\mathfrak C_0$ over a small enlargement of $\mathbf A_0$ so that the estimator error of $\hat\Pi_n$ is under control. In Example \ref{Ex: NPIV}, this curvature condition automatically holds with $\tau=2$---see Appendix \ref{Sec: curvature} for details.
 
Given the estimator $\hat\Pi_n$, we estimate the sieve $\mathbf A_{0,n}$ by $\hat{\mathbf A}_n\equiv\{\hat\Pi_n(\eta)\colon \eta\in\Delta\Gamma_n\}$. In turn, we define the estimator $\hat\alpha_n\equiv \hat\Pi_n(\hat\eta_n)$ where $\hat\eta_n\in\Delta\Gamma_n$ is such that
\begin{align}\label{Eqn: Riesz estimator2}
\hat{\mathfrak C}_n(\hat\Pi_n(\hat\eta_n))-\lambda_n \bar J_n(\hat\eta_n)= \sup_{\eta\in\Delta\Gamma_n}\{\hat{\mathfrak C}_n(\hat\Pi_n(\eta))-\lambda_n \bar J_n(\eta)\}-O_p(\delta_n^2)~,
\end{align}
where $\bar J_n\colon\Delta\Gamma_n\to\mathbf R_+$ relates to the previous penalty $J_n$ via $J_n(\Pi(\eta))=\bar J_n(\eta)$ for any $\Pi\in\mathbf D_n$ and $\eta\in\Delta\Gamma_n$, i.e., $J_n(\Pi(\eta))$ depends on $\Pi(\eta)$ only through $\eta$ (as is common in machine learning). As previously, the linear term in $\mathfrak C_0$ may be estimated automatically in some settings without requiring an estimator of $\nu_0$---see Remark \ref{Rem: linear term}.

The next theorem obtains the convergence rate of $\hat\alpha_n$ when $\Pi_0$ is unknown.

\begin{thm}\label{Thm: Convergence rate2}
Let Assumptions \ref{Ass: Convergence rate, population}, \ref{Ass: Convergence rate, approximation}(i)(ii)(iii), \ref{Ass: Convergence rate, sample}, and \ref{Ass: Convergence rate, Pi} hold. If $\hat\alpha_n\equiv \hat\Pi_n(\hat\eta_n)$ for $\hat\eta_n\colon\{X_i\}_{i=1}^n\to \Delta\Gamma_n$ that satisfies equation \eqref{Eqn: Riesz estimator2}, then it follows that
\begin{align}\label{Eqn: Convergence rate2 aux0}
\|\hat\alpha_n-\alpha_0\|_{\mathbf A}= O_p(\delta_n+\varrho_n\|\hat\nu_n-\nu_0\|_{\mathbf B}+\varrho_n\|\hat\Pi_n-\Pi_0\|_{op,n})~.
\end{align}
\end{thm}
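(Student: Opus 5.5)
The proof adapts the rate-of-convergence argument behind Theorem \ref{Thm: Convergence rate} (a peeling/shelling argument in the style of Theorem 3.2.5 of \citet{VaartWellner1996Book}). The new feature is that the sieve $\hat{\mathbf A}_n=\hat\Pi_n(\Delta\Gamma_n)$ is random and may leave $\mathbf A_0$.

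\textbf{Step 1: restrict to a good event.} Write $r_n\equiv\delta_n+\varrho_n\|\hat\nu_n-\nu_0\|_{\mathbf B}+\varrho_n\|\hat\Pi_n-\Pi_0\|_{op,n}$ and $\hat\alpha_{0,n}\equiv\hat\Pi_n(\eta_{0,n})$. By Assumptions \ref{Ass: Convergence rate, sample}(ii) and \ref{Ass: Convergence rate, Pi}(ii) we may work on events where $\varrho_n\{\|\hat\nu_n-\nu_0\|_{\mathbf B}\vee\|\hat\Pi_n-\Pi_0\|_{op,n}\}$ is small.

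\textbf{Step 2: basic inequality.} Comparing $\hat\eta_n$ with $\eta_{0,n}$ in \eqref{Eqn: Riesz estimator2}, and using $\bar J_n(\eta_{0,n})=J_n(\alpha_{0,n})$ together with Assumption \ref{Ass: Convergence rate, approximation}(iii), gives
\[
\hat{\mathfrak C}_n(\hat\alpha_n)-\hat{\mathfrak C}_n(\hat\alpha_{0,n})\ge \lambda_n\bar J_n(\hat\eta_n)-O_p(\delta_n^2).
\]
We decompose $\hat{\mathfrak C}_n=\mathfrak C_n(\cdot,\hat\nu_n)+n^{-1/2}\mathbb G_nf(\cdot,\hat\nu_n)$, where $\mathfrak C_n(\alpha,\nu)\equiv E[f(X,\alpha,\nu)]$.

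\textbf{Step 3: population curvature.} By Assumption \ref{Ass: Convergence rate, population}, symmetry and (iii) yield
\[
\mathfrak C_0(\alpha)-\mathfrak C_0(\alpha')=\tfrac12\Upsilon_0(\alpha-\alpha',\alpha-\alpha')+\{\Upsilon_0(\alpha',\alpha-\alpha')+\Psi_0(\alpha-\alpha')\}.
\]
The first term is bounded above by $-\tfrac{\varkappa}{2}\|\alpha-\alpha'\|_{\mathbf A}^2$. Take $\alpha'=\alpha_0$. If $\alpha-\alpha_0\in\mathbf A_0$, the bracket vanishes by \eqref{Eqn: Local Orthogonal wtr gamma2}. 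This is where Assumption \ref{Ass: Convergence rate, Pi}(iv) enters. Under its first alternative, $\hat\alpha_n\in\mathbf A_0$ with probability approaching one, so $\mathfrak C_0(\hat\alpha_n)-\mathfrak C_0(\alpha_0)\le-\tfrac{\varkappa}{2}\|\hat\alpha_n-\alpha_0\|_{\mathbf A}^2$. Under the second alternative, $\Pi_0(\hat\eta_n)\in\mathbf A_0$ lies within $\|\hat\Pi_n-\Pi_0\|_{op,n}\varrho_n$ of $\hat\alpha_n$ by Assumption \ref{Ass: Convergence rate, approximation}(i). The curvature bound with $\tau>1$ then controls the linear bracket by $o(\|\cdot\|)$ plus terms of order $\varrho_n\|\hat\Pi_n-\Pi_0\|_{op,n}$, which are absorbed into $r_n$.

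The remaining replacements cost terms bounded by trilinearity and continuity:
\begin{itemize}
\item replacing $\nu_0$ by $\hat\nu_n$ costs $\lesssim\|\hat\nu_n-\nu_0\|_{\mathbf B}\,\varrho_n\,(\|\hat\alpha_n-\hat\alpha_{0,n}\|_{\mathbf A}+\text{const})$;
\item replacing $\hat\alpha_{0,n}$ by $\alpha_{0,n}$ costs $\lesssim\varrho_n\|\hat\Pi_n-\Pi_0\|_{op,n}$ times bounded quantities;
\item replacing $\alpha_{0,n}$ by $\alpha_0$ costs $O(\delta_n)$ by Assumption \ref{Ass: Convergence rate, approximation}(ii).
\end{itemize}
Each cross term has the form $c\,r_n\|\hat\alpha_n-\hat\alpha_{0,n}\|_{\mathbf A}+O(r_n^2)$, so after the inequality $ab\le \epsilon a^2+b^2/\epsilon$ it becomes part of a deterministic drift of order $-\|\hat\alpha_n-\hat\alpha_{0,n}\|^2_{\mathbf A}-\lambda_n\bar J_n(\hat\eta_n)+O(r_n^2)$.

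\textbf{Step 4: peeling.} This is the main obstacle. Consider the shells where
\[
\|\hat\Pi_n\hat\eta_n-\hat\Pi_n\eta_{0,n}\|_{\mathbf A}+\{\lambda_n\bar J_n(\hat\eta_n)\}^{1/2}\in(2^{j-1}r,2^jr],\qquad r=Mr_n.
\]
On shell $j$, the triple $(\hat\eta_n,\hat\nu_n,\hat\Pi_n)$ lies in $U_{0,n}(2^jr)$ from \eqref{Eqn: sieve local size}, since $r\ge\delta_n$ dominates $\varrho_n\{\|\hat\nu_n-\nu_0\|_{\mathbf B}\vee\|\hat\Pi_n-\Pi_0\|_{op,n}\}$. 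The basic inequality then forces $n^{-1/2}\sup_{U_{0,n}(2^jr)}|\mathbb G_n(\cdots)|\gtrsim 2^{2j}r^2$. Markov's inequality with Assumption \ref{Ass: Convergence rate, Pi}(iii) bounds the probability of shell $j$ by
\[
\frac{\omega_n(2^jr)}{\sqrt n\,2^{2j}r^2}\le 2^{j(\varsigma-2)}M^{\varsigma-2}\frac{\omega_n(\delta_n)}{\sqrt n\,\delta_n^2}\le 2^{j(\varsigma-2)}M^{\varsigma-2},
\]
using the monotonicity of $\omega_n(\delta)/\delta^\varsigma$ and $\omega_n(\delta_n)/\sqrt n\le\delta_n^2$. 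Summing over $j$ gives a bound that tends to zero as $M\to\infty$, so $\|\hat\alpha_n-\hat\alpha_{0,n}\|_{\mathbf A}=O_p(r_n)$. The delicate point is that the empirical process is evaluated at the random $\hat\nu_n$ and $\hat\Pi_n$, which Assumption \ref{Ass: Convergence rate, Pi}(iii) handles by taking the supremum over $(\nu,\Pi)$ as well.

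\textbf{Step 5: conclude.} The triangle inequality
\[
\|\hat\alpha_n-\alpha_0\|_{\mathbf A}\le\|\hat\alpha_n-\hat\alpha_{0,n}\|_{\mathbf A}+\|\hat\alpha_{0,n}-\alpha_{0,n}\|_{\mathbf A}+\|\alpha_{0,n}-\alpha_0\|_{\mathbf A}
\]
finishes the argument. The middle term is at most $\varrho_n\|\hat\Pi_n-\Pi_0\|_{op,n}$, after rescaling $\eta_{0,n}$ by Assumptions \ref{Ass: Convergence rate, approximation}(i) and \ref{Ass: Convergence rate, Pi}(i). The last term is $O(\delta_n)$, which yields \eqref{Eqn: Convergence rate2 aux0}.
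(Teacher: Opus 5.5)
\textbf{There is a genuine gap in Step 3, under the curvature alternative of Assumption \ref{Ass: Convergence rate, Pi}(iv).} Let $\ell(b)\equiv\Upsilon_0(\alpha_0,b)+\Psi_0(b)$. This is the first-order part of $\mathfrak C_0(\alpha_0+b)-\mathfrak C_0(\alpha_0)$, and it vanishes on $\mathbf A_0$.

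A basic-inequality/peeling argument gives the rate $r_n$ only if every deterministic cost in the drift is $O(r_n^2)$ or of the form $c\,r_n\|\hat\alpha_n-\hat\alpha_{0,n}\|_{\mathbf A}$. Your first-order contribution is $\ell(\hat\alpha_n-\hat\alpha_{0,n})=\ell((\hat\Pi_n-\Pi_0)(\hat\eta_n-\eta_{0,n}))$. Your only bound on it is of order $\varrho_n\|\hat\Pi_n-\Pi_0\|_{op,n}$. You cannot bound $\|(\hat\Pi_n-\Pi_0)(\hat\eta_n-\eta_{0,n})\|_{\mathbf A}$ by $r_n\|\hat\Pi_n(\hat\eta_n-\eta_{0,n})\|_{\mathbf A}$, because $\|\hat\eta_n-\eta_{0,n}\|_{\mathbf H}$ is not controlled by $\|\hat\Pi_n(\hat\eta_n-\eta_{0,n})\|_{\mathbf A}$; this is exactly ill-posedness in the NPIV case.

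A term of size $r_n$ entering the drift additively yields only $\|\hat\alpha_n-\hat\alpha_{0,n}\|_{\mathbf A}=O_p(r_n^{1/2})$. On shells with radius between $Mr_n$ and $r_n^{1/2}$ your drift bound is not negative. So ``absorbed into $r_n$'' would have to mean absorbed into $r_n^2$.

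Applying the curvature bound at $\hat\alpha_n$ itself does not help. It only gives $\mathfrak C_0(\hat\alpha_n)-\mathfrak C_0(\alpha_0)\lesssim\|\hat\alpha_n-\alpha_0\|_{\mathbf A}^\tau$, an upper bound of the wrong sign for producing $-\frac{\varkappa}{2}\|\hat\alpha_n-\alpha_0\|_{\mathbf A}^2$.

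Your costs for replacing $\hat\alpha_{0,n}$ by $\alpha_{0,n}$ and $\alpha_{0,n}$ by $\alpha_0$ are likewise stated at first order. They must be made second order by expanding around $\alpha_0$, which requires $\ell$ to vanish at the relevant point. That is automatic for $\alpha_{0,n}\in\mathbf A_0$, giving cost $\lesssim\|\alpha_{0,n}-\alpha_0\|_{\mathbf A}^2=O(\delta_n^2)$, but not for $\hat\alpha_{0,n}=\hat\Pi_n(\eta_{0,n})$ under the second alternative.

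\textbf{The missing idea is that the curvature condition forces $\ell(\alpha-\alpha_0)=0$ exactly for every $\alpha$ with $d_{\mathbf A}(\alpha,\mathbf A_0)\le\epsilon$.} For such $\alpha$, the points $\alpha_0+t(\alpha-\alpha_0)$, $t\in[0,1]$, stay in this set by convexity of $d_{\mathbf A}(\cdot,\mathbf A_0)$. Hence $t\,\ell(\alpha-\alpha_0)+\frac{t^2}{2}\Upsilon_0(\alpha-\alpha_0,\alpha-\alpha_0)\le ct^\tau\|\alpha-\alpha_0\|_{\mathbf A}^\tau$. Dividing by $t$ and letting $t\downarrow0$ uses $\tau>1$ and gives $\ell(\alpha-\alpha_0)\le0$. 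Applying the same to $2\alpha_0-\alpha$, which is also in the set because $\mathbf A_0$ is linear, gives $\ell(\alpha-\alpha_0)\ge0$. (Since these differences contain a ball around $0$, in fact $\ell\equiv0$ on $\mathbf A$.)

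With probability approaching one, $\hat\alpha_n$ and $\hat\Pi_n(\eta_{0,n})$ lie in this set. So $\mathfrak C_0$ has an exact quadratic expansion around $\alpha_0$ at both points. Routing the comparison through $\alpha_0$, as in the known-$\Pi_0$ case, and using $\|\hat\Pi_n(\eta_{0,n})-\alpha_0\|_{\mathbf A}\le\varrho_n\|\hat\Pi_n-\Pi_0\|_{op,n}+\|\alpha_{0,n}-\alpha_0\|_{\mathbf A}$, gives the drift $-\frac{\varkappa}{4}\|\hat\alpha_n-\hat\Pi_n(\eta_{0,n})\|_{\mathbf A}^2+O(\varrho_n^2\|\hat\Pi_n-\Pi_0\|_{op,n}^2+\delta_n^2+\varrho_n^2\|\hat\nu_n-\nu_0\|_{\mathbf B}^2)$. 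Every cost is now second order.

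\textbf{A smaller point in Step 4.} Your shells have the random radius $r=Mr_n$, so Markov's inequality with Assumption \ref{Ass: Convergence rate, Pi}(iii) does not apply as written. Use deterministic radii $2^j\delta_n$ instead, and build the requirement $2^M\max\{\varrho_n\|\nu-\nu_0\|_{\mathbf B},\varrho_n\|\Pi-\Pi_0\|_{op,n},\{\lambda_n\bar J_n(\eta_{0,n})\}^{1/2}\}\le\|\Pi\eta-\Pi\eta_{0,n}\|_{\mathbf A}+\{\lambda_n\bar J_n(\eta)\}^{1/2}$ into the non-random shells. This places each shell inside $U_{0,n}(2^j\delta_n)$.
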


% Theorem 7.10.23 in \citet{BogachevSmolyanov2020RealFunctional} 

Theorem \ref{Thm: Convergence rate2} shows that the convergence rate of $\hat\alpha_n$ is in addition controlled by the convergence rate of $\hat\Pi_n$. Note that Theorem \ref{Thm: Convergence rate2} includes Theorem \ref{Thm: Convergence rate} as a special case by setting $\mathbf D_n=\{\Pi_0\}$. In Example \ref{Ex: NPIV}, $\Pi_0$ is the conditional expectation operator and so $\hat\Pi_n$ may be constructed by nonparametric regression methods. While the convergence rates of various classical nonparametric regression estimators are well understood, those of modern machine learners are less established. Recent advances on convergence rates of nonparametric regression estimators obtained by deep learning include \citet{Schmidt2020ReLU}, \citet{FarrellLiangMisra2021DNN}, and \citet{KohlerLanger2021RateDNN}. Finally, in certain settings such as NPIV models, the $\hat\nu_n$ term does not appear in the order because $\Upsilon_0$ does not involve $\nu_0$ and $\Psi_0$ may be estimated automatically as explained in Remark \ref{Rem: linear term}. 

%  and \citet{MourtadaGaiffasScornet2020Mondrian}, \citet{Klusowski2021RF}, \citet{ChiVosslerFanLv2022RF}, \citet{KlusowskiTian2024Tree}, \citet{CattaneoChandakKlusowski2024ObliqueTrees}, and \citet{CattaneoKlusowskiUnderwood2025Mondrian} for random forests. 

\begin{rem}\label{Rem: linear term}
Suppose that $\nabla_\gamma E[g(X,\theta_0,\gamma_0)][\delta]=E[\nabla_\gamma g(X,\delta)]$ for all $\delta\in\mathbf H_0$ and some \textit{known} map $\nabla_\gamma g\colon\mathcal X\times\mathbf H_0\to\mathbf R$. Then $\Psi_0(\Pi_0(\delta))=E[\nabla_\gamma g(X,\delta)]$ by definition so that the population analog of \eqref{Eqn: Riesz estimator} and \eqref{Eqn: Riesz estimator2} is
\begin{align}\label{Eqn: Psi rewritten}
  \sup_{\eta\in\Delta\Gamma_n} \frac{1}{2}E[l(X,\Pi_0(\eta),\Pi_0(\eta),\nu_0)] + E[\nabla_\gamma g(X,\eta)]~.
\end{align}
Thus, it suffices to estimate the unknowns involved in the quadratic term only. In the special case when $\theta_0=E[m(X,\gamma_0)]$ is continuous and linear in $\gamma_0$, we have $\nabla_\gamma g=m$. In our simulation studies, we estimate $\alpha_0$ based on \eqref{Eqn: Psi rewritten} in an NPIV setting. \qed
\end{rem}

\subsection{Asymptotic Normality}\label{Sec: AN}

While the asymptotic normality of the debiased GMM estimator $\hat\theta_n$ is a relatively standard result, we present it for completeness. To this end, we impose assumptions similar to but more primitive than those in \citet{ChernozhukovEscancianoIchimuraNewey2022LocalRobust} as we separate analytic conditions from probabilistic ones. Following the literature \citep{ChernozhukovChetverikovDemirerDufloHansenNeweyRobins2018Double}, we combine the orthogonal moment in \eqref{Eqn: moment function augmented} with cross-fitting---the latter in particular helps further alleviate overfitting biases inherited from first step learners. 

Formally, let $\bigcup_{k=1}^K I_k$ be a $K$-fold partition of $\{1,\ldots,n\}$ for some $K>1$ and denote by $n_k$ the sample size of the $k$-th fold. Suppose that $\hat\gamma_{-k}$ and $\hat\alpha_{-k}$ are estimators of $\gamma_0$ and $\alpha_0$ respectively that are based on observations {\it not} from the $k$-th fold. Further assume that $\tilde\theta_{-k}$ is a preliminary estimator of $\theta_0$ based on observations not in the $k$-th fold.  For notational simplicity, define $\psi(X,\theta,\gamma,\alpha) \equiv g(X,\theta,\gamma) +\phi(X,\theta,\gamma,\alpha)$, $\psi_0(X)\equiv \psi(X,\theta_0,\gamma_0,\alpha_0)$, and, for a sample $\{X_i\}_{i=1}^n$,
\begin{align}
\hat\psi_n(\theta)\equiv\frac{1}{n}\sum_{k=1}^{K}\sum_{i\in I_k }\{ g(X_i,\theta,\hat\gamma_{-k}) + \phi(X_i,\tilde\theta_{-k},\hat\gamma_{-k}, \hat\alpha_{-k})\}~.
\end{align}
Since each moment function is associated with a (potentially) different Riesz representer, $\alpha_0$ and $\hat\alpha_{-k}$ in this section are therefore understood as $d_g\times 1$ vectors of functions. For a $d_g\times d_g$ weighting matrix $\hat\Omega_n$, the debiased GMM estimator is then
\begin{align}\label{Eqn: Debiased GMM}
\hat\theta_n\in\arg\min_{\theta\in\Theta} \hat\psi_n(\theta) ^\intercal \hat\Omega_n \hat\psi_n(\theta)~.
\end{align}

% \citet{Hausman1978Specification}, \citet{Rao1973Linear} (p.320, p.322), \citet{Fisher1925Theory}.

Having introduced the notation, we impose the assumptions in this section.

\begin{ass}\label{Ass: Model}
(i) $\theta_0\in\Theta^\circ\subset\mathbf R^{d_\theta}$ and $\gamma_0\in\Gamma\subset\mathbf H$ with a seminorm $\|\cdot\|_{\mathbf H}$; (ii) $g(X,\cdot,\gamma)$ is differentiable over an open neighborhood $U_0$ of $\theta_0$ for all $\gamma$ in a neighborhood $V_0\subset\Gamma$ of $\gamma_0$; (iii) $E[\nabla_\theta g(X,\cdot,\cdot)]$ is continuous at $(\theta_0,\gamma_0)$ with $E[\nabla_\theta g(X,\theta_0,\gamma_0)]$ of full column rank; (iv) $E[\sup_{(\theta,\gamma)\in U_0\times V_0}\|\nabla_\theta g(X,\theta,\gamma)\|^{1+\varsigma}]<\infty$ for some $\varsigma>0$.
\end{ass}

\begin{ass}\label{Ass: FSIF}
(i) $E[\psi_0(X)]=0$ and $E[\|\psi_0(X)\|^2]<\infty$; (ii) $\|E[\psi(X,\theta_0,\gamma,\alpha_0)]\|\lesssim \|\gamma-\gamma_0\|_{\mathbf H}^2$ for all $\gamma\in V_0$; (iii) $E[\phi(X,\theta,\gamma_0,\alpha)]=0$ for all $\theta\in\Theta$ and $\alpha\in\mathbf A_0\subset\mathbf A$ with $(\mathbf A,\|\cdot\|_{\mathbf A})$ a normed space;  (iv) $ \psi(X,\theta_0,\cdot,\alpha_0)$ is mean-square continuous at $\gamma_0$; (v) $\phi(X,\theta,\gamma_0,\alpha)$ is mean-square continuous at $(\theta,\alpha)=(\theta_0,\alpha_0)$; (vi)  $\Delta_0(X,\theta,\gamma,\alpha)\equiv \phi(X,\theta,\gamma,\alpha)-\phi(X,\theta,\gamma_0,\alpha)-\phi(X,\theta_0,\gamma,\alpha_0)+\phi_0(X)$ is mean-square continuous at $(\theta,\gamma,\alpha)=(\theta_0,\gamma_0,\alpha_0)$; (vii) $\|E[\Delta_0(X,\theta,\gamma,\alpha)]\|\lesssim \|\theta-\theta_0\|^2+\|\gamma-\gamma_0\|_{\mathbf H}^2+\|\alpha-\alpha_0\|_{\mathbf A}^2~$ for all $(\theta,\gamma,\alpha)\in U_0\times V_0\times A_0$ with $A_0\subset\mathbf A_0$ a neighborhood of $\alpha_0$.
\end{ass}

\begin{ass}\label{Ass: Data}
(i) $\{X_i\}_{i=1}^n$ is an i.i.d.\ sample of $X$; (ii) $\hat\theta_n\overset{p}{\to}\theta_0$ for $\hat\theta_n\colon\{X_i\}_{i=1}^n\to\Theta$ satisfying \eqref{Eqn: Debiased GMM}; (iii) $\|\tilde\theta_{-k}-\theta_0\|\vee\|\hat\gamma_{-k}-\gamma_0\|_{\mathbf H}\vee \|\hat\alpha_{-k}-\alpha_0\|_{\mathbf A}=o_p(n^{-1/4})$ as $n\to\infty$ with $\hat\gamma_{-k}\in\Gamma$ and $\hat\alpha_{-k}\in\mathbf A$ for all $k=1,\ldots,K$; (iv) $\min_{k=1}^K n_k\to\infty$ as $n\to\infty$; (v) $\hat\Omega_n\overset{p}{\to}\Omega_0$ for some positive definite  matrix $\Omega_0$ of size $d_g\times d_g$.
\end{ass}

\begin{ass}\label{Ass: variance estimation}
(i) $E[\psi(X,\theta,\gamma,\alpha)\psi(X,\theta,\gamma,\alpha)^\intercal]$ is continuous at $(\theta_0,\gamma_0,\alpha_0)$; (ii) $E[\sup_{(\theta,\gamma,\alpha)\in U_0\times V_0\times A_0}\|\psi(X,\theta,\gamma,\alpha)\|^{2+\varsigma}]<\infty$; (iii) $\|\nabla_\theta g(X,\theta,\gamma)-\nabla_\theta g(X,\theta_0,\gamma)\|\le  b(X,\gamma)\|\theta-\theta_0\|$ for all $(\theta,\gamma)\in U_0\times V_0$ with $E[\sup_{\gamma\in V_0} b(X,\gamma)]<\infty$.
\end{ass}

Assumption \ref{Ass: Model} places standard restrictions on the GMM model. Assumption \ref{Ass: Model}(iv) in particular is a sufficient condition to deal with the triangular array nature of each fold $I_k$. Assumption \ref{Ass: FSIF} imposes conditions on the augmented moment function. Assumption \ref{Ass: FSIF}(i) is a minimal requirement. Assumption \ref{Ass: FSIF}(ii) entails the orthogonality of $E[\psi(X,\theta_0,\cdot,\alpha_0)]$ at $\gamma_0$, i.e., $\nabla_\gamma E[\psi(X,\theta_0,\gamma_0,\alpha_0)][\gamma-\gamma_0]=0$ for all $\gamma\in V_0$. Given $E[\psi_0(X)]=0$ and the orthogonality, Assumption \ref{Ass: FSIF}(ii) holds if $E[\psi(X,\theta_0,\cdot,\alpha_0)]$ is twice continuously Fr\'{e}chet differentiable around $\gamma_0$. Assumption \ref{Ass: FSIF}(iii) is a global orthogonality property of the first step influence function \citep{ChernozhukovEscancianoIchimuraNewey2022LocalRobust}. The continuity in Assumptions \ref{Ass: FSIF}(iv)(v)(vi) follows if $g$ and $\phi$ are mean-square continuous at the truth. Assumption \ref{Ass: FSIF}(vii) is mild since $E[\Delta_0(X,\theta,\gamma,\alpha)]$ may be viewed as a discrete version of the second derivative of $E[\phi(X,\cdot,\cdot,\cdot)]$ at $(\theta_0,\gamma_0,\alpha_0)$, with the discrepancy of order $o(\|\theta-\theta_0\|^2+\|\gamma-\gamma_0\|_{\mathbf H}^2+\|\alpha-\alpha_0\|_{\mathbf A}^2)$. 

% ''the small bias property'' as in \citet{NeweyHsiehRobins1998Undersmoothing} and \citet{NeweyHsiehRobins2004Twicing}
% See Proposition 7.2.1 in \citet{Cartan1971Calculus} and Theorem 9.40 in \citet{Rudin1976Principle}.

Assumption \ref{Ass: Data} formalizes restrictions on the sample and relevant estimators. Assumption \ref{Ass: Data}(i) is standard, while Assumption \ref{Ass: Data}(ii) is a high level condition imposed for the sake of transparency of the theory---see Lemma \ref{Lem: DGMM consistency} for lower level conditions. Assumption \ref{Ass: Data}(iii) demands the usual faster-than-$n^{-1/4}$ rate of convergence on the first step estimators $\hat\gamma_{-k}$ and $\hat\alpha_{-k}$ as well as the preliminary estimator $\tilde\theta_{-k}$. Note that when the augmented moment function is doubly robust (so that $E[\psi(X,\theta_0,\gamma,\alpha_0)]=E[\psi(X,\theta_0,\gamma_0,\alpha)]=0$) and $\|E[\Delta_0(X,\theta,\gamma,\alpha)]\|=O(\|\gamma-\gamma_0\|_{\mathbf H}\|\alpha-\alpha_0\|_{\mathbf A})$, it suffices to have $\|\hat\gamma_{-k}-\gamma_0\|_{\mathbf H} \|\hat\alpha_{-k}-\alpha_0\|_{\mathbf A}=o_p(n^{-1/2})$. Assumption \ref{Ass: Data}(iv) is also minimal by requiring the sample size in each fold to be large. Assumption \ref{Ass: Data}(v) is the common consistency requirement on the weighting matrix. 

Assumption \ref{Ass: variance estimation} imposes additional conditions for variance estimation, which requires sample analogs of $\Sigma_0\equiv \mathrm{Var}[\psi_0(X)]$ and $G_0\equiv E[\nabla_\theta g(X,\theta_0,\gamma_0)]$ defined as:
\begin{align}
\hat\Sigma_n\equiv \frac{1}{n}\sum_{k=1}^{K}\sum_{i\in I_k} \hat\psi_{-k}(X_i)\hat\psi_{-k}^\intercal(X_i)~,\, \hat G_n\equiv \frac{1}{n}\sum_{k=1}^{K}\sum_{i\in I_k}\nabla_\theta g(X_i,\hat\theta_n,\hat\gamma_{-k})~,
\end{align}
for $\hat\psi_{-k}(X_i)\equiv \psi(X_i,\tilde\theta_{-k},\hat\gamma_{-k},\hat\alpha_{-k})$. In particular, Assumptions \ref{Ass: variance estimation}(i)(ii) and (iii) ensure the consistency of $\hat\Sigma_n$ and $\hat G_n$ respectively.

Proposition \ref{Pro: AN} formally establishes the asymptotic normality of the debiased GMM estimator and provides a consistent estimator for its asymptotic variance. 

\begin{pro}\label{Pro: AN}
Let Assumptions \ref{Ass: Model}, \ref{Ass: FSIF}, and \ref{Ass: Data} hold. Then it follows that, for $V_{\theta,0}\equiv [G_0^\intercal\Omega_0 G_0]^{-1}G_0^\intercal\Omega_0\Sigma_0\Omega_0 G_0[G_0^\intercal\Omega_0 G_0]^{-1}$,
\begin{align}\label{Eqn: linear expansion}
\sqrt n\{\hat\theta_n-\theta_0\} &= - [G_0^\intercal\Omega_0 G_0]^{-1}G_0^\intercal\Omega_0 \frac{1}{\sqrt n}\sum_{i=1}^{n}\psi_0(X_i) +o_p(1)\overset{L}{\to}N(0,V_{\theta,0})~.
\end{align}
If in addition Assumption \ref{Ass: variance estimation} holds, then
\begin{align}
\hat V_{\theta,n}\equiv [\hat G_n^\intercal\hat\Omega_n \hat G_n]^{-1}\hat G_n^\intercal\hat\Omega_n\hat\Sigma_n\hat\Omega_n \hat G_n[\hat G_n^\intercal\hat\Omega_n \hat G_n]^{-1}\overset{p}{\to} V_{\theta,0}~.
\end{align}
\end{pro}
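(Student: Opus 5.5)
The plan is the standard Z-estimation argument for GMM, combined with cross-fitting to handle the plug-in nuisance estimates. The key objects are
\[
\hat\Psi_n(\theta)\equiv \frac1n\sum_{k}\sum_{i\in I_k} g(X_i,\theta,\hat\gamma_{-k}),\qquad
\hat\Phi_n\equiv\frac1n\sum_k\sum_{i\in I_k}\phi(X_i,\tilde\theta_{-k},\hat\gamma_{-k},\hat\alpha_{-k}),
\]
so that $\hat\psi_n(\theta)=\hat\Psi_n(\theta)+\hat\Phi_n$, where the second term does not depend on $\theta$.

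\textbf{Step 1: expansion at $\theta_0$.} I would show $\sqrt n\,\hat\psi_n(\theta_0)=n^{-1/2}\sum_i\psi_0(X_i)+o_p(1)$. Fix a fold $k$ and condition on the observations outside it. Decompose the fold-$k$ contribution $\sum_{i\in I_k}\{\psi(X_i,\theta_0,\hat\gamma_{-k},\alpha_0)-\psi_0(X_i)\}+\sum_{i\in I_k}\{\phi(X_i,\tilde\theta_{-k},\hat\gamma_{-k},\hat\alpha_{-k})-\phi(X_i,\theta_0,\hat\gamma_{-k},\alpha_0)\}$. The second sum I rewrite as $\Delta_0(X_i,\tilde\theta_{-k},\hat\gamma_{-k},\hat\alpha_{-k})+\phi(X_i,\tilde\theta_{-k},\gamma_0,\hat\alpha_{-k})-\phi_0(X_i)$. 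Each bracket then splits into a centered empirical-process part and a mean part.
\begin{itemize}
\item The centered parts have conditional variance controlled by mean-square continuity: Assumption \ref{Ass: FSIF}(iv)(v)(vi) together with consistency from Assumption \ref{Ass: Data}(iii). By Chebyshev applied conditionally, and since the estimators are independent of fold $k$ by cross-fitting, these parts are $o_p(\sqrt{n_k})$. The term $\phi(\cdot,\tilde\theta_{-k},\gamma_0,\hat\alpha_{-k})$ has conditional mean exactly zero by Assumption \ref{Ass: FSIF}(iii); this requires $\hat\alpha_{-k}$ to lie in $\mathbf A_0$ or to be handled by continuity.
\item The mean parts are $\|E[\psi(X,\theta_0,\hat\gamma_{-k},\alpha_0)]\|\lesssim\|\hat\gamma_{-k}-\gamma_0\|_{\mathbf H}^2=o_p(n^{-1/2})$ by Assumption \ref{Ass: FSIF}(ii), and $\|E\Delta_0\|=o_p(n^{-1/2})$ by Assumption \ref{Ass: FSIF}(vii) and the $n^{-1/4}$ rates.
\end{itemize}
Summing over the finitely many folds gives the claim.

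\textbf{Step 2: Jacobian.} I would show $\sup_{\|\theta-\theta_0\|\le\epsilon_n}\|\nabla_\theta\hat\Psi_n(\theta)-G_0\|=o_p(1)$ for $\epsilon_n\to0$. Since $\hat\Phi_n$ is free of $\theta$, this is the only derivative needed. Within each fold, conditional on $\hat\gamma_{-k}$, apply a triangular-array weak law of large numbers using the $1+\varsigma$ envelope in Assumption \ref{Ass: Model}(iv). Then use Assumption \ref{Ass: Model}(iii) for continuity of $E[\nabla_\theta g(X,\theta,\gamma)]$. The uniformity in $\theta$ is where I expect the main obstacle: without a Lipschitz condition in the first part I would obtain it from the envelope plus equicontinuity. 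The simpler route is to apply the mean value theorem row by row, $\hat\psi_n(\hat\theta_n)=\hat\psi_n(\theta_0)+\bar G_n(\hat\theta_n-\theta_0)$, with intermediate points $\bar\theta$ converging to $\theta_0$ by Assumption \ref{Ass: Data}(ii), and to show $\bar G_n\to_p G_0$ via a dominated-convergence bound on $E\sup_{\|\theta-\theta_0\|\le\epsilon}\|\nabla_\theta g(X,\theta,\gamma)-\nabla_\theta g(X,\theta_0,\gamma)\|$.

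\textbf{Step 3: first-order condition and conclusion.} Because $\theta_0\in\Theta^\circ$, $\hat\theta_n$ is interior with probability approaching one, so $\hat G_n(\hat\theta_n)^\intercal\hat\Omega_n\hat\psi_n(\hat\theta_n)=0$. Substituting the expansion, using $\hat\Omega_n\to_p\Omega_0$ and invertibility of $G_0^\intercal\Omega_0G_0$ (full column rank), and solving yields \eqref{Eqn: linear expansion}. The i.i.d.\ central limit theorem with $E\psi_0=0$ and finite variance gives the normal limit.

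\textbf{Variance estimator.} For $\hat\Sigma_n$, condition on each fold's estimators and apply a weak law of large numbers using the $2+\varsigma$ envelope (Assumption \ref{Ass: variance estimation}(ii)); the continuity in Assumption \ref{Ass: variance estimation}(i) then gives $\hat\Sigma_n\to_p\Sigma_0$. For $\hat G_n$, the Lipschitz bound in Assumption \ref{Ass: variance estimation}(iii) handles $\hat\theta_n$, and the earlier law of large numbers plus continuity handles $\hat\gamma_{-k}$. The continuous mapping theorem then delivers $\hat V_{\theta,n}\to_p V_{\theta,0}$.
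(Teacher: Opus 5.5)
Your proposal follows the paper's proof essentially step for step. It uses the same three-way split of $n^{-1/2}\sum_k\sum_{i\in I_k}\{g(X_i,\theta_0,\hat\gamma_{-k})+\phi(X_i,\tilde\theta_{-k},\hat\gamma_{-k},\hat\alpha_{-k})\}-n^{-1/2}\sum_i\psi_0(X_i)$ into a $\psi(\cdot,\theta_0,\hat\gamma_{-k},\alpha_0)-\psi_0$ piece, a $\phi(\cdot,\tilde\theta_{-k},\gamma_0,\hat\alpha_{-k})-\phi_0$ piece and the $\Delta_0$ remainder. Each piece is controlled fold by fold through its conditional mean (Assumption \ref{Ass: FSIF}(ii)(iii)(vii)) and its conditional variance (Assumption \ref{Ass: FSIF}(iv)(v)(vi)). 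You then use the row-by-row mean value theorem with the first order condition, and for $\hat V_{\theta,n}$ a cross-fitted law of large numbers for $\hat\Sigma_n$ plus the Lipschitz bound of Assumption \ref{Ass: variance estimation}(iii) for $\hat G_n$.

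The step that does not go through as written is the Jacobian in part (i). Your dominated-convergence bound on $E\sup_{\|\theta-\theta_0\|\le\epsilon}\|\nabla_\theta g(X,\theta,\gamma)-\nabla_\theta g(X,\theta_0,\gamma)\|$ has two requirements:
\begin{itemize}
\item $\theta\mapsto\nabla_\theta g(X,\theta,\gamma)$ must be continuous at $\theta_0$ for almost every $X$;
\item this must hold with enough uniformity in $\gamma$ to be evaluated at $\gamma=\hat\gamma_{-k}$.
\end{itemize}
Assumption \ref{Ass: Model}(ii) gives only differentiability, and (iii) gives only continuity of the expectation.

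The paper instead invokes Lemma \ref{Lem: LLN} with $(\hat\theta_n,\hat\gamma_{-k})$ and $(\check\theta_{n,j},\hat\gamma_{-k})$ as plug-ins, relying only on Assumption \ref{Ass: Model}(iii)(iv). But that lemma is proved by conditioning on the out-of-fold observations, whereas $\hat\theta_n$ and the intermediate points are built from the whole sample. So your suspicion that uniformity in $\theta$ is the real obstacle is correct, and some extra condition is needed whichever route you take.

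The simplest repair is the paper's own part (ii) argument. Set $\tilde G_n\equiv n^{-1}\sum_k\sum_{i\in I_k}\nabla_\theta g(X_i,\theta_0,\hat\gamma_{-k})$; Lemma \ref{Lem: LLN} legitimately gives $\tilde G_n\overset{p}{\to}G_0$. Under Assumption \ref{Ass: variance estimation}(iii), both $\|\bar G_n-\tilde G_n\|$ and the distance from the first-order-condition Jacobian to $\tilde G_n$ are bounded by $n^{-1}\sum_k\sum_{i\in I_k}b(X_i,\hat\gamma_{-k})=O_p(1)$ times $\|\hat\theta_n-\theta_0\|=o_p(1)$, since the intermediate points are closer to $\theta_0$. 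Otherwise, state your a.s.-continuity condition explicitly as an added assumption.

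Finally, your caveat about $\hat\alpha_{-k}\in\mathbf A_0$ is equally implicit in the paper, which applies Assumption \ref{Ass: FSIF}(iii)(vii) (both stated on $\mathbf A_0$) to $\hat\alpha_{-k}$. ``Handling it by continuity'' would not rescue this, because the conditional mean of the $\phi(\cdot,\tilde\theta_{-k},\gamma_0,\hat\alpha_{-k})$ term must be $o_p(n^{-1/2})$, not merely $o_p(1)$. You should simply require $\hat\alpha_{-k}\in\mathbf A_0$ with probability approaching one.
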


\iffalse
Proposition \ref{Pro: AN} establishes the asymptotic linear expansion of $\hat\theta_n$, which in turn delivers the usual asymptotic normality. The ``magnitude'' of the asymptotic variance $V_{\theta,0}$ depends on the population weighting matrix $\Omega_0$. The choice $\Omega_0=\Sigma_0^{-1}$ leads to the efficient GMM estimator with the ``smallest'' variance $[G_0^\intercal\Sigma_0^{-1} G_0]^{-1}$. 
\fi

% Some references: \citet{ChenSantos2018OverID}, \citet{Vaart1991Differentibility}, \citet{Vaart1989InfoBound}, \citet{Newey1994AsymptoticVar}
% \citet{ChetverikovWlihelm2017MonNPIV}: a specific model with a specific shape; only about first step; finite sample results. 
% \citet{Hahn1998Role}: shows that ATE is adaptive wtr the first stage (i.e., the propensity) but not ATT. Moreover, not necessary (even harmful) to use propensity score for efficient estimation of ATT
% \citet{HiranoImbensRidder2003EfficientATE}: using estimated propensity score (instead of the true) leads to efficient estimators.

\section{Implementation and Simulations}\label{Sec: Implementation and Sims}

In this section, we outline some implementation details and conduct simulation studies. In particular, we discuss how monotonicity and convexity may be enforced in neural networks, while noting that much work remains to develop architectures for other important constraints in economics, e.g., the Slutsky restriction, supermodularity, quasi-concavity, and joint constraints. We focus on the finite-sample gains from exploiting shape constraints in both exogenous and endogenous settings.

% https://arxiv.org/pdf/2402.03970

\subsection{Implementation}\label{Sec: Implementation}

Algorithm \ref{Algo: DML} is a general recipe for computing the debiased-GMM estimator $\hat\theta_n$ with optimal weighting and its variance estimator $\hat V_{\theta,n}$. It starts with splitting the sample into $K$ folds. As in \citet{ChernozhukovChetverikovDemirerDufloHansenNeweyRobins2018Double,ChernozhukovEscancianoIchimuraNewey2022LocalRobust,ChernozhukovNeweySingh2022Automatic}, the general theory dictates that $K$ be small so that the sample size in each fold remains relatively large. In practice, one may choose $K=5$ or $K=10$ as recommended by \citet{ChernozhukovEscancianoIchimuraNewey2022LocalRobust,ChernozhukovNeweySingh2022Automatic}. Given the split sample, we then construct preliminary estimators for the parameters $\gamma_0$, $\alpha_0$ and possibly also for $\theta_0$ which may enter both the weighting matrix $\Omega_0=\Sigma_0^{-1}$ and the first step influence function $\phi$. In just-identified models (i.e., $d_\theta=d_g$), there is no need to obtain the preliminary estimator $\tilde\theta_{-k}$ if $\phi$ does not involve $\theta_0$.  
 
\begin{algorithm}% [t]
\caption{Automatic DML via Riesz Regression}\label{Algo: DML}
\SetAlgoLined

\KwIn{$\{X_i\}_{i=1}^n$, $\Gamma$, $K$ \Comment*[f]{Data,  constraint set of $\gamma_0$, and $\#$ of folds}\newline 
      $g,\phi$  \Comment*[f]{Moment function, first step influence function}}
      
\KwOut{Debiased GMM estimator $\hat\theta_n$ and its variance estimator $\hat V_{\theta,n}$}      

\BlankLine

Partition $I\equiv\{1,\ldots,n\}=\bigcup_{k=1}^K I_k$ \Comment*[r]{Sample splitting}

\BlankLine

\For{$k\leftarrow 1$ \KwTo $K$} 
{\Comment*[l]{Obtain preliminary estimators based on $\{X_i\colon i\in I\backslash I_k\}$}
$\hat\gamma_{-k}\leftarrow$ Construct an estimator $\hat\gamma_{-k}\in\Gamma$ of $\gamma_0$ \Comment*[r]{See discussions below}
$\hat\alpha_{-k}\leftarrow$ Run Riesz regression with induced constraints \Comment*[r]{See Section \ref{Sec: Riesz regression}} 
$\tilde\theta_{-k}\leftarrow$ Construct a preliminary estimator $\tilde \theta_{-k}$ of $\theta_0$  \Comment*[r]{Only if necessary} 

\BlankLine

\Comment*[l]{Obtain some intermediate objects for GMM optimization}
$\hat\psi(X_i,\theta)\leftarrow  g(X_i,\theta,\hat\gamma_{-k})+\phi(X_i,\tilde\theta_{-k},\hat\gamma_{-k},\hat\alpha_{-k})$ for $i\in I_k$ \;
$\hat\psi_n(\theta) \mathrel{+}= \frac{1}{n}\sum_{i\in I_k}\hat\psi(X_i,\theta)$  \Comment*[r]{Orthogonalized sample moment} 
$\hat\Sigma_n \mathrel{+}= \frac{1}{n}\sum_{i\in I_k}\hat\psi(X_i,\tilde\theta_{-k})\hat\psi(X_i,\tilde\theta_{-k})^\intercal$  \Comment*[r]{Set to $I_{d_g}$ if $d_g=d_\theta$} 
}

\BlankLine

\Comment*[l]{Compute $\hat\theta_n$ and its variance estimator}
$\hat\theta_n\leftarrow \min_{\theta\in\Theta} \hat\psi_n(\theta)^\intercal\hat\Sigma_n^{-1}\hat\psi_n(\theta)$ \Comment*[r]{Debiased GMM estimator} 
$\hat G_n  \leftarrow \frac{1}{n}\sum_{k=1}^K\sum_{i\in I_k}\nabla_\theta g(X_i,\hat\theta_n,\hat\gamma_{-k})$  \Comment*[r]{Sample Jacobian matrix} 
$\hat V_{\theta,n}\leftarrow [\hat G_n^\intercal \hat\Sigma_n^{-1}\hat G_n]^{-1}$ \Comment*[r]{Variance estimator}
\end{algorithm}
% The construction of $\tilde\theta_{-k}$: not clear in \citet{ChernozhukovEscancianoIchimuraNewey2022LocalRobust}.

A key step in estimating $\gamma_0$ and $\alpha_0$ is incorporating the relevant shape constraint. A generic strategy is to post-process an unconstrained estimator \citep{BonakdarpourChatterjeeBarberLafferty2018Reshape,ChenChernozhukovFernandezKostyshakLuo2021Shape}. While this approach works well in low dimensional settings, it becomes impractical as dimensionality increases. A strand of the neural network literature instead develops architectures that enforce shape constraints by construction, thereby yielding global guarantees and improved computational tractability. Since these architectures do not appear well known in economics, below we review those employed in our numerical exercises, assuming that the reader has a basic familiarity with neural networks---see, e.g., \citet{GoodfellowBengioCourville2016DL} for an introduction.

%%%%%%%%%%%%%%%%%%%%%% Draw neural networks %%%%%%%%%%%%%%%%%%%%%%%%%%%
% ---------- Common TikZ styles ----------
\tikzset{layer sep/.store in=\LayerSep,layer sep=0.70,
         neuron/.style={circle,draw,minimum size=5.9mm,inner sep=0pt},
         networkblock/.style={draw,minimum width=3.2cm,minimum height=1.1cm,
                              align=center,font=\footnotesize,inner sep=3pt},
         networkheadblock/.style={draw,minimum width=4.1cm,minimum height=3.15cm,
                                  align=center,font=\footnotesize,inner sep=3pt},
         nodeunit/.style={minimum size=5.9mm,inner sep=0pt,
                          text height=1.25ex,text depth=1.25ex,font=\small},
         labin/.style={font=\footnotesize,inner sep=0pt},
         paneltitle/.style={font=\footnotesize,inner sep=0pt},
         conn/.style={-{Stealth[length=1.3mm,width=0.9mm]},ultra thin},
         network dots/.pic={
           \foreach \dy in {-0.08,0,0.08}{\fill (0,\dy) circle[radius=0.45pt];}
         }}
% Fully connect two comma-separated node lists
\newcommand{\FullyConnect}[2]{%
  \foreach \fromnode in {#1}{%
    \foreach \tonode in {#2}{%
      \edef\DrawConnection{\noexpand\draw[conn] (\fromnode) -- (\tonode); }%
      \DrawConnection
    }%
  }%
}
\makeatletter
\pgfplotsset{
  colormap/BuGn-9,
  auto title/.style={
    title={(\@alph{\pgfplots@group@current@plot}) #1},
  },
  NetworkGroup/.style={
    group style={horizontal sep=0pt},
    width=0.46\textwidth,
    height=0.40\textwidth,
    xmin=-2.0,xmax=2.0,
    ymin=-1.45,ymax=2.85,
    hide axis,
    clip=false,
    scale only axis,
    title style={paneltitle,at={(axis description cs:0.5,-0.11)},anchor=north},
  },
}
\makeatother

% ============================================================
% Figure 1: fully monotone/convex network - BuGn palette
% ============================================================
\begin{figure}[H]
\centering
\begin{tikzpicture}
\begin{groupplot}[NetworkGroup,group style={group name=net,group size=2 by 1}]
% Monotone neural network
\nextgroupplot[auto title={Monotonic Neural Network}]
\pgfplotsset{colormap/BuGn-9}

% Input layer
\node[neuron,fill=BuGn-B] (mnnx1) at (axis cs:-1.7,1.30) {};
\node[neuron,fill=BuGn-B] (mnnx2) at (axis cs:-1.7,0.60) {};
\node[nodeunit]                 (mnnxd) at (axis cs:-1.7,-0.10) {$\vcenter{\hbox{$\vdots$}}$};
\node[neuron,fill=BuGn-B] (mnnxp) at (axis cs:-1.7,-0.80) {};
\node[labin] at (mnnx1.center) {$z_1$};
\node[labin] at (mnnx2.center) {$z_2$};
\node[labin] at (mnnxp.center) {$z_{d_z}$};

% Hidden layer 1
\node[neuron,fill=BuGn-E] (mnnm1a) at (axis cs:-0.85,1.65) {};
\node[neuron,fill=BuGn-E] (mnnm1b) at (axis cs:-0.85,0.95) {};
\node[nodeunit]                 (mnnm1d) at (axis cs:-0.85,0.25) {$\vcenter{\hbox{$\vdots$}}$};
\node[neuron,fill=BuGn-E] (mnnm1c) at (axis cs:-0.85,-0.45) {};
\node[neuron,fill=BuGn-E] (mnnm1e) at (axis cs:-0.85,-1.15) {};

% Hidden layer 2
\node[neuron,fill=BuGn-G] (mnnm2a) at (axis cs:0,1.65) {};
\node[neuron,fill=BuGn-G] (mnnm2b) at (axis cs:0,0.95) {};
\node[nodeunit]                 (mnnm2d) at (axis cs:0,0.25) {$\vcenter{\hbox{$\vdots$}}$};
\node[neuron,fill=BuGn-G] (mnnm2c) at (axis cs:0,-0.45) {};
\node[neuron,fill=BuGn-G] (mnnm2e) at (axis cs:0,-1.15) {};

% Hidden layer 3
\node[neuron,fill=BuGn-I] (mnnm3a) at (axis cs:0.85,1.65) {};
\node[neuron,fill=BuGn-I] (mnnm3b) at (axis cs:0.85,0.95) {};
\node[nodeunit]                 (mnnm3d) at (axis cs:0.85,0.25) {$\vcenter{\hbox{$\vdots$}}$};
\node[neuron,fill=BuGn-I] (mnnm3c) at (axis cs:0.85,-0.45) {};
\node[neuron,fill=BuGn-I] (mnnm3e) at (axis cs:0.85,-1.15) {};

% Output layer
\node[neuron,fill=BuGn-K] (mnnyout) at (axis cs:1.7,0.25) {};
\node[labin] at (mnnyout.center) {$f(z)$};

% Connections
\FullyConnect{mnnx1,mnnx2,mnnxp}{mnnm1a,mnnm1b,mnnm1c,mnnm1e}
\FullyConnect{mnnm1a,mnnm1b,mnnm1c,mnnm1e}{mnnm2a,mnnm2b,mnnm2c,mnnm2e}
\FullyConnect{mnnm2a,mnnm2b,mnnm2c,mnnm2e}{mnnm3a,mnnm3b,mnnm3c,mnnm3e}
\FullyConnect{mnnm3a,mnnm3b,mnnm3c,mnnm3e}{mnnyout}

% Layer labels
\node[labin,anchor=north] at (axis cs:-1.7,2.65) {Inputs};
\node[labin,anchor=north] at (axis cs:0,2.65) {Hidden layers};
\node[labin,anchor=base] at (axis cs:-0.85,2.12) {$\sigma$};
\node[labin,anchor=base] at (axis cs:0,2.12) {$\sigma^\dag$};
\node[labin,anchor=base] at (axis cs:0.85,2.12) {$\sigma$};
\node[labin,anchor=north] at (axis cs:1.7,2.65) {Output};

% Convex branch
\nextgroupplot[auto title={Convex Neural Network}]
\pgfplotsset{colormap/BuGn-9}

% Inputs
\node[neuron,fill=BuGn-B] (cnnx1) at (axis cs:-1.7,1.30) {};
\node[neuron,fill=BuGn-B] (cnnx2) at (axis cs:-1.7,0.60) {};
\node[nodeunit]                 (cnnxd) at (axis cs:-1.7,-0.10) {$\vcenter{\hbox{$\vdots$}}$};
\node[neuron,fill=BuGn-B] (cnnxp) at (axis cs:-1.7,-0.80) {};
\node[labin] at (cnnx1.center) {$z_{1}$};
\node[labin] at (cnnx2.center) {$z_{2}$};
\node[labin] at (cnnxp.center) {$z_{d_z}$};

% Affine nodes
\node[neuron,fill=BuGn-G] (cnna1) at (axis cs:0,1.65) {};
\node[neuron,fill=BuGn-G] (cnna2) at (axis cs:0,0.95) {};
\node[nodeunit]                 (cnnad) at (axis cs:0,0.25) {$\vcenter{\hbox{$\vdots$}}$};
\node[neuron,fill=BuGn-G] (cnnaKm) at (axis cs:0,-0.45) {};
\node[neuron,fill=BuGn-G] (cnnaK) at (axis cs:0,-1.15) {};

% Max or log-sum-exp outputs
\node[neuron,fill=BuGn-K] (cnnhout) at (axis cs:1.7,0.25) {};
\node[labin] at (cnnhout.center) {$f(z)$};

% Connections
\FullyConnect{cnnx1,cnnx2,cnnxp}{cnna1,cnna2,cnnaKm,cnnaK}
\FullyConnect{cnna1,cnna2,cnnaKm,cnnaK}{cnnhout}

% Layer labels
\node[labin,anchor=north] at (axis cs:-1.7,2.65) {Inputs};
\node[labin,anchor=north] at (axis cs:0,2.65) {Affine nodes};
\node[labin,anchor=north east,align=center] at (axis cs:2.0,2.65) {Max/LSE};
\end{groupplot}
\end{tikzpicture}
\caption{Left figure: The monotone neural network of \citet{SartorSinigagliaSusto2025MNN} where the weights are nonnegative. Right figure: The max-affine architecture and the log-sum-exp (LSE) architecture for convexity \citep{CalafioreGaubertPossieri2019Convex}.}
\label{Fig: MNN CNN full}
\end{figure}

\begin{figure}[H]
\centering
\pgfmathsetlengthmacro{\ShapeBoxHeight}{2*\LayerSep*1.35cm}
\def\BranchLabelGap{1.5mm}
\begin{tikzpicture}[x=1.55cm,y=1.35cm,scale=0.85,transform shape]
% ----- Inputs -----
\node[neuron,fill=BuGn-A] (xC1) at (0,{1.5*\LayerSep+\LayerSep}) {};
\node[neuron,fill=BuGn-A] (xCp) at (0,{1.5*\LayerSep-\LayerSep}) {};
\pic at ($(xC1)!0.5!(xCp)$) {network dots};
\node[labin] at (xC1.center) {$z^{\mathrm{c}}_{1}$};
\node[labin] at (xCp.center) {$z^{\mathrm{c}}_{p_{\mathrm{c}}}$};

\pgfplotsset{colormap/BuPu-9}
\node[neuron,fill=BuPu-B] (xF1) at (0,{-2.5*\LayerSep+\LayerSep}) {};
\node[neuron,fill=BuPu-B] (xFp) at (0,{-2.5*\LayerSep-\LayerSep}) {};
\pic at ($(xF1)!0.5!(xFp)$) {network dots};
\node[labin] at (xF1.center) {$z^{\mathrm{f}}_{1}$};
\node[labin] at (xFp.center) {$z^{\mathrm{f}}_{p_{\mathrm{f}}}$};
\pgfplotsset{colormap/BuGn-9}

\node[labin,rotate=90,anchor=center] at (-0.4,{1.5*\LayerSep}) {Constrained inputs};
\node[labin,rotate=90,anchor=center] at (-0.4,{-2.5*\LayerSep}) {Free inputs};

% ----- Constrained branch hidden layers and outputs -----
\node[networkblock,minimum height=\ShapeBoxHeight,fill=BuGn-F] (shapeHidden) at (2,{1.5*\LayerSep})
  {Hidden layers};
\node[neuron,fill=BuGn-G] (m4a) at (4,{3*\LayerSep}) {};
\node[neuron,fill=BuGn-G] (m4b) at (4,{2*\LayerSep}) {};
\node[neuron,fill=BuGn-G] (m4c) at (4,0) {};
\pic at ($(m4b)!0.5!(m4c)$) {network dots};

% ----- Free branch -----
\pgfplotsset{colormap/BuPu-9}
\foreach \y/\n in {-1/a,-2/b,-4/c}{\node[neuron,fill=BuPu-E] (n1\n) at (1,{\y*\LayerSep}) {};}
\foreach \y/\n in {-1/a,-2/b,-4/c}{\node[neuron,fill=BuPu-G] (n2\n) at (2,{\y*\LayerSep}) {};}
\foreach \y/\n in {-1/a,-2/b,-4/c}{\node[neuron,fill=BuPu-I] (n3\n) at (3,{\y*\LayerSep}) {};}
\foreach \y/\n in {-1/a,-2/b,-4/c}{\node[neuron,fill=BuPu-K] (n4\n) at (4,{\y*\LayerSep}) {};}
\foreach \above/\below in {n1b/n1c,n2b/n2c,n3b/n3c,n4b/n4c}{
  \pic at ($(\above)!0.5!(\below)$) {network dots};
}

% ----- Connections -----
\draw[conn] (xC1) -- (shapeHidden.west);
\draw[conn] (xCp) -- (shapeHidden.west);
\foreach \tonode in {m4a,m4b,m4c}{\draw[conn] (shapeHidden.east) -- (\tonode);}
\FullyConnect{xF1,xFp}{n1a,n1b,n1c}
\FullyConnect{n1a,n1b,n1c}{n2a,n2b,n2c}
\FullyConnect{n2a,n2b,n2c}{n3a,n3b,n3c}
\FullyConnect{n3a,n3b,n3c}{n4a,n4b,n4c}

% ----- Constrained head hidden layers and output -----
\pgfplotsset{colormap/BuGn-9}
\node[networkheadblock,fill=BuGn-I] (headHidden) at (6.5,{-0.5*\LayerSep})
  {Hidden layers};
\node[neuron,fill=BuGn-K] (y2) at (9,{-0.5*\LayerSep}) {};
\node[labin] at (y2.center) {$f(z)$};
\foreach \fromnode in {m4a,m4b,m4c,n4a,n4b,n4c}{\draw[conn] (\fromnode) -- (headHidden.west);}
\draw[conn] (headHidden.east) -- (y2);

% Branch labels
\node[labin,anchor=south] at ($(2,0 |- m4a.north)+(0,\BranchLabelGap)$) {Constrained branch};
\node[labin,anchor=north] at ($(2,0 |- n4c.south)+(0,-\BranchLabelGap)$) {Free branch};
\node[labin,anchor=south] at ($(6.5,0 |- m4a.north)+(0,\BranchLabelGap)$) {Constrained head};
\end{tikzpicture}
\caption{A two-branch neural network: The upper branch is a constrained subnetwork, the lower branch is an unconstrained subnetwork, and the head combines the two subnetworks in a way that ensures the constraint with respect to $(z_1^{\mathrm{c}},\ldots,z_{p_{\mathrm{c}}}^{\mathrm{c}})$.}
\label{Fig: MNN CNN partial}
\end{figure}

The literature primarily focuses on monotonicity and convexity. Figure \ref{Fig: MNN CNN full}(a) presents the monotonic neural network of \citet{SartorSinigagliaSusto2025MNN} in which (i) the weights are constrained to be nonnegative; (ii) the activation functions sequentially alternate between $\sigma\colon\mathbf R\to\mathbf R$ and its point reflection $\sigma^\dag(\cdot)\equiv-\sigma(-\cdot)$; (iii) $\sigma$ is weakly increasing and admits a finite limit at either $\infty$ or $-\infty$ (or both); and (iv) there are at least three hidden layers. Crucially, compared to prior monotonic networks, this architecture permits the use of unbounded activation functions (e.g., ReLU) while also retaining the universal approximation property. In many applications, however, one is interested in partially monotonic functions. Figure \ref{Fig: MNN CNN partial} illustrates a two-branch neural network that is monotonic in a subvector $(z_1^{\mathrm{c}},\ldots,z_{p_{\mathrm{c}}}^{\mathrm{c}})$ of the input $z=(z_1,\ldots,z_{d_z})$.

Convexity requires a more careful treatment. Although a number of partially convex architectures have been developed (see, e.g., \citet{AmosXuKolter2017ICNN} and \citet{KimKim2022Convex}), they do not appear to perform well for inference, as documented in \citet{ChenChoiFang2026SCDL}. We therefore adopt the two-branch convex architecture proposed by \citet{ChenChoiFang2026SCDL}, where the convex branch is a pointwise max-affine subnet with multiple outputs, and the head is similar to the convex branch except that the inputs from the convex branch are constrained to be nonnegative so that the overall architecture is partially convex in $(z_1^{\mathrm{c}},\ldots,z_{p_{\mathrm{c}}}^{\mathrm{c}})$---see Figures \ref{Fig: MNN CNN full}(b) and \ref{Fig: MNN CNN partial}. The same architecture can also accommodate concavity: we estimate the negative of the target function and then reverse the sign of the resulting estimate.

\iffalse
Finally, Algorithm \ref{Algo: DML} depends on how the data is split. In practice, different splitting schemes may lead to different estimates, which may be undesirable. \citet{ChernozhukovChetverikovDemirerDufloHansenNeweyRobins2018Double} recommend median aggregation over multiple splits to alleviate the concern. Specifically, let $\hat\theta_{n,s}$ and $\hat V_{n,s}$ be the output of Algorithm \ref{Algo: DML} relative to a particular sample split indexed by $s=1,\ldots,\mathsf S$. Then median aggregation picks the entry-wise median $\hat\theta_n^*$ of $\{\hat\theta_{n,s}\}_{s=1}^{\mathsf S}$ and the matrix with median spectral norm from the $\mathsf S$ matrices:
\begin{align}
\hat V_{n,1}+(\hat\theta_{n,1}-\hat\theta_n^*)(\hat\theta_{n,1}-\hat\theta_n^*)^\intercal,\ldots,\hat V_{n,\mathsf S}+(\hat\theta_{n,\mathsf S}-\hat\theta_n^*)(\hat\theta_{n,\mathsf S}-\hat\theta_n^*)^\intercal~.
\end{align}
Following \citet{AhrensChernozhukovHansenKozburSchafferWiemann2026DML}, we choose $\mathsf S=5$ in our empirical applications. 
% Beyond its robustness to outliers, median aggregation also reduces estimation risks \citep{ChernozhukovDemirerDufloFernandez2025GML}.  
\fi

\subsection{Simulation Studies}\label{Sec: Sims}

In this section, we conduct Monte Carlo simulations to illustrate our estimation procedure and demonstrate the potential gains of imposing shape constraints. Below, we let $\Phi$ be the standard normal cdf and denote by $\mathrm{Uni}[0,1]$, $\mathrm{Ber}(\varrho)$, and $N(0,1)$ the uniform distribution on $[0,1]$, the Bernoulli distribution with success probability $\varrho$, and the standard normal distribution respectively.   

% STATE THE GOAL OF EACH SIMULATION DESIGN

\subsubsection{Average Treatment Effects}\label{Sec: Sims, ATE}

Let $Z\in\mathbf R^{d_z}$ be a vector of i.i.d.\ entries drawn from $\mathrm{Uni}[0,1]$, $D\sim \mathrm{Ber}(\varrho_0(Z))$ conditional on $Z$, $U\sim N(0,1)$ drawn independently of $D$ and $Z$, and
\begin{align}
Y= \mu_0(Z)+D\tau_0(Z)+U~.
\end{align}
Let $Z_{\mathrm{c}}\in\mathbf R^{d_{\mathrm{c}}}$ and $Z_{\mathrm{f}}\in\mathbf R^{d_{\mathrm{f}}}$ denote the constrained and free inputs respectively, and, for $Z=(Z_{\mathrm{c}}^\intercal,Z_{\mathrm{f}}^\intercal)^\intercal$, set $\varrho_0(Z)=1/[1+\exp\{-Z^\intercal\mathsf a_\varrho\}]$ and
\begin{align}
\mu_0(Z)&={\mathsf a_{\mu,\mathrm{c}}^\intercal} Z_{\mathrm{c}}+\Phi({\mathsf b_{\mu,\mathrm{c}}^\intercal} Z_{\mathrm{c}})+\Phi({\mathsf a_{\mu,\mathrm{f}}^\intercal} Z_{\mathrm{f}}+{\mathsf b_{\mu,\mathrm{f}}^\intercal} q(Z_{\mathrm{f}}))~,\\
\tau_0(Z)&=[{\mathsf a_{\tau,\mathrm{c}}^\intercal} Z_{\mathrm{c}}+\Phi({\mathsf b_{\tau,\mathrm{c}}^\intercal} Z_{\mathrm{c}})]\Phi(\mathsf a_{\tau,\mathrm{f}}^\intercal Z_{\mathrm{f}})~,   
\end{align}
where $q(Z_{\mathrm{f}})$ collects all second-degree polynomial terms of $Z_{\mathrm{f}}$ (squares and pairwise interactions), and each of the vectors $\mathsf a_\varrho$, $\mathsf a_{\mu,\mathrm{c}}$, $\mathsf b_{\mu,\mathrm{c}}$, $\mathsf a_{\mu,\mathrm{f}}$, $\mathsf b_{\mu,\mathrm{f}}$, $\mathsf a_{\tau,\mathrm{c}}$, $\mathsf b_{\tau,\mathrm{c}}$, and $\mathsf a_{\tau,\mathrm{f}}$ consists of i.i.d.\ entries drawn from uniform distributions whose supports are given in Table \ref{Tab: ATE designs}. By design, both $\mu_0(Z)$ and $\tau_0(Z)$ are monotone and concave in $Z_{\mathrm{c}}$. Consequently, $(D,Z)\mapsto\gamma_0(D,Z)=\mu_0(Z)+D\tau_0(Z)$ satisfies the same shape restrictions. We investigate the separate effects of leveraging monotonicity and concavity on the finite-sample performance of the resulting debiased estimators.

\begin{table}[!htbp]
\centering
\caption{Supports of Coefficients for the Baseline ATE Designs}
\label{Tab: ATE designs}
%\sisetup{round-mode=places, round-precision=1}  
\begin{tblr}{
  colsep=1pt,
  rowsep=2pt,
  column{1}={leftsep=2pt},
  column{Z}={rightsep=2pt},
%  column{1-6}={cmd=\num},
  colspec={
    Q[c,m,wd=0.15\linewidth]
    Q[c,m,wd=0.15\linewidth]
    Q[c,m,wd=0.15\linewidth]    
    Q[c,m,wd=0.15\linewidth]
    Q[c,m,wd=0.15\linewidth]
    Q[c,m,wd=0.15\linewidth]
  },
  hline{1,2,Z} = {-}{},
%  row{1}={cmd=},
}
$\mathsf a_\varrho$ & $\mathsf a_{\mu,\mathrm{c}},\mathsf b_{\mu,\mathrm{c}}$ & ${\mathsf a_{\mu,\mathrm{f}}}$ & ${\mathsf b_{\mu,\mathrm{f}}}$ & $\mathsf a_{\tau,\mathrm{c}},\mathsf b_{\tau,\mathrm{c}}$ & ${\mathsf a_{\tau,\mathrm{f}}}$ \\
{}[-0.5,0.5] & [0.2,0.3] & [-0.1,0.1] & [-0.05,0.05] & [0.2,0.3] & [-0.2,0.2] \\
\end{tblr}
\end{table}
 
We also consider deviations from the above baseline design. First, we vary the signal-to-noise ratio by drawing $U \sim N(0,\sigma_U^2)$ with $\sigma_U^2 \in \{0.5,1,5\}$. The case $\sigma_U^2=1$ serves as the baseline, while $\sigma_U^2=0.5$ and $\sigma_U^2=5$ correspond to higher and lower signal-to-noise ratios, respectively. Second, we vary the ``strength'' of monotonicity/concavity by considering three supports for the entries of $\mathsf a_{\mu,\mathrm{c}}$, $\mathsf b_{\mu,\mathrm{c}}$, $\mathsf a_{\tau,\mathrm{c}}$, and $\mathsf b_{\tau,\mathrm{c}}$: $[0.1,0.2]$ (weak), $[0.2,0.3]$ (baseline), and $[0.3,0.4]$ (strong). For brevity, we report only the results for the baseline specification with $\sigma_U^2=1$. The results under $\sigma_U^2\in\{0.5,5\}$ are qualitatively similar, suggesting that our findings are robust to alternative signal-to-noise ratios. These additional results are available upon request.

For each design, we draw $5000$ independent  i.i.d.\ samples of size $1000$, each with $d_z=40$ covariates that comprise $d_{\mathrm{c}}=20$ constrained ones and $d_{\mathrm{f}}=20$ free ones. We then estimate $\gamma_0$ and $\alpha_0$ using partially monotonic/convex neural networks, varying the number of imposed constrained inputs over $p_{\mathrm{c}}=0,1,\ldots,20$. In the partially monotonic architecture, both the monotone and free branches consist of two hidden layers with width $32$ for each and $32$ outputs, while the head network has two hidden layers with the same width for each. In the partially convex architecture, the free branch consists of three hidden layers with width $128$ for each and $8$ outputs, the convex branch consists of $32$ affine nodes with $4$ outputs, while the head has $128$ affine nodes. 

Training of the networks is carried out using the AdamW optimizer, a state-of-the-art variant of stochastic gradient descent. We further incorporate dropout, early stopping, and $\ell^2$-penalization for the sake of regularization. For each sample, we construct $K=5$ folds such that treated and untreated individuals are approximately evenly distributed across the folds. Given existing theoretical and simulation studies (see, e.g., \citet{ChernozhukovNeweyQuintasSyrgkanis2024RieszReg} and references therein), we benchmark our results against DML via Riesz regression but without shape constraints, which corresponds to $p_{\mathrm{c}}=0$ in the plots below.

% Selected from the second row of Draft7 Figures 3--6: $\sigma_U^2=1$.
% The grid uses $t=4$ and includes the endpoint, so $p\in\{0,1,5,9,13,17,20\}$.
\pgfplotstableread{
num_mon bias std mean se coverage length
0	0.009039872	0.072466711	1.381145572	0.071396392	0.952	0.279873857
1	0.000266501	0.070234476	1.372372201	0.070270935	0.947	0.275462063
5	0.002036933	0.069527558	1.374142633	0.069300152	0.9462	0.271656596
9	0.022116964	0.069382357	1.394222664	0.069177566	0.9374	0.27117606
13	0.012232518	0.06857148	1.384338218	0.068721717	0.9472	0.269389129
17	0.016108279	0.068317802	1.388213979	0.068128666	0.9414	0.267064371
20	0.011119099	0.067298628	1.383224799	0.067299259	0.9474	0.263813094
}\MonLM

\pgfplotstableread{
num_mon bias std mean se coverage length
0	0.007204591	0.073929027	2.013739087	0.073076758	0.9482	0.286460891
1	-0.004257891	0.072093505	2.002276606	0.071664658	0.9474	0.280925458
5	-0.002056687	0.070354959	2.004477809	0.069958324	0.9458	0.27423663
9	0.01161494	0.069722074	2.018149436	0.069426529	0.9428	0.272151993
13	0.006463038	0.069340462	2.012997534	0.068891871	0.9464	0.270056133
17	0.010298526	0.068708109	2.016833023	0.068281159	0.9438	0.267662145
20	0.00921005	0.067774701	2.015744546	0.067630856	0.9472	0.265112955
}\MonMM

\pgfplotstableread{
num_mon bias std mean se coverage length
0	0.004767242	0.07633053	2.607255209	0.075294589	0.9508	0.29515479
1	-0.00881513	0.072886081	2.593672837	0.073390909	0.9476	0.287692365
5	-0.006978	0.071773499	2.595509967	0.071036219	0.9408	0.27846198
9	0.003671135	0.07078449	2.606159101	0.070159375	0.9474	0.27502475
13	0.000102286	0.069905226	2.602590253	0.069331072	0.9492	0.2717778
17	0.002865768	0.069523704	2.605353734	0.0687721	0.9446	0.269586633
20	0.006670849	0.068737326	2.609158815	0.068239404	0.9466	0.267498463
}\MonHM

\pgfplotstableread{
num_mon bias std mean se coverage length
0	0.009039872	0.072466711	1.381145572	0.071396392	0.952	0.279873857
1	0.027490055	0.070429752	1.399595755	0.070936667	0.9348	0.278071734
5	0.013296048	0.069357202	1.385401748	0.069994264	0.9502	0.274377515
9	0.006107393	0.069132457	1.378213093	0.069608902	0.9466	0.272866895
13	0.003929105	0.068376285	1.376034806	0.069223481	0.9496	0.271356047
17	0.003258974	0.067923819	1.375364674	0.068949095	0.9526	0.270280452
20	0.005902683	0.067105004	1.378008383	0.068780165	0.951	0.269618247
}\ConLM

\pgfplotstableread{
num_mon bias std mean se coverage length
0	0.007204591	0.073929027	2.013739087	0.073076758	0.9482	0.286460891
1	0.026496917	0.071428294	2.033031414	0.07158757	0.9344	0.280623275
5	0.010887851	0.070784223	2.017422348	0.070813949	0.9508	0.277590678
9	0.003194483	0.070317319	2.009728979	0.070465918	0.9464	0.276226399
13	0.004310014	0.069594818	2.01084451	0.070066682	0.9494	0.274661395
17	0.005456295	0.068750468	2.011990792	0.0697093	0.9518	0.273260455
20	0.007150389	0.068319208	2.013684886	0.069475436	0.952	0.272343709
}\ConMM

\pgfplotstableread{
num_mon bias std mean se coverage length
0	0.004767242	0.07633053	2.607255209	0.075294589	0.9508	0.29515479
1	0.02398886	0.072251341	2.626476827	0.072069217	0.9382	0.282511329
5	0.005773408	0.072026708	2.608261374	0.071776503	0.948	0.281363892
9	-0.002663477	0.071276426	2.599824489	0.071464746	0.9458	0.280141804
13	0.003718299	0.070855048	2.606206266	0.071112049	0.9472	0.278759232
17	0.008086438	0.069885816	2.610574405	0.071041648	0.9532	0.278483261
20	0.008900983	0.069606871	2.611388949	0.070940508	0.9522	0.27808679
}\ConHM

% Define some tikz styles
\pgfplotsset{
  GroupCommon/.style={
    group style={horizontal sep=0.8cm,vertical sep=1cm},
    scaled y ticks=false,
    xmin=0, xmax=20,
    xtick={0,1,5,9,13,17,20},
    xlabel={$p_{\mathrm{c}}$},
    x label style={at={(axis description cs:0.95,0)},anchor=south,font=\footnotesize},
    tick label style={/pgf/number format/fixed,font=\footnotesize},
    yticklabel style={rotate=90,/pgf/number format/fixed,/pgf/number format/precision=3,/pgf/number format/fixed zerofill},
    enlarge x limits={abs=0.5},
    enlarge y limits,
    legend style={cells={align=center},column sep=3pt,row sep=3pt,draw=none},
    cycle list={
      {smooth,tension=0.5,color=Paired-B, mark=10-pointed star,mark size=1.75pt,line width=0.5pt},
      {smooth,tension=0.5,color=Dark2-B, mark=halfsquare*,mark size=1.75pt,line width=0.5pt},
      {smooth,tension=0.5,color=Paired-D, mark=halfcircle*,mark size=1.75pt,line width=0.5pt},
      {smooth,tension=0.5,color=Paired-J, mark=triangle*,mark size=1.75pt,line width=0.5pt},
    }
    },
}

%%% Results: selected second row from the 3x3 monotonicity groupplot
\begin{figure}[!ht]
\centering
\begin{tikzpicture}[scale=0.62, transform shape]
% Left axes: Bias and Coverage
\begin{groupplot}[GroupCommon,
    group style={group name=mon,group size=3 by 2,vertical sep=1.45cm},
    axis y line*=left,
]
\nextgroupplot[
    ymin=-0.069096467,ymax=0.069096467,ytick={-0.069096467,0,0.069096467},
    y axis line style={color=Paired-B},
    ytick style={color=Paired-B},
    every y tick label/.append style={color=Paired-B},
    legend style={legend to name=LegBias}
]
\node[anchor=north,align=center] at (axis description cs:0.5,0.99)
  {Shape Strength: $[0.1,0.2]$};
\addplot[color=Paired-B,densely dashed,forget plot] coordinates {(0,0) (20,0)};
\addplot table[x=num_mon,y=bias] from \MonLM;
\addlegendentry{Bias (left axis)}
\nextgroupplot[
    ymin=-0.069990040,ymax=0.069990040,ytick={-0.069990040,0,0.069990040},
    y axis line style={color=Paired-B},
    ytick style={color=Paired-B},
    every y tick label/.append style={color=Paired-B},
]
\node[anchor=north,align=center] at (axis description cs:0.5,0.99)
  {Shape Strength: $[0.2,0.3]$};
\addplot[color=Paired-B,densely dashed,forget plot] coordinates {(0,0) (20,0)};
\addplot table[x=num_mon,y=bias] from \MonMM;
\nextgroupplot[
    ymin=-0.071340609,ymax=0.071340609,ytick={-0.071340609,0,0.071340609},
    y axis line style={color=Paired-B},
    ytick style={color=Paired-B},
    every y tick label/.append style={color=Paired-B},
]
\node[anchor=north,align=center] at (axis description cs:0.5,0.99)
  {Shape Strength: $[0.3,0.4]$};
\addplot[color=Paired-B,densely dashed,forget plot] coordinates {(0,0) (20,0)};
\addplot table[x=num_mon,y=bias] from \MonHM;
\nextgroupplot[
    ymin=0.9,ymax=1.0,ytick={0.9,0.95,1.0},yticklabels={0.9,0.95,1},
    y axis line style={color=Paired-D},
    ytick style={color=Paired-D},
    every y tick label/.append style={color=Paired-D},
    cycle list shift=2,
    legend style={legend to name=LegCov}
]
\node[anchor=north,align=center] at (axis description cs:0.5,0.99)
  {Shape Strength: $[0.1,0.2]$};
\addplot[color=Paired-D,densely dashed,forget plot] coordinates {(0,0.95) (20,0.95)};
\addplot table[x=num_mon,y=coverage] from \MonLM;
\addlegendentry{Coverage of CI (left axis)}
\nextgroupplot[
    ymin=0.9,ymax=1.0,ytick={0.9,0.95,1.0},yticklabels={0.9,0.95,1},
    y axis line style={color=Paired-D},
    ytick style={color=Paired-D},
    every y tick label/.append style={color=Paired-D},
    cycle list shift=2
]
\node[anchor=north,align=center] at (axis description cs:0.5,0.99)
  {Shape Strength: $[0.2,0.3]$};
\addplot[color=Paired-D,densely dashed,forget plot] coordinates {(0,0.95) (20,0.95)};
\addplot table[x=num_mon,y=coverage] from \MonMM;
\nextgroupplot[
    ymin=0.9,ymax=1.0,ytick={0.9,0.95,1.0},yticklabels={0.9,0.95,1},
    y axis line style={color=Paired-D},
    ytick style={color=Paired-D},
    every y tick label/.append style={color=Paired-D},
    cycle list shift=2
]
\node[anchor=north,align=center] at (axis description cs:0.5,0.99)
  {Shape Strength: $[0.3,0.4]$};
\addplot[color=Paired-D,densely dashed,forget plot] coordinates {(0,0.95) (20,0.95)};
\addplot table[x=num_mon,y=coverage] from \MonHM;
\end{groupplot}
% Right axes: Standard Deviation and CI Length
\begin{groupplot}[GroupCommon,
    group style={group size=3 by 2,vertical sep=1.45cm},
    axis y line*=right,
    axis x line=none,
    hide x axis,
]
\nextgroupplot[
    ymin=0.067298628,ymax=0.072466711,ytick={0.067298628,0.069882670,0.072466711},
    y axis line style={color=Dark2-B},
    ytick style={color=Dark2-B},
    every y tick label/.append style={color=Dark2-B,/pgf/number format/precision=3,/pgf/number format/fixed zerofill},
    cycle list shift=1,
    legend style={legend to name=LegStd}
]
\addplot table[x=num_mon,y=std] from \MonLM;
\addlegendentry{Std (right axis)}
\nextgroupplot[
    ymin=0.067774701,ymax=0.073929027,ytick={0.067774701,0.070851864,0.073929027},
    y axis line style={color=Dark2-B},
    ytick style={color=Dark2-B},
    every y tick label/.append style={color=Dark2-B,/pgf/number format/precision=3,/pgf/number format/fixed zerofill},
    cycle list shift=1
]
\addplot table[x=num_mon,y=std] from \MonMM;
\nextgroupplot[
    ymin=0.068737326,ymax=0.076330530,ytick={0.068737326,0.072533928,0.076330530},
    y axis line style={color=Dark2-B},
    ytick style={color=Dark2-B},
    every y tick label/.append style={color=Dark2-B,/pgf/number format/precision=3,/pgf/number format/fixed zerofill},
    cycle list shift=1
]
\addplot table[x=num_mon,y=std] from \MonHM;
\nextgroupplot[
    ymin=0.263813094,ymax=0.279873857,ytick={0.263813094,0.271843476,0.279873857},
    y axis line style={color=Paired-J},
    ytick style={color=Paired-J},
    every y tick label/.append style={color=Paired-J,/pgf/number format/precision=3,/pgf/number format/fixed zerofill},
    cycle list shift=3,
    legend style={legend to name=LegLen}
]
\addplot table[x=num_mon,y=length] from \MonLM;
\addlegendentry{Length of CI (right axis)}
\nextgroupplot[
    ymin=0.265112955,ymax=0.286460891,ytick={0.265112955,0.275786923,0.286460891},
    y axis line style={color=Paired-J},
    ytick style={color=Paired-J},
    every y tick label/.append style={color=Paired-J,/pgf/number format/precision=3,/pgf/number format/fixed zerofill},
    cycle list shift=3
]
\addplot table[x=num_mon,y=length] from \MonMM;
\nextgroupplot[
    ymin=0.267498463,ymax=0.295154790,ytick={0.267498463,0.281326627,0.295154790},
    y axis line style={color=Paired-J},
    ytick style={color=Paired-J},
    every y tick label/.append style={color=Paired-J,/pgf/number format/precision=3,/pgf/number format/fixed zerofill},
    cycle list shift=3
]
\addplot table[x=num_mon,y=length] from \MonHM;
\end{groupplot}
\node[anchor=north] at ($(mon c2r1.south)+(0,-0.45cm)$) {%
  \pgfplotslegendfromname{LegBias}\hspace{1.2em}\pgfplotslegendfromname{LegStd}%
};
\node[anchor=north] at ($(mon c2r2.south)+(0,-0.5cm)$) {%
  \pgfplotslegendfromname{LegCov}\hspace{1.2em}\pgfplotslegendfromname{LegLen}%
};
\end{tikzpicture}
\caption{Debiased ATE under Monotonicity.} \label{Fig: NP Mon}
\end{figure}

%%% Convexity results: selected second row from the 3x3 convexity groupplot
\begin{figure}[!ht]
\centering
\begin{tikzpicture}[scale=0.62, transform shape]
% Left axes: Bias and Coverage
\begin{groupplot}[GroupCommon,
    group style={group name=con,group size=3 by 2,vertical sep=1.45cm},
    xlabel={$p_{\mathrm{c}}$},
    axis y line*=left,
]
\nextgroupplot[
    ymin=-0.068962521,ymax=0.068962521,ytick={-0.068962521,0,0.068962521},
    y axis line style={color=Paired-B},
    ytick style={color=Paired-B},
    every y tick label/.append style={color=Paired-B},
    legend style={legend to name=LegConBias}
]
\node[anchor=north,align=center] at (axis description cs:0.5,0.99)
  {Shape Strength: $[0.1,0.2]$};
\addplot[color=Paired-B,densely dashed,forget plot] coordinates {(0,0) (20,0)};
\addplot table[x=num_mon,y=bias] from \ConLM;
\addlegendentry{Bias (left axis)}
\nextgroupplot[
    ymin=-0.069744239,ymax=0.069744239,ytick={-0.069744239,0,0.069744239},
    y axis line style={color=Paired-B},
    ytick style={color=Paired-B},
    every y tick label/.append style={color=Paired-B},
]
\node[anchor=north,align=center] at (axis description cs:0.5,0.99)
  {Shape Strength: $[0.2,0.3]$};
\addplot[color=Paired-B,densely dashed,forget plot] coordinates {(0,0) (20,0)};
\addplot table[x=num_mon,y=bias] from \ConMM;
\nextgroupplot[
    ymin=-0.071204162,ymax=0.071204162,ytick={-0.071204162,0,0.071204162},
    y axis line style={color=Paired-B},
    ytick style={color=Paired-B},
    every y tick label/.append style={color=Paired-B},
]
\node[anchor=north,align=center] at (axis description cs:0.5,0.99)
  {Shape Strength: $[0.3,0.4]$};
\addplot[color=Paired-B,densely dashed,forget plot] coordinates {(0,0) (20,0)};
\addplot table[x=num_mon,y=bias] from \ConHM;
\nextgroupplot[
    ymin=0.9,ymax=1.0,ytick={0.9,0.95,1.0},yticklabels={0.9,0.95,1},
    y axis line style={color=Paired-D},
    ytick style={color=Paired-D},
    every y tick label/.append style={color=Paired-D},
    cycle list shift=2,
    legend style={legend to name=LegConCov}
]
\node[anchor=north,align=center] at (axis description cs:0.5,0.99)
  {Shape Strength: $[0.1,0.2]$};
\addplot[color=Paired-D,densely dashed,forget plot] coordinates {(0,0.95) (20,0.95)};
\addplot table[x=num_mon,y=coverage] from \ConLM;
\addlegendentry{Coverage of CI (left axis)}
\nextgroupplot[
    ymin=0.9,ymax=1.0,ytick={0.9,0.95,1.0},yticklabels={0.9,0.95,1},
    y axis line style={color=Paired-D},
    ytick style={color=Paired-D},
    every y tick label/.append style={color=Paired-D},
    cycle list shift=2
]
\node[anchor=north,align=center] at (axis description cs:0.5,0.99)
  {Shape Strength: $[0.2,0.3]$};
\addplot[color=Paired-D,densely dashed,forget plot] coordinates {(0,0.95) (20,0.95)};
\addplot table[x=num_mon,y=coverage] from \ConMM;
\nextgroupplot[
    ymin=0.9,ymax=1.0,ytick={0.9,0.95,1.0},yticklabels={0.9,0.95,1},
    y axis line style={color=Paired-D},
    ytick style={color=Paired-D},
    every y tick label/.append style={color=Paired-D},
    cycle list shift=2
]
\node[anchor=north,align=center] at (axis description cs:0.5,0.99)
  {Shape Strength: $[0.3,0.4]$};
\addplot[color=Paired-D,densely dashed,forget plot] coordinates {(0,0.95) (20,0.95)};
\addplot table[x=num_mon,y=coverage] from \ConHM;
\end{groupplot}
% Right axes: Standard Deviation and CI Length
\begin{groupplot}[GroupCommon,
    group style={group size=3 by 2,vertical sep=1.45cm},
    xlabel={$p_{\mathrm{c}}$},
    axis y line*=right,
    axis x line=none,
    hide x axis,
]
\nextgroupplot[
    ymin=0.067105004,ymax=0.072466711,ytick={0.067105004,0.069785858,0.072466711},
    y axis line style={color=Dark2-B},
    ytick style={color=Dark2-B},
    every y tick label/.append style={color=Dark2-B,/pgf/number format/precision=3,/pgf/number format/fixed zerofill},
    cycle list shift=1,
    legend style={legend to name=LegConStd}
]
\addplot table[x=num_mon,y=std] from \ConLM;
\addlegendentry{Std (right axis)}
\nextgroupplot[
    ymin=0.068319208,ymax=0.073929027,ytick={0.068319208,0.071124118,0.073929027},
    y axis line style={color=Dark2-B},
    ytick style={color=Dark2-B},
    every y tick label/.append style={color=Dark2-B,/pgf/number format/precision=3,/pgf/number format/fixed zerofill},
    cycle list shift=1
]
\addplot table[x=num_mon,y=std] from \ConMM;
\nextgroupplot[
    ymin=0.069606871,ymax=0.076330530,ytick={0.069606871,0.072968701,0.076330530},
    y axis line style={color=Dark2-B},
    ytick style={color=Dark2-B},
    every y tick label/.append style={color=Dark2-B,/pgf/number format/precision=3,/pgf/number format/fixed zerofill},
    cycle list shift=1
]
\addplot table[x=num_mon,y=std] from \ConHM;
\nextgroupplot[
    ymin=0.269618247,ymax=0.279873857,ytick={0.269618247,0.274746052,0.279873857},
    y axis line style={color=Paired-J},
    ytick style={color=Paired-J},
    every y tick label/.append style={color=Paired-J,/pgf/number format/precision=3,/pgf/number format/fixed zerofill},
    cycle list shift=3,
    legend style={legend to name=LegConLen}
]
\addplot table[x=num_mon,y=length] from \ConLM;
\addlegendentry{Length of CI (right axis)}
\nextgroupplot[
    ymin=0.272343709,ymax=0.286460891,ytick={0.272343709,0.279402300,0.286460891},
    y axis line style={color=Paired-J},
    ytick style={color=Paired-J},
    every y tick label/.append style={color=Paired-J,/pgf/number format/precision=3,/pgf/number format/fixed zerofill},
    cycle list shift=3
]
\addplot table[x=num_mon,y=length] from \ConMM;
\nextgroupplot[
    ymin=0.278086790,ymax=0.295154790,ytick={0.278086790,0.286620790,0.295154790},
    y axis line style={color=Paired-J},
    ytick style={color=Paired-J},
    every y tick label/.append style={color=Paired-J,/pgf/number format/precision=3,/pgf/number format/fixed zerofill},
    cycle list shift=3
]
\addplot table[x=num_mon,y=length] from \ConHM;
\end{groupplot}
\node[anchor=north] at ($(con c2r1.south)+(0,-0.45cm)$) {%
  \pgfplotslegendfromname{LegConBias}\hspace{1.2em}\pgfplotslegendfromname{LegConStd}%
};
\node[anchor=north] at ($(con c2r2.south)+(0,-0.5cm)$) {%
  \pgfplotslegendfromname{LegConCov}\hspace{1.2em}\pgfplotslegendfromname{LegConLen}%
};
\end{tikzpicture}
\caption{Debiased ATE under Concavity.} \label{Fig: NP Con}
\end{figure}

Figures \ref{Fig: NP Mon} and \ref{Fig: NP Con} summarize the ATE simulations for the baseline signal-to-noise ratio based on different supports for the coefficients $\mathsf a_{\mu,\mathrm{c}}$, $\mathsf b_{\mu,\mathrm{c}}$, $\mathsf a_{\tau,\mathrm{c}}$, and $\mathsf b_{\tau,\mathrm{c}}$. In both monotonicity and concavity designs, the bias remains small relative to sampling variability, coverage is close to the nominal level, and the standard deviation and confidence-interval length generally decrease as more shape restrictions are imposed. These patterns suggest that shape restrictions can serve as substantive regularization, improving estimation precision and inference accuracy while maintaining small bias.

We emphasize that a uniform tuning scheme is used across the number $p_{\mathrm{c}}$ of constrained inputs. In particular, the choices of network architectures, dropout rates, learning rates, early stopping criteria, and penalty levels are neither data-driven nor necessarily optimal. The resulting performance is therefore particularly encouraging. It would be of interest to develop principled guidance for selecting these tuning parameters that both admits theoretical guarantees and performs well empirically.

\subsubsection{Functionals of NPIV Regression}

Let $(D,Z_{\mathrm{c}}^{\intercal},Z_{\mathrm{f}}^{\intercal})^{\intercal}\in\mathbf{R}^{1+d_{\mathrm{c}}+d_{\mathrm{f}}}$ be the vector of endogenous inputs and $(S,W_{\mathrm{c}}^{\intercal},W_{\mathrm{f}}^{\intercal})^{\intercal}\in\mathbf{R}^{1+d_{\mathrm{c}}+d_{\mathrm{f}}}$ the vector of instruments such that
\begin{align}
Y=\gamma_0(D,Z_{\mathrm{c}},Z_{\mathrm{f}})+U,\qquad E[U\mid S, W_{\mathrm{c}}, W_{\mathrm{f}}]=0~.
\end{align}
To build the data generating process, we let $U_0,V_0,U_{\mathrm{c}},U_{\mathrm{f}}$, and $\{V_{\mathrm{c},j},V_{\mathrm{f},j}\colon j\ge 1\}$ be independent standard normal random variables and set  $U=(U_0+U_{\mathrm{c}}+U_{\mathrm{f}})/\sqrt{3}$, 
\begin{align}
D=\Phi\!\left(\frac{\eta U_0+V_0}{\sqrt{1+\eta^2}}\right)~,\,Z_{\mathrm{c},1}=\Phi\!\left(\frac{\eta U_{\mathrm{c}}+V_{\mathrm{c},1}}{\sqrt{1+\eta^2}}\right)~,\,Z_{\mathrm{f},1}=\Phi\!\left(\frac{\eta U_{\mathrm{f}}+V_{\mathrm{f},1}}{\sqrt{1+\eta^2}}\right)~,
\end{align}
$S=\Phi(V_0)$, $W_{\mathrm{c},1}=\Phi(V_{\mathrm{c},1})$, $W_{\mathrm{f},1}=\Phi(V_{\mathrm{f},1})$, and, for $j\ge2$,
\begin{align}
Z_{\mathrm{c},j}=W_{\mathrm{c},j}
    =\Phi\!\left(\frac{\rho(V_0+V_{\mathrm{c},1}+V_{\mathrm{f},1})}{\sqrt{3}}
      +\sqrt{1-\rho^2}\,V_{\mathrm{c},j}\right)~,\notag\\
Z_{\mathrm{f},j}=W_{\mathrm{f},j}
    =\Phi\!\left(\frac{\rho(V_0+V_{\mathrm{c},1}+V_{\mathrm{f},1})}{\sqrt{3}}
      +\sqrt{1-\rho^2}\,V_{\mathrm{f},j}\right)~.      
\end{align}
Intuitively, $\eta$ controls the strength of endogeneity while $\rho$ controls the correlation between $\{D,Z_{\mathrm{c},1},Z_{\mathrm{f},1}\}$ and $\{Z_{\mathrm{c},j},Z_{\mathrm{f},j}\colon j\geq 2\}$. In turn, we set
\begin{align}
    \gamma_0(D,Z_{\mathrm{c}},Z_{\mathrm{f}})
    =D+[\mathsf a_{\mathrm{c}}^\intercal Z_{\mathrm{c}}+\Phi(\mathsf b_{\mathrm{c}}^\intercal Z_{\mathrm{c}})]
       \Phi(\mathsf a_{\mathrm{f}}^\intercal Z_{\mathrm{f}}+\mathsf b_{\mathrm{f}}^\intercal q(Z_{\mathrm{f}}))~,
\end{align}
where $\mathsf a_{\mathrm{c}}$ and $\mathsf b_{\mathrm{c}}$ have i.i.d.\ entries drawn from $\mathrm{Uni}[0.2,0.3]$, $\mathsf a_{\mathrm{f}}$ has i.i.d.\ entries drawn from $\mathrm{Uni}[-0.1,0.1]$, and
$\mathsf b_{\mathrm{f}}$ has i.i.d.\ entries drawn from $\mathrm{Uni}[-0.05,0.05]$. By construction, $\gamma_0$ is monotonic in $Z_{\mathrm{c}}$, and in this NPIV design we focus on monotonicity. 

The parameter of interest is the weighted average derivative: for $Z\equiv(Z_{\mathrm{c}}^{\intercal},Z_{\mathrm{f}}^{\intercal})^\intercal$,
\begin{align}
  \theta_0\equiv E[w(D)\nabla_d\gamma_0(D,Z)]~,
\end{align}
where $w$ is a weight function and $\nabla_d$ indicates the partial derivative with respect to $D$. Intuitively, $\theta_0$ measures the average marginal/treatment effect of a small change in $D$ on the outcome $Y$. Suppose that $D$ is supported on $[a,b]$ with $-\infty<a<b<\infty$ and that $d\mapsto w(d)f_0(d,z)$ vanishes on $a$ and $b$ for (almost) every $z$ with $f_0$ the density of $(D,Z)$. Then we obtain under regularity conditions that 
\begin{align}
  \theta_0 = -E[w(D)\nabla_d\ln(w(D)f_0(D,Z))\gamma_0(D,Z)]~,
\end{align}
which is continuous and linear in $\gamma_0$ if $w(D)\nabla_d\ln(w(D)f_0(D,Z))\in L^2(D,Z)$. 

The main comparative statics vary the endogeneity level $\eta$ from $0.1$ to $1.0$ based on $\rho=0.1$. We set $w$ to be the normalized indicator of the $[0.1,0.9]$ quantile region of $D$ so that $\theta_0=1$. Relative to the ATE designs, the additional unknown object is the projection operator $\Pi_0$ given by $\Pi_0(\delta)=E[\delta(D,Z_{\mathrm{c}},Z_{\mathrm{f}})|S,W_{\mathrm{c}},W_{\mathrm{f}}]$ for any $\delta\in L^2(D,Z_{\mathrm{c}},Z_{\mathrm{f}})$. We obtain $\hat\Pi_n(\delta)$ by a ridge regression of $\delta$ on a vector of basis functions of $(S,W_{\mathrm{c}},W_{\mathrm{f}})$. We consider two constructions of these basis functions. The first one is an adaptively learned basis inspired by \citet{Bruns2025TSML}. Specifically, we fit a deep neural network to predict $Y$ from $(S,W_{\mathrm{c}},W_{\mathrm{f}})$ and employ the activations in its final hidden layer as basis functions. The second one is a specific choice of spline basis functions used in 
\citet[p.~1862]{ChenChenTamer2023NPIV}. 

To ease computation, we consider two specifications, one with $d_{\mathrm{c}}=d_{\mathrm{f}}=1$ and one with $d_{\mathrm{c}}=d_{\mathrm{f}}=5$, and then draw 500 independent i.i.d.\ samples of size 1000 for each design. We estimate the nuisances $\gamma_0$ and $\alpha_0$ using the same partially monotonic neural network
architecture as in Section \ref{Sec: Sims, ATE}, varying the number of imposed constrained inputs over $p_{\mathrm{c}}=0,1,\ldots,d_{\mathrm{c}}$. Training of the networks is carried out using the same AdamW optimizer, dropout, early stopping, and $\ell_2$-penalization as in the ATE simulations. For each sample, we construct $K=5$ folds for cross-fitting. As before, we benchmark our results against DML via Riesz regression but without shape constraints, which corresponds to $p_{\mathrm{c}}=0$ in the plots below.

%Figure \ref{Fig:NPIVMonotonicity} reports the simulation results. Overall, imposing monotonicity substantially improves coverage probability while reducing bias and mean squared error across endogeneity levels relative to the unconstrained benchmark. When $d_{\mathrm{c}}=d_{\mathrm{f}}=1$, adding the single monotonicity restriction raises coverage from about $50\%$ to close to the nominal $95\%$ level. When $d_{\mathrm{c}}=d_{\mathrm{f}}=5$, coverage probabilities increase substantially as more valid monotonicity restrictions are imposed, rising from below $10\%$ for the unconstrained benchmark to roughly 50--80\% when all five monotonicity restrictions are imposed. The bias and mean squared error also decline with higher values of $p_{\mathrm{c}}$, indicating clear gains in estimation accuracy in the high dimensional case. We note that achieving accurate coverage probabilities in high dimensional NPIV models is notoriously difficult---see, e.g., \citet{BennettKallusMaoNeweySyrgkanisUehara2025StrongID} and \citet{Bruns2025TSML} for related simulation evidence. Our results suggest that imposing shape constraints can help mitigate this challenge in these high dimensional inverse problems.

\pgfplotstableread{
eta pm0 pm1
0.1 0.504000000 0.954000000
0.2 0.506000000 0.948000000
0.3 0.498000000 0.948000000
0.4 0.496000000 0.942000000
0.5 0.496000000 0.952000000
0.6 0.510000000 0.940000000
0.7 0.498000000 0.928000000
0.8 0.490000000 0.942000000
0.9 0.504000000 0.942000000
1.0 0.524000000 0.950000000
}\NPIVOneCoverage

\pgfplotstableread{
eta pm0 pm1
0.1 -0.062901740 -0.029790198
0.2 -0.067800841 -0.028454479
0.3 -0.072203688 -0.025911979
0.4 -0.077364539 -0.029073808
0.5 -0.082567237 -0.027291200
0.6 -0.092458112 -0.025538096
0.7 -0.098472011 -0.023261110
0.8 -0.108009344 -0.023605807
0.9 -0.114783656 -0.022755386
1.0 -0.122451760 -0.025255426
}\NPIVOneBias

\pgfplotstableread{
eta pm0 pm1
0.1 0.029913114 0.013890082
0.2 0.030633043 0.014574314
0.3 0.033374373 0.015511749
0.4 0.036130226 0.016587384
0.5 0.040420215 0.017691700
0.6 0.045231303 0.020028298
0.7 0.050188412 0.022012582
0.8 0.055315317 0.023618394
0.9 0.059976672 0.026304121
1.0 0.064264659 0.029091482
}\NPIVOneMSE

\pgfplotstableread{
eta pm0 pm1 pm3 pm5
0.1 0.092000000 0.698000000 0.748000000 0.794000000
0.2 0.074000000 0.688000000 0.742000000 0.810000000
0.3 0.070000000 0.666000000 0.726000000 0.774000000
0.4 0.052000000 0.628000000 0.692000000 0.752000000
0.5 0.042000000 0.622000000 0.694000000 0.708000000
0.6 0.032000000 0.594000000 0.658000000 0.704000000
0.7 0.034000000 0.568000000 0.622000000 0.678000000
0.8 0.032000000 0.542000000 0.588000000 0.654000000
0.9 0.024000000 0.526000000 0.570000000 0.624000000
1.0 0.016000000 0.488000000 0.554000000 0.572000000
}\NPIVFiveCoverage

\pgfplotstableread{
eta pm0 pm1 pm3 pm5
0.1 -0.261582946 -0.057512979 -0.046756125 -0.040874387
0.2 -0.277904062 -0.064951272 -0.049404491 -0.041663732
0.3 -0.302376437 -0.070214754 -0.056650298 -0.049620320
0.4 -0.327251424 -0.075996287 -0.064755355 -0.054990715
0.5 -0.361374658 -0.085708130 -0.075923315 -0.066349879
0.6 -0.399536190 -0.099981222 -0.086775860 -0.078157370
0.7 -0.435843352 -0.115162929 -0.099078348 -0.091002018
0.8 -0.470035469 -0.132825014 -0.115458741 -0.103403743
0.9 -0.501143920 -0.154898285 -0.131658743 -0.117845826
1.0 -0.533352633 -0.178865136 -0.151951597 -0.136498229
}\NPIVFiveBias

\pgfplotstableread{
eta pm0 pm1 pm3 pm5
0.1 0.109574737 0.021298294 0.019640862 0.018855358
0.2 0.119968799 0.023154430 0.020243381 0.018749997
0.3 0.137380778 0.024828199 0.022833140 0.021257430
0.4 0.155227120 0.027946675 0.026323367 0.023673684
0.5 0.179837904 0.031125245 0.029017552 0.028058314
0.6 0.210131340 0.036028697 0.033435727 0.032539123
0.7 0.241718314 0.042074721 0.040455180 0.038274088
0.8 0.274076424 0.049555720 0.047203496 0.044008821
0.9 0.302487071 0.059079170 0.055331926 0.051948342
1.0 0.334651062 0.070926121 0.064312499 0.061423557
}\NPIVFiveMSE

\pgfplotsset{
  NPIVGroupCommon/.style={
    group style={horizontal sep=0.8cm,vertical sep=1.25cm},
    scaled y ticks=false,
    xmin=0.1, xmax=1.0,
    xtick={0.1,0.2,0.3,0.4,0.5,0.6,0.7,0.8,0.9,1.0},
    xlabel={$\eta$},
    x label style={at={(axis description cs:0.95,0)},anchor=south,font=\footnotesize},
    tick label style={/pgf/number format/fixed,font=\footnotesize},
    yticklabel style={rotate=90,/pgf/number format/fixed,/pgf/number format/precision=3,/pgf/number format/fixed zerofill},
    enlarge x limits={abs=0.02},
    enlarge y limits,
    legend columns=-1,
    legend style={cells={align=center},column sep=6pt,row sep=0pt,draw=none},
    cycle list={
      {smooth,tension=0.5,color=Paired-B, mark=10-pointed star,mark size=1.75pt,line width=0.5pt},
      {smooth,tension=0.5,color=Dark2-B, mark=halfsquare*,mark size=1.75pt,line width=0.5pt},
      {smooth,tension=0.5,color=Paired-D, mark=halfcircle*,mark size=1.75pt,line width=0.5pt},
      {smooth,tension=0.5,color=Paired-J, mark=triangle*,mark size=1.75pt,line width=0.5pt},
    }
  },
}

\begin{figure}[!ht]
\centering
\begin{tikzpicture}[scale=0.61, transform shape]
\begin{groupplot}[NPIVGroupCommon,
    group style={group name=npiv,group size=3 by 2,vertical sep=1.45cm},
]
\nextgroupplot[
    title={Coverage},
    ymin=0,ymax=1.05,ytick={0,0.5,0.95},yticklabels={0,0.5,0.95},
    yticklabel style={rotate=0,font=\footnotesize},
    legend style={legend to name=NPIVLegOne}
]
\node[anchor=north,align=center] at (axis description cs:0.5,0.99)
  {$d_{\mathrm{c}}=d_{\mathrm{f}}=1$};
\addplot table[x=eta,y=pm0] from \NPIVOneCoverage;
\addlegendentry{$p_{\mathrm{c}}=0$}
\addplot table[x=eta,y=pm1] from \NPIVOneCoverage;
\addlegendentry{$p_{\mathrm{c}}=1$}
\addplot[densely dashed,black,forget plot] coordinates {(0.1,0.95) (1.0,0.95)};

\nextgroupplot[
    title={Bias},
    ymin=-0.13,ymax=0.01,ytick={-0.12,-0.06,0}
]
\node[anchor=north,align=center] at (axis description cs:0.5,0.99)
  {$d_{\mathrm{c}}=d_{\mathrm{f}}=1$};
\addplot table[x=eta,y=pm0] from \NPIVOneBias;
\addplot table[x=eta,y=pm1] from \NPIVOneBias;
\addplot[densely dashed,black,forget plot] coordinates {(0.1,0) (1.0,0)};

\nextgroupplot[
    title={Mean Squared Error},
    ymin=0,ymax=0.07,ytick={0,0.035,0.07}
]
\node[anchor=north,align=center] at (axis description cs:0.5,0.99)
  {$d_{\mathrm{c}}=d_{\mathrm{f}}=1$};
\addplot table[x=eta,y=pm0] from \NPIVOneMSE;
\addplot table[x=eta,y=pm1] from \NPIVOneMSE;
\addplot[densely dashed,black,forget plot] coordinates {(0.1,0) (1.0,0)};

\nextgroupplot[
    ymin=0,ymax=1.05,ytick={0,0.5,0.95},yticklabels={0,0.5,0.95},
    yticklabel style={rotate=0,font=\footnotesize},
    legend style={legend to name=NPIVLegFive}
]
\node[anchor=north,align=center] at (axis description cs:0.5,0.99)
  {$d_{\mathrm{c}}=d_{\mathrm{f}}=5$};
\addplot table[x=eta,y=pm0] from \NPIVFiveCoverage;
\addlegendentry{$p_{\mathrm{c}}=0$}
\addplot table[x=eta,y=pm1] from \NPIVFiveCoverage;
\addlegendentry{$p_{\mathrm{c}}=1$}
\addplot table[x=eta,y=pm3] from \NPIVFiveCoverage;
\addlegendentry{$p_{\mathrm{c}}=3$}
\addplot table[x=eta,y=pm5] from \NPIVFiveCoverage;
\addlegendentry{$p_{\mathrm{c}}=5$}
\addplot[densely dashed,black,forget plot] coordinates {(0.1,0.95) (1.0,0.95)};

\nextgroupplot[
    ymin=-0.55,ymax=0.02,ytick={-0.5,-0.25,0}
]
\node[anchor=north,align=center] at (axis description cs:0.5,0.99)
  {$d_{\mathrm{c}}=d_{\mathrm{f}}=5$};
\addplot table[x=eta,y=pm0] from \NPIVFiveBias;
\addplot table[x=eta,y=pm1] from \NPIVFiveBias;
\addplot table[x=eta,y=pm3] from \NPIVFiveBias;
\addplot table[x=eta,y=pm5] from \NPIVFiveBias;
\addplot[densely dashed,black,forget plot] coordinates {(0.1,0) (1.0,0)};

\nextgroupplot[
    ymin=0,ymax=0.35,ytick={0,0.175,0.35}
]
\node[anchor=north,align=center] at (axis description cs:0.5,0.99)
  {$d_{\mathrm{c}}=d_{\mathrm{f}}=5$};
\addplot table[x=eta,y=pm0] from \NPIVFiveMSE;
\addplot table[x=eta,y=pm1] from \NPIVFiveMSE;
\addplot table[x=eta,y=pm3] from \NPIVFiveMSE;
\addplot table[x=eta,y=pm5] from \NPIVFiveMSE;
\addplot[densely dashed,black,forget plot] coordinates {(0.1,0) (1.0,0)};
\end{groupplot}
\node[anchor=north] at ($(npiv c2r1.south)+(0,-0.45cm)$) {%
  \pgfplotslegendfromname{NPIVLegOne}%
};
\node[anchor=north] at ($(npiv c2r2.south)+(0,-0.5cm)$) {%
  \pgfplotslegendfromname{NPIVLegFive}%
};
\end{tikzpicture}
\caption{Debiased Functionals of NPIV under Monotonicity: Adaptive Basis.}
\label{Fig:NPIVMonotonicity}
\end{figure}

Figure \ref{Fig:NPIVMonotonicity} reports the simulation results based on the adaptive basis. Overall, imposing monotonicity substantially improves coverage probability while reducing bias and mean squared error across endogeneity levels relative to the unconstrained benchmark. When $d_{\mathrm{c}}=d_{\mathrm{f}}=1$, adding the single monotonicity restriction raises coverage from about $50\%$ to close to the nominal $95\%$ level. When $d_{\mathrm{c}}=d_{\mathrm{f}}=5$, coverage probabilities increase substantially as more monotonicity restrictions are imposed, rising from below $10\%$ for the unconstrained benchmark to roughly 50--80\% when all five monotonicity restrictions are imposed. The bias and mean squared error also decline with higher values of $p_{\mathrm{c}}$, indicating clear gains in estimation accuracy in the high dimensional case. Figure \ref{Fig:NPIVMonotonicitySpline} shows that the spline basis implementation delivers similar findings. We note that achieving accurate coverage probabilities in high dimensional NPIV models is notoriously difficult---see, e.g., \citet{BennettKallusMaoNeweySyrgkanisUehara2025StrongID} and \citet{Bruns2025TSML} for related simulation evidence. Our results suggest that imposing shape constraints can help mitigate this challenge in these high dimensional inverse problems. 

\pgfplotstableread{
eta pm0 pm1
0.1 0.392000000 0.904000000
0.2 0.392000000 0.910000000
0.3 0.376000000 0.906000000
0.4 0.366000000 0.900000000
0.5 0.372000000 0.908000000
0.6 0.348000000 0.890000000
0.7 0.360000000 0.896000000
0.8 0.362000000 0.880000000
0.9 0.362000000 0.894000000
1.0 0.356000000 0.876000000
}\NPIVSplineOneCoverage

\pgfplotstableread{
eta pm0 pm1
0.1 -0.105117650 -0.004186915
0.2 -0.110102383 -0.004988590
0.3 -0.119382818 -0.008409777
0.4 -0.127541303 -0.007055514
0.5 -0.135488984 -0.010703606
0.6 -0.146909398 -0.016768821
0.7 -0.156998746 -0.020608792
0.8 -0.169641960 -0.024061935
0.9 -0.178635922 -0.028447377
1.0 -0.188881501 -0.037457928
}\NPIVSplineOneBias

\pgfplotstableread{
eta pm0 pm1
0.1 0.039067700 0.015671467
0.2 0.041118862 0.014893703
0.3 0.046010553 0.015872737
0.4 0.051120141 0.016665281
0.5 0.054559042 0.017371498
0.6 0.061881182 0.020172121
0.7 0.067125027 0.022230449
0.8 0.073843113 0.025237452
0.9 0.079824932 0.027235375
1.0 0.085710711 0.031353640
}\NPIVSplineOneMSE

\pgfplotstableread{
eta pm0 pm1 pm3 pm5
0.1 0.092000000 0.612000000 0.642000000 0.700000000
0.2 0.080000000 0.606000000 0.650000000 0.668000000
0.3 0.062000000 0.558000000 0.638000000 0.664000000
0.4 0.044000000 0.520000000 0.586000000 0.646000000
0.5 0.042000000 0.516000000 0.558000000 0.574000000
0.6 0.028000000 0.472000000 0.516000000 0.570000000
0.7 0.020000000 0.396000000 0.480000000 0.492000000
0.8 0.016000000 0.378000000 0.446000000 0.448000000
0.9 0.016000000 0.324000000 0.398000000 0.428000000
1.0 0.010000000 0.284000000 0.350000000 0.372000000
}\NPIVSplineFiveCoverage

\pgfplotstableread{
eta pm0 pm1 pm3 pm5
0.1 -0.272919339 -0.055662661 -0.050733471 -0.048711623
0.2 -0.289564225 -0.062605444 -0.053068998 -0.058637260
0.3 -0.318010973 -0.076724572 -0.062639810 -0.062741916
0.4 -0.352754749 -0.089463980 -0.077486646 -0.072945493
0.5 -0.383779067 -0.101679530 -0.090679725 -0.088697680
0.6 -0.427628349 -0.126041521 -0.111128489 -0.108037349
0.7 -0.472427572 -0.150115667 -0.130843867 -0.130986624
0.8 -0.511157828 -0.174339298 -0.156925132 -0.156850090
0.9 -0.548284021 -0.207918637 -0.181296992 -0.180934853
1.0 -0.580650960 -0.239008996 -0.215794017 -0.217189526
}\NPIVSplineFiveBias

\pgfplotstableread{
eta pm0 pm1 pm3 pm5
0.1 0.113631430 0.020202204 0.020279492 0.020462485
0.2 0.125668490 0.021904933 0.021499798 0.022526410
0.3 0.145563844 0.025318539 0.024270672 0.024128664
0.4 0.171730738 0.029544026 0.029059087 0.027937183
0.5 0.196141021 0.033825982 0.033054951 0.033350312
0.6 0.231426659 0.041128897 0.039113649 0.039851514
0.7 0.269379741 0.052006776 0.046892182 0.048758793
0.8 0.307598293 0.063650828 0.060019823 0.059927726
0.9 0.341110076 0.079129346 0.071728581 0.068506644
1.0 0.375602957 0.094874464 0.086709819 0.089780207
}\NPIVSplineFiveMSE

\begin{figure}[!ht]
\centering
\begin{tikzpicture}[scale=0.61, transform shape]
\begin{groupplot}[NPIVGroupCommon,
    group style={group name=npivspline,group size=3 by 2,vertical sep=1.45cm},
]
\nextgroupplot[
    title={Coverage},
    ymin=0,ymax=1.05,ytick={0,0.5,0.95},yticklabels={0,0.5,0.95},
    yticklabel style={rotate=0,font=\footnotesize},
    legend style={legend to name=NPIVSplineLegOne}
]
\node[anchor=north,align=center] at (axis description cs:0.5,0.99)
  {$d_{\mathrm{c}}=d_{\mathrm{f}}=1$};
\addplot table[x=eta,y=pm0] from \NPIVSplineOneCoverage;
\addlegendentry{$p_{\mathrm{c}}=0$}
\addplot table[x=eta,y=pm1] from \NPIVSplineOneCoverage;
\addlegendentry{$p_{\mathrm{c}}=1$}
\addplot[densely dashed,black,forget plot] coordinates {(0.1,0.95) (1.0,0.95)};

\nextgroupplot[
    title={Bias},
    ymin=-0.20,ymax=0.01,ytick={-0.20,-0.10,0}
]
\node[anchor=north,align=center] at (axis description cs:0.5,0.99)
  {$d_{\mathrm{c}}=d_{\mathrm{f}}=1$};
\addplot table[x=eta,y=pm0] from \NPIVSplineOneBias;
\addplot table[x=eta,y=pm1] from \NPIVSplineOneBias;
\addplot[densely dashed,black,forget plot] coordinates {(0.1,0) (1.0,0)};

\nextgroupplot[
    title={Mean Squared Error},
    ymin=0,ymax=0.09,ytick={0,0.045,0.09}
]
\node[anchor=north,align=center] at (axis description cs:0.5,0.99)
  {$d_{\mathrm{c}}=d_{\mathrm{f}}=1$};
\addplot table[x=eta,y=pm0] from \NPIVSplineOneMSE;
\addplot table[x=eta,y=pm1] from \NPIVSplineOneMSE;
\addplot[densely dashed,black,forget plot] coordinates {(0.1,0) (1.0,0)};

\nextgroupplot[
    ymin=0,ymax=1.05,ytick={0,0.5,0.95},yticklabels={0,0.5,0.95},
    yticklabel style={rotate=0,font=\footnotesize},
    legend style={legend to name=NPIVSplineLegFive}
]
\node[anchor=north,align=center] at (axis description cs:0.5,0.99)
  {$d_{\mathrm{c}}=d_{\mathrm{f}}=5$};
\addplot table[x=eta,y=pm0] from \NPIVSplineFiveCoverage;
\addlegendentry{$p_{\mathrm{c}}=0$}
\addplot table[x=eta,y=pm1] from \NPIVSplineFiveCoverage;
\addlegendentry{$p_{\mathrm{c}}=1$}
\addplot table[x=eta,y=pm3] from \NPIVSplineFiveCoverage;
\addlegendentry{$p_{\mathrm{c}}=3$}
\addplot table[x=eta,y=pm5] from \NPIVSplineFiveCoverage;
\addlegendentry{$p_{\mathrm{c}}=5$}
\addplot[densely dashed,black,forget plot] coordinates {(0.1,0.95) (1.0,0.95)};

\nextgroupplot[
    ymin=-0.60,ymax=0.02,ytick={-0.6,-0.3,0}
]
\node[anchor=north,align=center] at (axis description cs:0.5,0.99)
  {$d_{\mathrm{c}}=d_{\mathrm{f}}=5$};
\addplot table[x=eta,y=pm0] from \NPIVSplineFiveBias;
\addplot table[x=eta,y=pm1] from \NPIVSplineFiveBias;
\addplot table[x=eta,y=pm3] from \NPIVSplineFiveBias;
\addplot table[x=eta,y=pm5] from \NPIVSplineFiveBias;
\addplot[densely dashed,black,forget plot] coordinates {(0.1,0) (1.0,0)};

\nextgroupplot[
    ymin=0,ymax=0.40,ytick={0,0.2,0.4}
]
\node[anchor=north,align=center] at (axis description cs:0.5,0.99)
  {$d_{\mathrm{c}}=d_{\mathrm{f}}=5$};
\addplot table[x=eta,y=pm0] from \NPIVSplineFiveMSE;
\addplot table[x=eta,y=pm1] from \NPIVSplineFiveMSE;
\addplot table[x=eta,y=pm3] from \NPIVSplineFiveMSE;
\addplot table[x=eta,y=pm5] from \NPIVSplineFiveMSE;
\addplot[densely dashed,black,forget plot] coordinates {(0.1,0) (1.0,0)};
\end{groupplot}
\node[anchor=north] at ($(npivspline c2r1.south)+(0,-0.45cm)$) {%
  \pgfplotslegendfromname{NPIVSplineLegOne}%
};
\node[anchor=north] at ($(npivspline c2r2.south)+(0,-0.5cm)$) {%
  \pgfplotslegendfromname{NPIVSplineLegFive}%
};
\end{tikzpicture}
\caption{Debiased Functionals of NPIV under Monotonicity: Spline Basis.}
\label{Fig:NPIVMonotonicitySpline}
\end{figure}

\section{Empirical Applications}\label{Sec: Applications}

% Monopsony in Online Labor Markets
% Relinquishing riches: Auctions versus informal negotiations in texas oil and gas leasing
% \citet{DiNnardoFortinLemieux1996Wage}

In this section, we present two empirical applications. The first is a DiD analysis of Medicaid expansions and mortality, while the second studies the intertemporal relationship between working hours and wage growth. These applications illustrate the role of shape constraints as a middle ground between flexible nonparametric estimation and restrictive parametric modeling. Fully nonparametric methods can accommodate rich heterogeneity but may lack statistical guarantees in high dimensional settings. Parametric specifications are more tractable but rely on functional form assumptions that may be too restrictive. Shape constraints impose substantive restrictions that can improve precision while retaining substantial flexibility. 

% The two applications complement one another by illustrating how shape constraints can be motivated either by institutional knowledge or by economic theory.

\subsection{DiD under Shape Constraints}

We proceed by introducing the DiD setup and formalizing the estimation framework as a specialization of our general theory. We then bring the model to the data by revisiting the Medicaid and mortality example in \citet{BakerCallawayCunninghamGoodmanSantAnna2026Guide}. 

\subsubsection{The DiD Setup}

We consider a staggered treatment regime with a binary treatment, i.e., once a unit is treated, it remains treated in all subsequent periods. Thus, an individual's treatment path is fully characterized by the period in which treatment first occurs. Specifically, suppose that there are $T$ periods indexed by $t=1,\ldots,T$ and let $G$ be the time when an individual is first treated. Then individuals may be classified into different groups based on the support $\mathcal G$ of $G$. We assume that no individuals are treated at $t=1$ and set $G=\infty$ if an individual never receives the treatment. Without loss of generality, we also assume the existence of a never-treated group by, if necessary, discarding observations during the periods when the last cohort is treated; otherwise, treatment effects in those periods would not be identified due to the absence of comparison groups. Given the absorbing nature of the treatment, the path of potential outcomes for an individual from group $G=g$ may be denoted as $\{Y_t(g)\}_{t=1}^T$. The observed outcome $Y_t$ at time $t$ relates to the potential outcomes via $Y_t=\sum_{g\in\mathcal G} Y_t(g)1\{G=g\}$. In addition, we observe a vector $Z$ of time invariant (pre-treatment) covariates. 

Following \citet{CallawaySantAnna2021MultipleDID}, we define the average treatment effect for group $g$ at time $t$ (GT-ATT) by
\begin{align}
\theta_0(g,t)\equiv E[Y_t(g)-Y_t(\infty)|G=g]~. 
\end{align}
These GT-ATTs are of interest in their own right while also serving as building blocks for inference on more aggregated causal effects. For example, letting $e\equiv t-g$ be the event time, we may consider the event study parameter: 
\begin{align}\label{Eqn: DID event study}
\theta_{\mathrm{es}}(e)\equiv \sum_{g\in\mathcal G}P(G=g|G+e\le T)\theta_0(g,g+e)~.
\end{align}
Intuitively, $\theta_{\mathrm{es}}(e)$ measures the average treatment effect among groups that have been exposed to treatment for $e$ time periods. 

In order to identify $\theta_0(g,t)$, we must select a collection of comparison groups which we denote by $\mathcal C_{gt}$. A common choice is $\mathcal C_{gt}=\{s\in\mathcal G\colon s>g\vee t\}$, i.e., groups not yet treated at time $t$, although it may be desirable to choose $\mathcal C_{gt}$ as a proper subset of $\{s\in\mathcal G\colon s>g\vee t\}$. Suppose that (i) $Y_t(g)=Y_t(\infty)$ for all $t\le g-1$ and $g\in\mathcal G$ (no anticipation), (ii) $P(G=g|Z)>0$ for all $g\in\mathcal G$ (overlap), and (iii) conditional parallel trends hold in the sense that, for any $g\in\mathcal G\backslash\{\infty\}$, any $t\ge g$, and any $c\in\mathcal C_{gt}\subset \{s\in\mathcal G\colon s>g\vee t\}$,
\begin{align}
E[Y_t(\infty)-Y_{t-1}(\infty)|G=g,Z] = E[Y_t(\infty)-Y_{t-1}(\infty)|G=c,Z]~.
\end{align} 
Then it follows that, for any $g\in\mathcal G\backslash\{\infty\}$ and $t\ge g$,
\begin{align}\label{Eqn: DID OR}
\theta_0(g,t)= E[Y_t-Y_{g-1}-\mu_{gt,0}(1,Z)|G=g]~,
\end{align}
where $\mu_{gt,0}(C_{gt},Z)\equiv E[Y_t-Y_{g-1}|C_{gt},Z]$ for $C_{gt}\equiv 1\{G\in\mathcal C_{gt}\}$. Equation \eqref{Eqn: DID OR} is essentially the identification of $\theta_0(g,t)$ in \citet{CallawaySantAnna2021MultipleDID} based on outcome regression using not yet treated groups as controls.  
 
Given the identification in \eqref{Eqn: DID OR}, we may embed the setup into our framework by identifying the data vector $X$ with $(Y_{g-1},Y_t, G, C_{gt},Z)$, the first step $\gamma_0$ with $(\eta_{g,0},\mu_{gt,0})$ for $\eta_{g,0}\equiv P(G=g)$, and the moment function $g$ with
\begin{align}\label{Eqn: DID OR aux}
g(X,\theta,\gamma_0)=\frac{1\{G=g\}}{\eta_{g,0}}\{Y_t-Y_{g-1}-\mu_{gt,0}(1,Z)-\theta\}~.
\end{align}
Note that \eqref{Eqn: DID OR aux} is already orthogonal to $\eta_{g,0}$. By \citet{ChernozhukovNeweySingh2022Automatic} (see also 
\citet{CallawaySantAnna2021MultipleDID} for the original result albeit in a slightly different form), we may then orthogonalize \eqref{Eqn: DID OR aux} with respect to $\gamma_0$ by setting
\begin{multline}\label{Eqn: DID DML1}
\psi(X;\theta,\gamma,\alpha)=\frac{1\{G=g\}}{\eta}\{Y_t-Y_{g-1}-\mu(1,Z)-\theta\}\\
+\alpha(C_{gt},Z)\{Y_t-Y_{g-1}-\mu(C_{gt},Z)\}~,
\end{multline}
for any $\gamma\equiv (\eta,\mu)$ with $\eta\in(0,1)$ and $\mu\colon \{0,1\}\times\mathcal Z\to\mathbf R$ and any $\alpha\colon\{0,1\}\times\mathcal Z\to\mathbf R$, where the truth of $\alpha$ (i.e., the Riesz representer) is
\begin{align}\label{Eqn: DID DML2}
\alpha_{gt,0}(C_{gt},Z)\equiv -\frac{P(G=g|Z)C_{gt}}{P(G=g)P(C_{gt}=1|Z)}~.
\end{align}

To incorporate shape constraints, we note that although, in principle, they can be imposed directly on $\mu_{gt,0}$ (the conditional expectation of the trend $Y_t-Y_{g-1}$), it may be easier to motivate them at the levels. For example, if $E[Y_t|C_{gt},Z]$ and $E[Y_{g-1}|C_{gt},Z]$ are monotonic with respect to (a subset of) $Z$, then we may let the parameter space $\Gamma$ of $\mu_{gt,0}$ be $\Gamma=\Gamma^{\uparrow}-\Gamma^{\uparrow}$ where $\Gamma^{\uparrow}\subset L^2(\{0,1\}\times\mathcal Z)$ is a set of monotonic functions. In this case, our theory implies that
\begin{align}\label{Eqn: DID DML3}
\alpha_{gt,0}=\arg\max_{\alpha\in\Gamma^{\uparrow}-\Gamma^{\uparrow}}-\frac{1}{2}E[\alpha(C_{gt},Z)^2]-E[\frac{1\{G=g\}}{\eta_{g,0}}\alpha(1,Z)]~.
\end{align}
Alternatively, we note that we may replace $\mu(C_{gt},Z)$ with $\mu(1,Z)$ in \eqref{Eqn: DID DML1} so that we may simplify the estimation of the first step by training $\mu_{gt,0}(1,Z)$ using observations from the comparison groups. Accordingly, we set $\alpha_{gt,0}(C_{gt},Z)=C_{gt}\alpha^\flat_{gt,0}(Z)$ where $\alpha^\flat_{gt,0}(Z)\equiv -P(G=g|Z)/[P(G=g)P(C_{gt}=1|Z)]$ solves
\begin{align}\label{Eqn: DID DML4}
\max_{\alpha^\flat\in\Gamma_\flat^{\uparrow}-\Gamma_\flat^{\uparrow}}-\frac{1}{2}E[C_{gt}\alpha^\flat(Z)^2]-E[\frac{1\{G=g\}}{\eta_{g,0}}\alpha^\flat(Z)]~,
\end{align}
where $\Gamma_\flat^\uparrow\subset L^2(\mathcal Z)$ is a set of monotonic functions $\alpha^\flat\colon\mathcal Z\to\mathbf R$. This is the specification we use to impose shape constraints in the empirical application below.

Given a sample $\{X_i\}_{i=1}^n$ of $X$, equation \eqref{Eqn: DID DML1} implies that we may estimate $\theta_0(g,t)$ based on observations at time $t$ consisting of units from group $g$ and $\mathcal C_{gt}$:
\begin{multline}
\hat\theta_n(g,t)=\frac{1}{n}\sum_{i=1}^{n}\frac{1\{G_i=g\}}{\hat\eta_{g,n}}\{Y_{it}-Y_{i,g-1}-\hat\mu_{gt,n}(1,Z_i)\}\\
+\frac{1}{n}\sum_{i=1}^{n}\hat\alpha_{gt,n}(C_{gt,i},Z_i)\{Y_{it}-Y_{i,g-1}-\hat\mu_{gt,n}(1,Z_i)\}~,
\end{multline}
where $\hat\eta_{g,n}\equiv\sum_{i=1}^{n}1\{G_i=g\}/n$ is the sample analog of $\eta_{g,0}$, $\hat\mu_{gt,n}(1,\cdot)$ is an estimator of $\mu_{gt,0}(1,\cdot)$, and $\hat\alpha_{gt,n}$ is obtained by running the Riesz regression.

\subsubsection{Medicaid and Mortality}\label{Sec: Medicaid}

Medicaid was enacted in 1965 as a public health insurance program providing coverage to low-income individuals. The Affordable Care Act (ACA), passed by Congress in 2010, expanded Medicaid eligibility to all adults with incomes up to $138\%$ of the federal poverty level. The ACA initially mandated expansion in all states, but a subsequent Supreme Court ruling rendered it optional, resulting in a staggered treatment design---see Table \ref{Tab: Medicaid_expansion} for the expansion timing. A central question is whether the Medicaid expansion causally reduces mortality. To address this, we build on the county level panel data constructed by \citet{BakerCallawayCunninghamGoodmanSantAnna2026Guide} in which the outcome is the crude mortality rate (per 100,000) among adults aged 20--64 and the covariates consist of the percentages of the population that are respectively female, white, and Hispanic, as well as the unemployment rate, the poverty rate, and median income. The panel includes 2,604 counties and spans the period 2009--2019. The group index $G_i\in\{2014,2015,2016,2019,\infty\}$---recall that $G_i=\infty$ means that county $i$ did not expand the coverage by 2019. The pre-treatment periods allow us to conduct pre-trend analysis.

\begin{table}[!htbp]
\centering
\sisetup{round-mode=places, round-precision=2}  
\begin{talltblr}[
  caption = {Medicaid Expansion under the Affordable Care Act},
  label = {Tab: Medicaid_expansion},
  remark{Notes} = {Source: \citet{BakerCallawayCunninghamGoodmanSantAnna2026Guide}. D.C.\ and pre-2014 adoption states are excluded from the analysis. The last column reports the shares of adults aged 20--64 in 2013.},
]{
  colsep=2pt,
  rowsep=2pt,
  column{1}={leftsep=2pt},
  column{Z}={rightsep=2pt},
  column{3-5}={cmd=\num},
  colspec={
    Q[c,m,wd=0.21\linewidth]
    Q[c,m,wd=0.3\linewidth]
    Q[c,m,wd=0.13\linewidth]
    Q[c,m,wd=0.13\linewidth]
    Q[c,m,wd=0.13\linewidth]
  },
  hline{1,Z} = {-}{},
  hline{2} = {3-5}{-}{},
  hline{3} = {-}{},
  row{1-2}={cmd=},
}
\SetCell[r=2]{c} {Expansion Year} & \SetCell[r=2]{c} {States} & \SetCell[c=3]{c} {Shares} & & \\
& & {By States} & {By Counties} & {By Adults} \\
Pre-2014 & DE, MA, NY, VT & 0.086 & 0.03 & 0.09 \\
2014 & AR, AZ, CA, CO, CT, HI, IA, IL, KY, MD, RI, WA, WV & 0.44 & 0.36 & 0.45 \\
2015 & AK, IN, PA & 0.06 & 0.06 & 0.06 \\
2016 & LA, MT & 0.04 & 0.04 & 0.02 \\
2019 & ME, VA & 0.04 & 0.05 & 0.03 \\
2020 & ID, NE, UT & 0.06 & 0.04 & 0.02 \\
2021 & MO, OK & 0.04 & 0.06 & 0.03 \\
2023 & NC, SD & 0.04 & 0.05 & 0.03 \\
Non-Expansion & AL, FL, GA, KS, MS, SC, TN, TX, WI, WY & 0.20 & 0.31 & 0.26 \\
\end{talltblr}
\end{table}

\iffalse
In exploring shape constraints in the present setting, we note that there is well documented empirical evidence supporting monotonic relationships between mortality and two key covariates: the poverty rate and income. For example, based on county level data for 1990, 2000, and 2010, \citet{CurrieSchwandt2016Mortality} document that mortality rates increase with the poverty rate and decrease with median income. Moreover, both relationships are robust across gender and age groups. These findings align with \citet{GordonSommers2016Recessions}, who report that higher poverty rates and lower median incomes are associated with higher mortality across multiple causes of death and adult subpopulations. Using administrative data at the individual level from 2001--2014, \citet{ChettyEtAl2016IncomeLife} show that life expectancy increases continuously with income, with no evidence of a threshold beyond which higher income ceases to be associated with longer life. More recently, \citet{BradyKohlerZheng2023Novel} find that mortality risk increases with poverty exposure, from a 42\% higher hazard under current poverty to a 71\% higher hazard for those in poverty over the past 10 years. In what follows, we incorporate this prior information by imposing that the conditional mean of the outcome is increasing in the poverty rate and decreasing in median income.
\fi
To motivate the shape constraints in the present setting, we draw on well-documented evidence that mortality increases with poverty and decreases with income. For example, using county-level data for 1990, 2000, and 2010, \citet{CurrieSchwandt2016Mortality} document both relationships across gender and age groups. Using individual-level administrative data from 2001--2014, \citet{ChettyEtAl2016IncomeLife} show that life expectancy increases continuously with income, with no apparent threshold. \citet{BradyKohlerZheng2023Novel} further find that mortality risk rises with poverty exposure, from a 42\% higher hazard under current poverty to a 71\% higher hazard after poverty exposure over the preceding decade. Guided by this evidence, we impose that the conditional mean of the outcome is increasing in the poverty rate and decreasing in median income.
 
% \citet{GordonSommers2016Recessions}: While previous research has found that higher unemployment may produce small beneficial effects on survival, our analysis using more recent and granular data suggests this relationship no longer holds.

% -------------------------
% Figure settings
% -------------------------

\pgfplotsset{
  gtatt panel/.style={
    width=0.5\textwidth,
    height=0.3\textwidth,
    xmin=2008.5, xmax=2019.5,
    xtick={2009,2011,2013,2015,2017,2019},
    extra y ticks={0},
    extra y tick labels={},
    extra y tick style={grid=major, grid style={solid, gray!40}},
    major tick length=1pt,
    tick label style={/pgf/number format/fixed,/pgf/number format/1000 sep={},font=\tiny},
    yticklabel style={rotate=90},
    title style={yshift=-0.65em},
    legend style={draw=none,legend columns=-1,column sep=6pt},
    mark size=0.68pt,
    line width=0.5pt,
    error bars/error mark options={solid,rotate=90,mark size=0.8pt},
    extra x tick labels={},
    extra x tick style={grid=major, grid style={densely dashed}},
  },
  right axis/.style={ytick pos=right, yticklabel pos=right},
  style cs/.style={color=Dark2-A,mark=10-pointed star},
  style dml/.style={color=Dark2-B,mark=halfsquare left*,error bars/error bar style={densely dotted}},
  style sdml/.style={color=Dark2-C,mark=halfcircle*,every mark/.append style={rotate=270},
                     error bars/error bar style={densely dashdotted}},
  event panel/.style={
    gtatt panel,
    xmin=-10.5, xmax=5.5,
    xtick={-10,-9,-8,-7,-6,-5,-4,-3,-2,-1,0,1,2,3,4,5},
    extra x ticks={-1},
  },
}

\newcommand{\off}{0.30}

% -------------------------
% Helpers
% -------------------------
\newcommand{\PlotGTATT}[4]{%
  % #1=group, #2=x-offset, #3=method style, #4=column name
  \addplot+[#3, forget plot,restrict expr to domain={\thisrow{group}}{#1:#1},error bars/.cd, y dir=both, y explicit] 
     table [x expr=\thisrow{time}+(#2),y=#4,
            y error minus expr={\thisrow{#4}-\thisrow{ci_low_#4}},
            y error plus expr={\thisrow{ci_high_#4}-\thisrow{#4}}] {\GTATT};
}

\newcommand{\PlotEventGTATT}[3]{%
  % #1=group, #2=method style, #3=column name
  \addplot+[#2, forget plot,restrict expr to domain={\thisrow{group}}{#1:#1}]
     table [x=event, y=#3] {\GTATT};
}

\newcommand{\PlotEventATT}[3]{%
  % #1=x-offset, #2=method style, #3=column name
  \addplot+[#2, forget plot,error bars/.cd, y dir=both, y explicit]
     table [x expr=\thisrow{event}+(#1),y=#3,
            y error minus expr={\thisrow{#3}-\thisrow{ci_low_#3}},
            y error plus expr={\thisrow{ci_high_#3}-\thisrow{#3}}] {\EventATT};
}

% -------------------------
% Data
% -------------------------
\pgfplotstableread{
group	time	event	cs	ci_low_cs	ci_high_cs	dml	ci_low_dml	ci_high_dml	sdml	ci_low_sdml	ci_high_sdml
2014	2009	-5	3.948646	-0.928772	8.826064	3.523899	-0.090910	7.138708	2.303480	-1.297603	5.904563
2014	2010	-4	1.747725	-4.280025	7.775476	-1.608158	-4.278177	1.061861	-2.208195	-5.237202	0.820812
2014	2011	-3	2.954795	-1.574432	7.484022	1.176764	-1.304205	3.657732	-0.166376	-2.787810	2.455058
2014	2012	-2	2.848708	-3.738555	9.435972	1.972770	-0.130555	4.076095	1.090128	-1.534811	3.715067
2014	2013	-1	0.000000	-0.000000	0.000000	0.000000	0.000000	0.000000	0.000000	0.000000	0.000000
2014	2014	0	-2.131197	-8.505874	4.243479	-1.582444	-3.627916	0.463027	-1.590870	-4.028102	0.846363
2014	2015	1	-3.647074	-11.953958	4.659811	-0.384803	-2.604925	1.835318	-0.176149	-3.117000	2.764703
2014	2016	2	-4.963733	-16.472488	6.545023	1.420331	-2.216208	5.056870	0.932510	-2.672616	4.537637
2014	2017	3	-6.064612	-17.166432	5.037207	1.722951	-1.629148	5.075050	1.534260	-2.279705	5.348226
2014	2018	4	-6.370834	-19.331034	6.589367	2.127473	-1.022951	5.277897	2.002689	-1.972682	5.978059
2014	2019	5	2.477249	-13.582551	18.537050	5.376410	1.826185	8.926635	7.665728	3.451323	11.880133
2015	2009	-6	4.920485	-3.160473	13.001443	3.885150	-4.864860	12.635160	2.348260	-6.376455	11.072974
2015	2010	-5	-0.183365	-7.437234	7.070505	-1.059362	-8.754017	6.635293	-0.799739	-9.054191	7.454714
2015	2011	-4	8.479854	-1.637637	18.597344	7.545236	-3.986218	19.076689	6.619476	-5.340010	18.578962
2015	2012	-3	3.459603	-3.716142	10.635348	2.316672	-5.106735	9.740079	0.644504	-8.021409	9.310416
2015	2013	-2	3.012421	-3.650499	9.675342	1.442521	-3.756346	6.641388	0.523516	-4.616835	5.663868
2015	2014	-1	0.000000	-0.000000	0.000000	0.000000	0.000000	0.000000	0.000000	0.000000	0.000000
2015	2015	0	2.703258	-4.332511	9.739028	3.261534	-3.338786	9.861855	5.271275	-1.560370	12.102921
2015	2016	1	12.148210	5.284887	19.011533	12.374765	5.041894	19.707637	14.384918	7.330039	21.439796
2015	2017	2	17.677326	9.950412	25.404241	18.348153	9.385488	27.310818	18.711310	9.490248	27.932371
2015	2018	3	4.595611	-2.448458	11.639680	4.871064	-1.883436	11.625564	7.865058	0.558312	15.171804
2015	2019	4	7.016006	-0.709274	14.741287	6.148231	-0.733801	13.030263	10.657663	3.173123	18.142202
2016	2009	-7	14.048370	-9.286057	37.382796	1.214199	-14.551690	16.980088	1.152305	-13.893301	16.197911
2016	2010	-6	40.983577	-48.429000	130.396153	-7.623372	-20.089579	4.842835	-7.141116	-18.891828	4.609596
2016	2011	-5	9.289281	-37.210715	55.789277	-14.680419	-31.665132	2.304293	-14.095416	-29.237747	1.046916
2016	2012	-4	2.551677	-12.092603	17.195958	-5.603523	-20.466134	9.259087	-4.673502	-17.510532	8.163528
2016	2013	-3	5.143678	-8.033429	18.320785	0.817462	-11.277045	12.911970	1.415289	-10.454732	13.285309
2016	2014	-2	3.683277	-10.974922	18.341476	0.866514	-14.093323	15.826352	0.943132	-14.278174	16.164439
2016	2015	-1	0.000000	-0.000000	0.000000	0.000000	0.000000	0.000000	0.000000	0.000000	0.000000
2016	2016	0	-8.697067	-23.608093	6.213958	-9.067751	-23.949157	5.813655	-9.088724	-24.113680	5.936232
2016	2017	1	-9.916357	-22.257083	2.424368	-8.862719	-20.500548	2.775111	-7.126508	-18.576569	4.323552
2016	2018	2	-18.922757	-31.700941	-6.144574	-17.601591	-29.699477	-5.503704	-14.166636	-26.137776	-2.195496
2016	2019	3	-13.986775	-27.535852	-0.437697	-10.306329	-22.918923	2.306265	-9.414144	-23.282549	4.454261
2019	2009	-10	-15.350323	-33.067353	2.366706	-4.483858	-17.087003	8.119286	-11.639466	-26.602817	3.323885
2019	2010	-9	-25.791807	-42.620676	-8.962938	-12.418756	-25.535209	0.697697	-15.168911	-29.609977	-0.727846
2019	2011	-8	-17.258425	-34.714323	0.197472	-8.090339	-21.380892	5.200214	-13.943306	-26.425235	-1.461376
2019	2012	-7	-12.467257	-28.174203	3.239689	-5.231264	-18.218487	7.755960	-15.405140	-29.171861	-1.638420
2019	2013	-6	-15.205226	-28.510416	-1.900036	-10.250854	-21.778406	1.276699	-16.603800	-27.936032	-5.271568
2019	2014	-5	-13.879145	-28.047122	0.288833	-9.493391	-20.655045	1.668263	-12.877358	-23.461946	-2.292770
2019	2015	-4	-5.506914	-16.912073	5.898245	-3.143221	-12.861272	6.574829	-10.407823	-20.032382	-0.783263
2019	2016	-3	-1.100543	-10.292344	8.091258	-0.373400	-8.412398	7.665598	-5.264003	-13.615630	3.087624
2019	2017	-2	3.201367	-5.774498	12.177232	4.199000	-3.632495	12.030495	0.769034	-6.930262	8.468329
2019	2018	-1	0.000000	-0.000000	0.000000	0.000000	0.000000	0.000000	0.000000	0.000000	0.000000
2019	2019	0	3.308994	-5.763475	12.381463	1.806603	-6.992260	10.605467	2.671930	-6.729099	12.072958
}\GTATT

We implement three estimation methods: the Callaway-Sant'Anna (parametric) method employed in \citet{BakerCallawayCunninghamGoodmanSantAnna2026Guide} as the benchmark, debiased machine learning (via Riesz regression) without shape constraints, and our shape constrained debiased machine learning, respectively labeled as CS, DML, and SDML. The estimands in all three cases are weighted as in \citet{BakerCallawayCunninghamGoodmanSantAnna2026Guide}. The implementation details of DML and SDML are aligned with those in Section \ref{Sec: Sims}. For simplicity, we only report pointwise (in time) confidence intervals at the 5\% significance level. 

\begin{figure}[H]
\centering\scriptsize
\caption{GT-ATTs over Calendar Time for Each Expansion Group}
\label{Fig: GTATT calendar}
\begin{tikzpicture}
\begin{groupplot}[gtatt panel,
                  group style={group name=GTATT, group size=2 by 2, horizontal sep=1.8em, vertical sep=2.4em}
                 ]
  % Group 2014
  \nextgroupplot[title={2014}, legend to name=LegendGTATT, ymin=-25, ymax=25, ytick={-20,-10,0,10,20}, extra x ticks={2014}]
  \PlotGTATT{2014}{0}{style cs}{cs}
  \PlotGTATT{2014}{-\off}{style dml}{dml}
  \PlotGTATT{2014}{\off}{style sdml}{sdml}
  \addlegendimage{style cs}\addlegendentry{CS}
  \addlegendimage{style dml}\addlegendentry{DML}
  \addlegendimage{style sdml}\addlegendentry{SDML}
  % Group 2015
  \nextgroupplot[title={2015}, right axis, ymin=-15, ymax=30, ytick={-15,0,15,30}, extra x ticks={2015}]
  \PlotGTATT{2015}{0}{style cs}{cs}
  \PlotGTATT{2015}{-\off}{style dml}{dml}
  \PlotGTATT{2015}{\off}{style sdml}{sdml}
  % Group 2016
  \nextgroupplot[title={2016}, ymin=-80, ymax=140, ytick={-80,-40,0,40,80,120}, extra x ticks={2016}]
  \PlotGTATT{2016}{0}{style cs}{cs}
  \PlotGTATT{2016}{-\off}{style dml}{dml}
  \PlotGTATT{2016}{\off}{style sdml}{sdml}
  % Group 2019
  \nextgroupplot[title={2019}, right axis, ymin=-45, ymax=25, ytick={-40,-20,0,20}, extra x ticks={2019}]
  \PlotGTATT{2019}{0}{style cs}{cs}
  \PlotGTATT{2019}{-\off}{style dml}{dml}
  \PlotGTATT{2019}{\off}{style sdml}{sdml}
\end{groupplot}

\node[anchor=north]
  at ($(GTATT c1r2.south)!0.5!(GTATT c2r2.south)+(0,-0.5cm)$) {\pgfplotslegendfromname{LegendGTATT}};
\end{tikzpicture}
%
%\vspace{0.1em}
%{\raggedright\footnotesize \textit{Notes.}\par}
\end{figure}

\pgfplotstableread{
event	cs	ci_low_cs	ci_high_cs	dml	ci_low_dml	ci_high_dml	sdml	ci_low_sdml	ci_high_sdml
-10	-15.350323	-33.067353	2.366706	-4.483858	-17.084838	8.117121	-11.639466	-26.600880	3.321948
-9	-25.791807	-42.620676	-8.962938	-12.418756	-25.533155	0.695643	-15.168911	-29.607958	-0.729864
-8	-17.258425	-34.714323	0.197472	-8.090339	-21.378802	5.198125	-13.943306	-26.423059	-1.463552
-7	-2.808912	-15.750252	10.132429	-3.750111	-13.716750	6.216527	-8.630821	-19.539082	2.277441
-6	5.068632	-10.275735	20.412999	-2.187134	-9.047862	4.673593	-4.448304	-11.360484	2.463876
-5	2.668480	-1.679572	7.016532	1.765245	-1.474290	5.004780	0.628801	-2.693216	3.950818
-4	2.133383	-2.924812	7.191578	-0.999102	-3.583249	1.585044	-1.703218	-4.719892	1.313455
-3	2.857439	-0.934498	6.649376	1.132792	-1.003286	3.268871	-0.060983	-2.338991	2.217026
-2	2.912967	-2.394760	8.220695	1.924381	0.019453	3.829310	1.071209	-1.197136	3.339554
-1	0.000000	-0.000000	0.000000	0.000000	0.000000	0.000000	0.000000	0.000000	0.000000
0	-1.492955	-6.716460	3.730549	-1.077010	-3.004868	0.850848	-0.524763	-2.715184	1.665658
1	-1.969392	-9.001513	5.062729	0.857938	-1.353896	3.069771	1.359657	-1.503312	4.222627
2	-2.725066	-12.488982	7.038849	2.859542	-0.656536	6.375619	2.533242	-0.942054	6.008538
3	-5.055663	-14.535884	4.424558	1.723326	-1.221129	4.667781	1.730479	-1.605308	5.066267
4	-4.716163	-16.157557	6.725231	2.725095	-0.174855	5.625045	3.072480	-0.559026	6.703987
5	2.477249	-13.582551	18.537050	5.376410	1.826298	8.926522	7.665728	3.451455	11.880001
}\EventATT

\begin{figure}[H]
\centering\scriptsize
\caption{GT-ATTs over Event Time for Each Expansion Group}
\label{Fig: GTATT Event}
\begin{tikzpicture}
\begin{groupplot}[event panel,
                  group style={group name=Event, group size=2 by 2, horizontal sep=1.8em, vertical sep=2.4em},
                 ]
  % Group 2014                 
  \nextgroupplot[title={2014}, legend to name=LegendEvent, xmin=-5.5, xmax=5.5]
  \PlotEventGTATT{2014}{style cs}{cs}
  \PlotEventGTATT{2014}{style dml}{dml}
  \PlotEventGTATT{2014}{style sdml}{sdml}
  \addlegendimage{style cs}\addlegendentry{CS}
  \addlegendimage{style dml}\addlegendentry{DML}
  \addlegendimage{style sdml}\addlegendentry{SDML}
  % Group 2015
  \nextgroupplot[title={2015}, right axis, xmin=-6.5, xmax=4.5]
  \PlotEventGTATT{2015}{style cs}{cs}
  \PlotEventGTATT{2015}{style dml}{dml}
  \PlotEventGTATT{2015}{style sdml}{sdml}
  % Group 2016
  \nextgroupplot[title={2016}, xmin=-7.5, xmax=3.5]
  \PlotEventGTATT{2016}{style cs}{cs}
  \PlotEventGTATT{2016}{style dml}{dml}
  \PlotEventGTATT{2016}{style sdml}{sdml}
  % Group 2019
  \nextgroupplot[title={2019}, right axis, xmin=-10.5, xmax=0.5]
  \PlotEventGTATT{2019}{style cs}{cs}
  \PlotEventGTATT{2019}{style dml}{dml}
  \PlotEventGTATT{2019}{style sdml}{sdml}
\end{groupplot}

\node[anchor=north]
  at ($(Event c1r2.south)!0.5!(Event c2r2.south)+(0,-0.5cm)$)
  {\pgfplotslegendfromname{LegendEvent}};
\end{tikzpicture}
%
%\vspace{0.75em}
%{\raggedright\footnotesize \textit{Notes.}\par}
\end{figure}

Figures \ref{Fig: GTATT calendar} and \ref{Fig: GTATT Event} present GT-ATTs over calendar time and event time respectively for each expansion group. For the 2014 cohort, the CS estimates are negative in most post-expansion years, whereas the DML and SDML estimates become positive after 2015 and are significantly positive by 2019. For the 2015 cohort, all three methods produce similar positive effects during the first two post-expansion years, but the SDML estimates are systematically larger and remain significant in 2018 and 2019, when the CS and DML confidence intervals include zero. For the 2016 cohort, all three methods produce negative estimates that are significantly negative in 2018; by 2019, only the CS estimate remains significant. The contemporaneous estimates for the 2019 cohort are similar and insignificant across methods, but their pre-treatment estimates differ substantially, with SDML generally producing more negative estimates than DML. These differences carry over to Figure \ref{Fig: Aggregated ATT Event Study}: the CS estimates are negative from event time zero through four, whereas the DML and SDML estimates are positive from event time one onward, although all corresponding confidence intervals include zero. The aggregate pre-treatment estimates are not uniformly close to zero: CS and SDML are significantly negative at event time $-9$, SDML remains significantly negative at event time $-8$, and DML is narrowly positive and significant at event time $-2$. At event time five, the DML and SDML estimates are positive and significant, while the CS estimate is smaller and considerably less precise. Overall, relative to DML, imposing shape constraints produces meaningful differences in the magnitude and statistical significance of the estimated effects across several cohorts and time periods.

% We stress that the discrepancies between the different methods should not be interpreted as evidence of one being superior than another, but rather than potential of shape constraints in aiding estimation and inference. 

% Overall, DML and SDML generally yield shorter confidence intervals than CS, but their interval lengths are comparable.

\begin{figure}[H]
\centering\scriptsize
\caption{Aggregated Event Study ATTs}
\label{Fig: Aggregated ATT Event Study}
\begin{tikzpicture}
\begin{axis}[
  event panel,
  width=0.65\textwidth,
  height=0.34\textwidth,
  ymin=-45, ymax=25, ytick={-40,-20,0,20},
  legend to name=LegendAggEvent,
]
  \PlotEventATT{0}{style cs}{cs}
  \PlotEventATT{-\off}{style dml}{dml}
  \PlotEventATT{\off}{style sdml}{sdml}
  \addlegendimage{style cs}\addlegendentry{CS}
  \addlegendimage{style dml}\addlegendentry{DML}
  \addlegendimage{style sdml}\addlegendentry{SDML}
\end{axis}

\node[anchor=north] 
  at ($(current axis.south)+(0,-0.5cm)$)
  {\pgfplotslegendfromname{LegendAggEvent}};
\end{tikzpicture}
%
%\vspace{0.75em}
%{\raggedright\footnotesize \textit{Notes.}\par}
\end{figure}

\subsection{Working Hours and Wage Growth}

How working hours translate into wage growth is central to understanding career progression, labor supply incentives, and the accumulation of earnings differences across workers. The theoretical case for allowing this relationship to be nonlinear goes back at least to \citet{Barzel1973Wage}, who treats hours and wages as jointly determined components of an employment arrangement rather than separate objects linked by a fixed hourly wage. Consistent with this perspective, subsequent empirical work finds that hourly earnings vary systematically with hours worked, pointing to nonlinear wage-hours relationships \citep{BiddleZarkin1989Wage,BickBlandinRogerson2022Wage}. Building on a dynamic complete information framework, \citet{Gicheva2013Working} develops a model that predicts a convex intertemporal relationship between working hours and wage growth, which is then informally validated based on a data set from the panel survey of registrants in the United States for the Graduate Management Admission Test (GMAT) between June 1990 and March 1991---see \citet{Fang2021Unifying} for a formal test.

To map this problem into our framework, consider the model:
\begin{align}
Y=\gamma_0(D,Z) + U~,
\end{align} 
where $Y$ is the annual wage growth rate, $D$ is weekly working hours, $Z$ is a vector of demographic variables (e.g., experience, education, gender, age, race, and family characteristics), and $E[U|D,Z]=0$. In the GMAT sample, there are $19$ control variables including experience, education, gender, age, race, and family characteristics. The parameter $\theta_0$ is the weighted average derivative: for $w$ a weight function,
\begin{align}
  \theta_0\equiv E[w(D)\nabla_d\gamma_0(D,Z)]~.
\end{align}

In the implementation below, we employ the same GMAT dataset as \citet{Gicheva2013Working} and set $w$ to be the normalized indicator of the empirical $[0.1,0.9]$ quantile region of $D$. The first step $\gamma_0$ is estimated by a two-branch neural network subject to convexity with respect to $D$. The Riesz representer $\alpha_0$ is characterized as the solution to
\begin{align}
\max_{\alpha\in\Gamma^\smallsmile-\Gamma^\smallsmile} -\frac{1}{2} E[\alpha(D,Z)^2] + E[w(D)\nabla_d\alpha(D,Z)]~,
\end{align}
where $\Gamma^\smallsmile\subset L^2(D,Z)$ is the class of functions $(d,z)\mapsto\gamma(d,z)$ convex in $d$. We keep the tuning of architectures consistent with the simulations for convexity in Section~\ref{Sec: Sims, ATE}. The final cleaned sample consists of $1,911$ survey respondents. We also consider men and women separately as in \citet{Gicheva2013Working}. % This gender-specific analysis is motivated by evidence that gender earnings gaps among high-skill professionals are closely linked to differences in weekly hours, career interruptions, and the disproportionate rewards to long hours \citep{BertrandGoldinKatz2010Dynamics,Goldin2014Grand}.

\begin{table}[!htbp]
\centering
\sisetup{round-mode=places, round-precision=4}
\makeatletter
\newcommand{\CI}[1]{\CI@aux#1\@nil}
\def\CI@aux#1,#2\@nil{[\num{#1},\,\num{#2}]}
\makeatother
\begin{talltblr}[
  caption = {DML and SDML Estimates of the Weighted Average Derivative.},
  label = {Tab: Wage results},
  remark{Notes} = {The last column reports the number of observations with positive weight. The original sample
  sizes are $1,911$ for all respondents, $1,103$ for men, and $808$ for women.},
]{
  colsep=5pt,
  rowsep=2pt,
  column{1}={leftsep=8pt},
  column{Z}={rightsep=8pt},
  colspec = {Q[c] Q[c] S[table-format=-1.4, table-number-alignment=center] Q[c] Q[c] Q[c]},
  cell{2-Z}{4} = {cmd=\num},
  cell{2-Z}{5} = {cmd=\CI},
  hline{1,2,Z} = {-}{},
}
Method & Sample &  {Point Estimate}  & Standard Error       & 95\% Confidence Interval & Sample Size  \\
DML    & All    & 0.003141     & 0.001165 & 0.000858,0.005425        & 1546 \\
       & Men    & 0.004586     & 0.001468 & 0.001709,0.007463        & 904 \\
       & Women  & 0.000588     & 0.002127 & -0.003581,0.004757       & 677 \\
\addlinespace       
SDML   & All    & 0.000790     & 0.001744 & -0.002627,0.004208       & 1546 \\
       & Men    & 0.002810     & 0.002604 & -0.002295,0.007914       & 904 \\ 
       & Women  & -0.000676    & 0.002371 & -0.005323,0.003971       & 677 \\ 
 
\end{talltblr}
\end{table}

Table~\ref{Tab: Wage results} reports estimates of the weighted average derivative. The unconstrained DML estimates are positive and significant at the $5\%$ level for the full sample and for men, while the estimate for women is smaller and statistically indistinguishable from zero. The SDML point estimates are also positive for the full sample and for men but are estimated less precisely, and the SDML estimate for women is slightly negative; all three SDML confidence intervals include zero. Overall, the table suggests a modest average marginal association, if any, between weekly working hours and subsequent wage growth. This is consistent with the broader wage-hours literature, which emphasizes that the relationship between hours and earnings is nonlinear and heterogeneous rather than uniformly positive at the margin. 

% For example, \citet{DenningJacobLefgrenLehn2022Wage} show that marginal returns to hours can be small within labor-market settings, even when hours are important for broader wage differences.

% Alternative application: Convex supply curves in \url{https://www.aeaweb.org/articles?id=10.1257/aer.20210811}; this is an endogenous model. 

\section{Conclusion}\label{Sec: Conclusion}

In this paper, we present a general framework of identification and estimation for DML where the parameter of interest $\theta_0$ is identified by a GMM model in the presence of a first step nuisance $\gamma_0$.  The parameter $\gamma_0$ may be a high dimensional parameter defined by a model with endogeneity and possibly subject to nonlinear shape constraints. In particular, we establish identification of the Riesz representer $\alpha_0$ under mild conditions and generalize the Riesz regression to accommodate a generic first step $\gamma_0$. Since machine learning alone is incapable of overcoming the curse of dimensionality, we incorporate shape constraints as a vehicle for bringing economic structures into DML. By exploiting linearity of the first step influence function with respect to $\alpha_0$, we are able to develop a theory that is conceptually simple, transparent, and unifying. We believe our framework benefits statistical inference on many causal and structural problems, in both high dimensional and classical semiparametric settings. 
 
\begin{appendices}
\titleformat{\section}{\Large\center}{{\sc Appendix}}{1em}{}
\renewcommand{\theequation}{\thesection.\arabic{equation}}
\numberwithin{equation}{section}

\bookmarksetup{numbered=false}
\section{Proofs of Main Results}\label{Sec: Main proofs}
\bookmarksetup{numbered=true}

\noindent{\sc Proofs of Theorems \ref{Thm: Riesz ID} and \ref{Thm: Characterization}:} By the Gateaux differentiability imposed in Assumption \ref{Ass: Riesz}, equation \eqref{Eqn: Local Orthogonal wtr gamma} is equivalent to: for all $\delta\in \Gamma-\gamma_0$,
\begin{align}\label{Thm: Riesz ID, aux1}
\Psi_0 (\Pi_0(\delta))+\Upsilon_0(\alpha_0,\Pi_0(\delta))=0~,
\end{align}
or equivalently, by the linearity of $\delta\mapsto\Pi_0(\delta)$, $b\mapsto\Psi_0(b)$, and $b\mapsto\Upsilon_0(\alpha_0,b)$,
\begin{align}\label{Thm: Riesz ID, aux2}
\Psi_0 (b)+\Upsilon_0(\alpha_0,b)=0~,
\end{align}
for all $b\in\Pi_0(\mathbf H_0)$ and hence for all $b\in\mathbf A_0$ (the closure of $\Pi_0(\mathbf H_0)$), by continuity of $b\mapsto \Psi_0 (b)$ and $b\mapsto \Upsilon_0(\alpha_0,b)$.

Since $\mathbf A_0$ is a Hilbert space as a closed subspace in $\mathbf A$, by Assumptions \ref{Ass: Riesz}(ii)(iv) and the Lax-Milgram theorem (see, e.g., \citet[pp.~13--17]{ErnGuermond2021FiniteII}), we may then conclude that there exists a unique $\alpha_0\in \mathbf A_0$ satisfying \eqref{Thm: Riesz ID, aux2} and hence \eqref{Eqn: Local Orthogonal wtr gamma} if $\Upsilon_0$ is coercive. Given result \eqref{Thm: Riesz ID, aux2}, the conclusion of Theorem \ref{Thm: Characterization} follows by Theorem 22.A in \citet{Zeidler1990IIA} if $\Upsilon_0$ is positive. If $\Upsilon_0$ is negative, then $-\Upsilon_0$ is positive, and Theorem 22.A in \citet{Zeidler1990IIA} applies by multiplying both sides of \eqref{Thm: Riesz ID, aux2} by $-1$.  \qed

\noindent{\sc Proof of Theorem \ref{Thm: Convergence rate}:} For simplicity, below we adopt some notation from empirical process theory.  Write $Pf\equiv E[f(X)]$ and $\mathbb P_nf\equiv \sum_{i=1}^{n}f(X_i)/n$ so that $\mathbb G_nf\equiv \sqrt n\{\mathbb P_nf-Pf\}$ for a generic function $f$. 

% Below we adopt the notation in empirical process theory. Thus, $Pl(a,b,\nu)\equiv E[l(X,a,b,\nu)]$, $\mathbb P_nl(a,b,\nu)\equiv \sum_{i=1}^n l(X_i,a,b,\nu)/n$, and $\mathbb G_n\equiv \sqrt n\{\mathbb P_n-P\}$. 

First, we note by the linearity and symmetry in Assumptions \ref{Ass: Convergence rate, population}(i)(ii) that
\begin{align}\label{Eqn: Convergence rate aux1}
Pf(\alpha,\nu) -Pf(\alpha_0,\nu)&= \frac{1}{2}Pl(\alpha-\alpha_0,\alpha-\alpha_0,\nu_0)+\frac{1}{2}Pl(\alpha-\alpha_0,\alpha-\alpha_0,\nu-\nu_0) \notag\\
&\quad + Pl(\alpha-\alpha_0,\alpha_0,\nu-\nu_0)+Ph(\alpha-\alpha_0,\nu-\nu_0)\notag\\
&\quad +Pl(\alpha-\alpha_0,\alpha_0,\nu_0)+Ph(\alpha-\alpha_0,\nu_0)~.
\end{align}
By Assumptions \ref{Ass: Convergence rate, population}(i)(ii)(iii), Theorem \ref{Thm: Riesz ID} implies that, for all $\alpha\in\mathbf A_0$,
\begin{align}\label{Eqn: Convergence rate aux2}
Pl(\alpha-\alpha_0,\alpha_0,\nu_0)+Ph(\alpha-\alpha_0,\nu_0)=0~.
\end{align}
By results \eqref{Eqn: Convergence rate aux1} and \eqref{Eqn: Convergence rate aux2}, we note that, for all $\alpha\in\mathbf A_0$ and $\nu\in\mathbf B$,
\begin{align}\label{Eqn: Convergence rate aux3}
Pf(\alpha,\nu) -Pf(\alpha_0,\nu)&= \frac{1}{2}Pl(\alpha-\alpha_0,\alpha-\alpha_0,\nu)  \notag\\
&\quad + Pl(\alpha-\alpha_0,\alpha_0,\nu-\nu_0)+Ph(\alpha-\alpha_0,\nu-\nu_0)~.
\end{align}
By Assumptions \ref{Ass: Convergence rate, population}(i), we note that, for all $a\in\mathbf A_0$, all $\nu$ in a (sufficiently small) neighborhood $V_0$ of $\nu_0$, and some constants $\varpi,\varpi_1>0$,
\begin{multline}\label{Eqn: Convergence rate aux4}
E[l(X,a,a,\nu)]=E[l(X,a,a,\nu-\nu_0)]+E[l(X,a,a,\nu_0)]\\
\ge -\varpi\|a\|_{\mathbf A}^2\|\nu-\nu_0\|_{\mathbf B} -\|\Upsilon_0\|_{op}\|a\|_{\mathbf A}^2\ge -\frac{\varpi_1}{2}\|a\|_{\mathbf A}^2~.
\end{multline}
By Assumptions \ref{Ass: Convergence rate, population}(i)(ii), results \eqref{Eqn: Convergence rate aux3} and \eqref{Eqn: Convergence rate aux4} thus imply that, for all $\alpha\in\mathbf A_0$, all $\nu\in V_0$, and some constant $c_1>0$,
\begin{align}\label{Eqn: Convergence rate aux5}
Pf(\alpha,\nu) -Pf(\alpha_0,\nu) &\ge -\frac{\varpi_1}{2}\|\alpha-\alpha_0\|_{\mathbf A}^2 - 2c_1^{1/2} \|\alpha-\alpha_0\|_{\mathbf A}\|\nu-\nu_0\|_{\mathbf B}\notag\\
&\ge -\varpi_1\|\alpha-\alpha_0\|_{\mathbf A}^2 - \frac{2c_1}{\varpi_1}  \|\nu-\nu_0\|_{\mathbf B}^2~,
\end{align}
where the second step follows by the inequality of arithmetic and geometric means. 

Next, by the continuity in Assumptions \ref{Ass: Convergence rate, population}(i)(ii), there exist constants $c_2,c_3>0$ such that, for all $\alpha\in\mathbf A_0$ and $\nu\in\mathbf B$,
\begin{align}\label{Eqn: Convergence rate aux6}
\frac{1}{2}&Pl(\alpha-\alpha_0,\alpha-\alpha_0,\nu-\nu_0)+Pl(\alpha-\alpha_0,\alpha_0,\nu-\nu_0)+Ph(\alpha-\alpha_0,\nu-\nu_0)\notag\\
&\le c_2\|\alpha-\alpha_0\|_{\mathbf A}^2 \|\nu-\nu_0\|_{\mathbf B} + c_3\|\alpha-\alpha_0\|_{\mathbf A} \|\nu-\nu_0\|_{\mathbf B}~.
\end{align}
By the triangle inequality and Assumptions \ref{Ass: Convergence rate, approximation}(i)(ii), we have $\|\alpha-\alpha_0\|_{\mathbf A}\le \|\alpha-\alpha_{0,n}\|_{\mathbf A}+\|\alpha_{0,n}-\alpha_0\|_{\mathbf A}\lesssim \varrho_n$ for all $\alpha\in\mathbf A_{0,n}$. It follows from \eqref{Eqn: Convergence rate aux6} that, for some $c_4>0$, 
\begin{multline}\label{Eqn: Convergence rate aux7}
\frac{1}{2}Pl(\alpha-\alpha_0,\alpha-\alpha_0,\nu-\nu_0)+Pl(\alpha-\alpha_0,\alpha_0,\nu-\nu_0)+Ph(\alpha-\alpha_0,\nu-\nu_0) \\
\le 2c_4^{1/2}\varrho_n\|\alpha-\alpha_0\|_{\mathbf A} \|\nu-\nu_0\|_{\mathbf B} \le   \frac{\varkappa}{2}\|\alpha-\alpha_0\|_{\mathbf A}^2 + \frac{2c_4\varrho_n^2}{\varkappa}\|\nu-\nu_0\|_{\mathbf B}^2~,
\end{multline}
where the last step is due to the inequality of arithmetic and geometric means. Combining Assumption \ref{Ass: Convergence rate, population}(iii) with results \eqref{Eqn: Convergence rate aux1}, \eqref{Eqn: Convergence rate aux2}, and \eqref{Eqn: Convergence rate aux7} then yields 
\begin{align}\label{Eqn: Convergence rate aux7aux}
Pf(\alpha,\nu) -Pf(\alpha_0,\nu)\le -\frac{\varkappa}{2}\|\alpha-\alpha_0\|_{\mathbf A}^2+\frac{2c_4\varrho_n^2}{\varkappa}\|\nu-\nu_0\|_{\mathbf B}^2~,
\end{align}
for all $\alpha\in\mathbf A_{0,n}$ and all $\nu\in\mathbf B$. Exploiting $\|\alpha-\alpha_{0,n}\|_{\mathbf A}^2\le 2\{\|\alpha-\alpha_0\|_{\mathbf A}^2+\|\alpha_{0,n}-\alpha_0\|_{\mathbf A}^2\}$, $\alpha_{0,n}\in\mathbf A_{0,n}$, and $\varrho_n\ge 1$, we obtain from results \eqref{Eqn: Convergence rate aux5} and \eqref{Eqn: Convergence rate aux7aux} that
\begin{align}\label{Eqn: Convergence rate aux8}
Pf(\alpha,\nu) &-Pf(\alpha_{0,n},\nu) = Pf(\alpha,\nu) -Pf(\alpha_0,\nu)+Pf(\alpha_0,\nu)-Pf(\alpha_{0,n},\nu)\notag\\
&\le -\frac{\varkappa}{2}\|\alpha-\alpha_0\|_{\mathbf A}^2+\frac{2c_4\varrho_n^2}{\varkappa}\|\nu-\nu_0\|_{\mathbf B}^2 + \varpi_1\|\alpha_{0,n}-\alpha_0\|_{\mathbf A}^2 + \frac{2c_1}{\varpi_1}  \|\nu-\nu_0\|_{\mathbf B}^2\notag\\
&\le -\frac{\varkappa}{4}\|\alpha-\alpha_{0,n}\|_{\mathbf A}^2+\varpi_2\|\alpha_{0,n}-\alpha_0\|_{\mathbf A}^2+c_5\varrho_n^2\|\nu-\nu_0\|_{\mathbf B}^2~,
\end{align}
for all $\alpha\in\mathbf A_{0,n}$, all $\nu\in V_0$, some constant $\varpi_2>0$, and some constant $c_5>0$.

Given Assumptions \ref{Ass: Convergence rate, approximation}(ii)(iii), we next notice that it suffices to show
\begin{align}\label{Eqn: Convergence rate aux9}
\|\hat\alpha_n-\alpha_{0,n}\|_{\mathbf A}+\{\lambda_nJ_n(\hat\alpha_n)\}^{1/2}=O_p(\delta_n+\varrho_n\|\hat\nu_n-\nu_0\|_{\mathbf B}+\{\lambda_nJ_n(\alpha_{0,n})\}^{1/2})~.
\end{align}
This is accomplished by employing a variant of the peeling device in the proof of Theorem 3.4.6 in \citet{VaartWellner2023Book}. To this end, for each $n\in\mathbf N$ (the set of natural numbers), $j\in\mathbf Z$ (the set of integers), and each constant $M>0$, we define a ``shell'' $S_{n,j,M}$ and an event $\mathcal E_{n,M}$ as follows:
\begin{align*}
S_{n,j,M}&\equiv \{(\alpha,\nu)\in \mathbf A_{0,n}\times\mathbf B_n\colon 2^{j-1}\delta_n <\|\alpha-\alpha_{0,n}\|_{\mathbf A} + \{\lambda_nJ_n(\alpha)\}^{\frac{1}{2}}\le 2^j\delta_n,\notag\\
&\hspace{1.2cm} 2^M\max\{\varrho_n\|\nu-\nu_0\|_{\mathbf B},\{\lambda_nJ_n(\alpha_{0,n})\}^{\frac{1}{2}}\}\le \|\alpha-\alpha_{0,n}\|_{\mathbf A} + \{\lambda_nJ_n(\alpha)\}^{\frac{1}{2}}\}~,\\
\mathcal E_{n,M}&\equiv\{\|\hat\alpha_n-\alpha_{0,n}\|_{\mathbf A}+\{\lambda_nJ_n(\hat\alpha_n)\}^{\frac{1}{2}}\ge 2^M[\delta_n+\varrho_n\|\hat\nu_n-\nu_0\|_{\mathbf B}+\{\lambda_nJ_n(\alpha_{0,n})\}^{\frac{1}{2}}]\}~.
\end{align*}
Without loss of generality, assume $\varkappa/4\le 1$. Thus, if $(\alpha,\nu)\in S_{n,j,M}$ with $\nu\in V_0$, then we may exploit \eqref{Eqn: Convergence rate aux8}, $J_n$ being nonnegative, the definition of $S_{n,j,M}$, and $\|\alpha-\alpha_{0,n}\|_{\mathbf A}^2+\lambda_nJ_n(\alpha)\ge [\|\alpha-\alpha_{0,n}\|_{\mathbf A} + \{\lambda_nJ_n(\alpha)\}^{1/2}]^2/2$ to obtain that
\begin{align}\label{Eqn: Convergence rate aux10}
Pf&(\alpha,\nu) -Pf(\alpha_{0,n},\nu)-\lambda_nJ_n(\alpha)+\lambda_nJ_n(\alpha_{0,n})\notag\\
& \le -\frac{\varkappa}{4}\{\|\alpha-\alpha_{0,n}\|_{\mathbf A}^2+\lambda_nJ_n(\alpha)\}+\varpi_2\|\alpha_{0,n}-\alpha_0\|_{\mathbf A}^2+c_5\varrho_n^2\|\nu-\nu_0\|_{\mathbf B}^2+\lambda_nJ_n(\alpha_{0,n})\notag\\
&\le -(\frac{\varkappa}{32}-\frac{c_6\varpi_2}{2^{2j}}-\frac{c_5}{2^{2M}}-\frac{1}{2^{2M}})(2^j\delta_n)^2~,
\end{align}
where $c_6>0$ is a constant such that $\|\alpha_{0,n}-\alpha_0\|_{\mathbf A}^2\le c_6 \delta_n^2$ (by Assumption \ref{Ass: Convergence rate, approximation}(ii)).

Define $R_n\equiv \sup_{a\in\mathbf A_{0,n}}\{\hat{\mathfrak C}_n(a)-\lambda_nJ_n(a)\}-[\hat{\mathfrak C}_n(\hat\alpha_n)-\lambda_nJ_n(\hat\alpha_n)]$. Then by $\alpha_{0,n}\in\mathbf A_{0,n}$ and the definition of $\hat\alpha_n\in\mathbf A_{0,n}$, we note that $0\le R_n\le \delta_n^2$ and that
\begin{align}\label{Eqn: Convergence rate aux11}
\mathbb P_nf(\hat\alpha_n,\hat\nu_n)-\mathbb P_nf(\alpha_{0,n},\hat\nu_n) \ge \lambda_nJ_n(\hat\alpha_n)-\lambda_nJ_n(\alpha_{0,n})-R_n~.
\end{align}
Combining \eqref{Eqn: Convergence rate aux10} and \eqref{Eqn: Convergence rate aux11} yields that, if $(\hat\alpha_n,\hat\nu_n)\in S_{n,j,M}$ and $\hat\nu_n\in V_0$, then
\begin{align}\label{Eqn: Convergence rate aux12}
\sup_{(\alpha,\nu)\in S_{n,j,M}} &\{\mathbb G_nf(\alpha,\nu)-\mathbb G_nf(\alpha_{0,n},\nu)\}\notag\\
&\ge -\sqrt n\{Pf(\hat\alpha_n,\hat\nu_n)-Pf(\alpha_{0,n},\hat\nu_n)-\lambda_nJ_n(\hat\alpha_n)+\lambda_nJ_n(\alpha_{0,n})+R_n\}\notag\\
&\ge \sqrt n \{(\frac{\varkappa}{32}-\frac{c_6\varpi_2}{2^{2j}}-\frac{c_5}{2^{2M}}-\frac{1}{2^{2M}})(2^j\delta_n)^2-\delta_n^2\}~.
\end{align}
Note that the right hand side of the last inequality is bounded below by $c_7\sqrt n (2^j\delta_n)^2$ for some constant $c_7>0$ and for all large $j$ and $M$. Moreover, observe that
\begin{multline}\label{Eqn: Convergence rate aux13}
\sup_{(\alpha,\nu)\in S_{n,j,M}} \mathbb G_n(f(\alpha,\nu)-f(\alpha_{0,n},\nu)) \\
\le \sup_{(\alpha,\nu)\in U_{0,n}(2^j\delta_n) }|\mathbb G_n(f(\alpha,\nu)-f(\alpha_{0,n},\nu))|~.
\end{multline}
Since $(\hat\alpha_n,\hat\nu_n)\in S_{n,j,M}$ for some $j\ge M$ whenever $\mathcal E_{n,M}$ occurs, \eqref{Eqn: Convergence rate aux12}, \eqref{Eqn: Convergence rate aux13}, Markov's inequality, and Assumptions \ref{Ass: Convergence rate, approximation}(iv)(v) imply that, for some constant $c_8>0$,
\begin{multline}\label{Eqn: Convergence rate aux14}
P(\mathcal E_{n,M}) \le P(\mathcal E_{n,M},\hat\nu_n\in V_0) + P(\hat\nu_n\notin V_0)\lesssim \sum_{j=M}^{\infty} \frac{\omega_n(2^j\delta_n)}{c_7\sqrt n (2^j\delta_n)^2}+P(\hat\nu_n\notin V_0)\\
\lesssim \sum_{j=M}^{\infty}\frac{2^{j\varsigma}\omega_n(\delta_n)}{c_7\sqrt n (2^j\delta_n)^2}+ P(\hat\nu_n\notin V_0)\lesssim \frac{1}{c_7}\sum_{j=M}^{\infty}(\frac{1}{2^{2-\varsigma}})^j+ P(\hat\nu_n\notin V_0)~,
\end{multline}
where the second inequality follows because $\omega_n(\delta)/\delta^\varsigma$ is nonincreasing in $\delta$. Since $\sum_{j=M}^{\infty}(\frac{1}{2^{2-\varsigma}})^j\to 0$ whenever $M=M_n\to\infty$ due to $\varsigma<2$ by assumption, \eqref{Eqn: Convergence rate aux9} now readily follows from \eqref{Eqn: Convergence rate aux14} and Assumption \ref{Ass: Convergence rate, sample}(ii). \qed

\noindent{\sc Proof of Theorem \ref{Thm: Convergence rate2}:} For each $\Pi\in\mathbf D_n$, define a seminorm $\|\cdot\|_{\Pi}$ on $\mathrm{lin}(\Gamma-\Gamma)$ by $\|\eta\|_{\Pi}=\|\Pi\eta\|_{\mathbf A}$. Since $\alpha_{0,n}\in\mathbf A_{0,n}$ and $\hat\alpha_n\in\hat{\mathbf A}_n$, we have $\alpha_{0,n}=\Pi_0(\eta_{0,n})$ and $\hat\alpha_n=\hat\Pi_n(\hat\eta_n)$ for some $\eta_{0,n},\hat\eta_n\in\Delta\Gamma_n$. By Assumption \ref{Ass: Convergence rate, approximation}(ii) and the triangle inequality, it suffices to show \eqref{Eqn: Convergence rate2 aux0} with $\alpha_{0,n}$ in place of $\alpha_0$. Note that $\|\hat\Pi_n(\eta_{0,n})-\Pi_0(\eta_{0,n})\|_{\mathbf A}\le\varrho_n\|\hat\Pi_n-\Pi_0\|_{op,n}$ by Assumptions \ref{Ass: Convergence rate, approximation}(i) and \ref{Ass: Convergence rate, Pi}(i)(ii). Thus, it suffices to show that
\begin{multline}\label{Eqn: Convergence rate2 aux1}
\|\hat\Pi_n(\hat\eta_n)-\hat\Pi_n(\eta_{0,n})\|_{\mathbf A}\equiv\|\hat\eta_n-\eta_{0,n}\|_{\hat\Pi_n}\\
=O_p(\delta_n+\varrho_n\|\hat\nu_n-\nu_0\|_{\mathbf B}+\varrho_n\|\hat\Pi_n-\Pi_0\|_{op,n})~.
\end{multline}
To this end, we rework the proof of Theorem \ref{Thm: Convergence rate} under the random seminorm $\|\cdot\|_{\hat\Pi_n}$ by treating $\Pi_0$ as a nuisance parameter. We shall also make use of the notation, constants, and definitions there without explicit references.

Suppose first that $\mathfrak C_0(\alpha)-\mathfrak C_0(\alpha_0) \le c\|\alpha-\alpha_0\|_{\mathbf A}^\tau$ for some $c>0$ and $\tau>1$ whenever $d_{\mathbf A}(\alpha,\mathbf A_0)\le\epsilon$ for some $\epsilon>0$. Fix $\alpha\in\mathbf A$ with $d_{\mathbf A}(\alpha,\mathbf A_0)\le\epsilon$. Let $\alpha'\equiv \alpha_0+t\{\alpha-\alpha_0\}$ with $t\in[0,1]$ which satisfies $d_{\mathbf A}(\alpha',\mathbf A_0)\le \epsilon$ by $\alpha_0\in\mathbf A_0$ and convexity of the distance function $a\mapsto d_{\mathbf A}(a,\mathbf A_0)$ due to the linearity of $\mathbf A_0$. Hence, for any $t\in[0,1]$,
\begin{multline}\label{Eqn: Convergence rate2 aux2}
ct^\tau\|\alpha-\alpha_0\|_{\mathbf A}^\tau\ge \mathfrak C_0(\alpha')-\mathfrak C_0(\alpha_0)\\
= t[\Upsilon_0(\alpha-\alpha_0,\alpha_0)+\Psi_0(\alpha-\alpha_0)]+\frac{1}{2}t^2\Upsilon_0(\alpha-\alpha_0,\alpha-\alpha_0)~,
\end{multline}
where the equality is due to the (bi)linearity and symmetry in Assumptions \ref{Ass: Convergence rate, population}(i)(ii). Since $\tau>1$, dividing \eqref{Eqn: Convergence rate2 aux2} by $t>0$ and letting $t\downarrow 0$ yield:
\begin{align}\label{Eqn: Convergence rate2 aux3}
\Upsilon_0(\alpha-\alpha_0,\alpha_0)+\Psi_0(\alpha-\alpha_0)\le 0~.
\end{align}
For $\alpha''\equiv 2\alpha_0-\alpha$, we note by $d_{\mathbf A}(-\alpha,\mathbf A_0)\le\epsilon$ (since $\mathbf A_0$ is linear), subadditivity of  $a\mapsto d_{\mathbf A}(a,\mathbf A_0)$, and $2\alpha_0\in\mathbf A_0$ that $d_{\mathbf A}(\alpha'',\mathbf A_0)\le \epsilon$. Since \eqref{Eqn: Convergence rate2 aux3} holds for any $\alpha\in\mathbf A$ with $d_{\mathbf A}(\alpha,\mathbf A_0)\le\epsilon$, it follows by setting $\alpha=\alpha''$ in \eqref{Eqn: Convergence rate2 aux3} that
\begin{align}\label{Eqn: Convergence rate2 aux4}
\Upsilon_0(\alpha_0-\alpha,\alpha_0)+\Psi_0(\alpha_0-\alpha)\le 0~.
\end{align}
Combining \eqref{Eqn: Convergence rate2 aux3} and \eqref{Eqn: Convergence rate2 aux4} with linearity of $b\mapsto \Upsilon_0(b,\alpha_0)+\Psi_0(b)$ we obtain
\begin{align}\label{Eqn: Convergence rate2 aux5}
\Upsilon_0(\alpha-\alpha_0,\alpha_0)+\Psi_0(\alpha-\alpha_0)= 0~,
\end{align}
whenever $d_{\mathbf A}(\alpha,\mathbf A_0)\le\epsilon$. Moreover, by Assumption \ref{Ass: Convergence rate, approximation}(i) and by making $c$ larger if necessary, $\|\alpha-\alpha_0\|_{\mathbf A}\le c\varrho_n$ whenever $d_{\mathbf A}(\alpha,\mathbf A_0)\le\epsilon$ and $\alpha=\Pi(\eta)$ for $\eta\in\Delta\Gamma_n$ and a linear $\Pi\in\mathbf D_n$ such that $\varrho_n\|\Pi-\Pi_0\|_{op,n}\le\epsilon$. This, together with \eqref{Eqn: Convergence rate2 aux5}, allows us to employ arguments analogous to those leading to \eqref{Eqn: Convergence rate aux5} and \eqref{Eqn: Convergence rate aux8} to obtain that
\begin{align}
Pf(\alpha,\nu) -Pf(\alpha_0,\nu) &\ge -\varpi_1\|\alpha-\alpha_0\|_{\mathbf A}^2 - \frac{2c_1}{\varpi_1}  \|\nu-\nu_0\|_{\mathbf B}^2~, \label{Eqn: Convergence rate2 aux6} \\
Pf(\alpha,\nu) -Pf(\alpha_0,\nu) &\le -\frac{\varkappa}{2}\|\alpha-\alpha_0\|_{\mathbf A}^2+\frac{2c_4\varrho_n^2}{\varkappa}\|\nu-\nu_0\|_{\mathbf B}^2~,\label{Eqn: Convergence rate2 aux7}
\end{align}
whenever $\nu\in V_0$ and $d_{\mathbf A}(\alpha,\mathbf A_0)\le\epsilon$ with $\alpha=\Pi(\eta)$ for $\eta\in\Delta\Gamma_n$ and a linear $\Pi\in\mathbf D_n$ such that $\varrho_n\|\Pi-\Pi_0\|_{op,n}\le\epsilon$, where the constants here and below may differ from the corresponding ones in the proof of Theorem \ref{Thm: Convergence rate}. If Assumption \ref{Ass: Convergence rate, Pi}(iv) holds with $\hat\Pi_n(\Delta\Gamma_n)\subset\mathbf A_0$ with probability approaching one, then \eqref{Eqn: Convergence rate2 aux6} and \eqref{Eqn: Convergence rate2 aux7} hold whenever $\alpha\in\mathbf A_0$ and $\nu\in V_0$ with probability approaching one by the same arguments as in the proof of Theorem \ref{Thm: Convergence rate}. For brevity, below we focus on the case with  $\mathfrak C_0(\alpha)-\mathfrak C_0(\alpha_0) \lesssim \|\alpha-\alpha_0\|_{\mathbf A}^\tau$ whenever $d_{\mathbf A}(\alpha,\mathbf A_0)\le\epsilon$.

As in the proof of Theorem \ref{Thm: Convergence rate}, \eqref{Eqn: Convergence rate2 aux6} and \eqref{Eqn: Convergence rate2 aux7} imply that, whenever $\eta\in\Delta\Gamma_n$, $\varrho_n\|\Pi-\Pi_0\|_{op,n}\le\epsilon$, $d_{\mathbf A}(\Pi(\eta),\mathbf A_0)\vee d_{\mathbf A}(\Pi(\eta_{0,n}),\mathbf A_0)\le\epsilon$, and $\nu\in V_0$,
\begin{multline}\label{Eqn: Convergence rate2 aux8}
Pf(\Pi(\eta),\nu)-Pf(\Pi(\eta_{0,n}),\nu)\\
\le -\frac{\varkappa}{4}\|\Pi(\eta)-\Pi(\eta_{0,n})\|_{\mathbf A}^2+\varpi_2\|\Pi(\eta_{0,n})-\alpha_0\|_{\mathbf A}^2+c_5\varrho_n^2\|\nu-\nu_0\|_{\mathbf B}^2~.
\end{multline}
Assume $\varkappa\in(0,1)$ and $\varpi_2>1$ without loss of generality. For any linear $\Pi\colon\mathbf H_0\to\mathbf A$ and $\eta\in\Delta\Gamma_n$, Assumption \ref{Ass: Convergence rate, approximation}(i) and the triangle inequality imply that
\begin{align}\label{Eqn: Convergence rate2 aux9}
\|\Pi(\eta_{0,n})-\alpha_0\|_{\mathbf A}\le \varrho_n\|\Pi-\Pi_0\|_{op,n}+\|\alpha_{0,n}-\alpha_0\|_{\mathbf A}~.
\end{align}
Results \eqref{Eqn: Convergence rate2 aux8} and \eqref{Eqn: Convergence rate2 aux9} together yield that, whenever $\eta\in\Delta\Gamma_n$, $\varrho_n\|\Pi-\Pi_0\|_{op,n}\le\epsilon$, $d_{\mathbf A}(\Pi(\eta),\mathbf A_0)\vee d_{\mathbf A}(\Pi(\eta_{0,n}),\mathbf A_0)\le\epsilon$, and $\nu\in V_0$,
\begin{multline}\label{Eqn: Convergence rate2 aux10}
Pf(\Pi(\eta),\nu)-Pf(\Pi(\eta_{0,n}),\nu)\le -\frac{\varkappa}{4}\|\eta-\eta_{0,n}\|_{\Pi}^2+2\varpi_2\varrho_n^2\|\Pi-\Pi_0\|_{op,n}^2\\
+2\varpi_2\|\alpha_{0,n}-\alpha_0\|_{\mathbf A}^2+c_5\varrho_n^2\|\nu-\nu_0\|_{\mathbf B}^2~.
\end{multline}

Similar to the proof of Theorem \ref{Thm: Convergence rate}, it suffices to show that
\begin{multline}\label{Eqn: Convergence rate2 aux11}
\|\hat\eta_n-\eta_{0,n}\|_{\hat\Pi_n}+\{\lambda_n\bar J_n(\hat\eta_n)\}^{1/2}\\
=O_p(\delta_n+\varrho_n\|\hat\nu_n-\nu_0\|_{\mathbf B}+\varrho_n\|\hat\Pi_n-\Pi_0\|_{op,n}+\{\lambda_n\bar J_n(\eta_{0,n})\}^{1/2})~.
\end{multline}
For each $n\in\mathbf N$, $j\in\mathbf Z$, and each $M>0$, define $F_n\equiv \{(\eta,\nu,\Pi)\in \Delta\Gamma_n\times\mathbf B_n\times \mathbf D_n\colon \eta\in\Delta\Gamma_n,\varrho_n\|\Pi-\Pi_0\|_{op,n}\le\epsilon,d_{\mathbf A}(\Pi(\eta),\mathbf A_0)\le\epsilon, d_{\mathbf A}(\Pi(\eta_{0,n}),\mathbf A_0)\le\epsilon,\nu\in V_0\}$ and 
\begin{align*}
S_{n,j,M}&\equiv \{(\eta,\nu,\Pi)\in \Delta\Gamma_n\times\mathbf B_n\times \mathbf D_n\colon 2^{j-1}\delta_n <\|\eta-\eta_{0,n}\|_{\Pi} + \{\lambda_n\bar J_n(\eta)\}^{\frac{1}{2}}\le 2^j\delta_n,\notag\\
& \hspace{-0.8cm} 2^M\max\{\varrho_n\|\nu-\nu_0\|_{\mathbf B},\varrho_n\|\Pi-\Pi_0\|_{op,n},\{\lambda_n\bar J_n(\eta_{0,n})\}^{\frac{1}{2}}\}\le \|\eta-\eta_{0,n}\|_\Pi + \{\lambda_n\bar J_n(\eta)\}^{\frac{1}{2}}\}~,\\
\mathcal E_{n,M}&\equiv\{\|\hat\eta_n-\eta_{0,n}\|_{\hat\Pi_n}+\{\lambda_n\bar J_n(\hat\eta_n)\}^{\frac{1}{2}}\ge 2^M[\delta_n\notag\\
&\hspace{4.25cm}+\varrho_n\|\hat\Pi_n-\Pi_0\|_{op,n}+\varrho_n\|\hat\nu_n-\nu_0\|_{\mathbf B}+\{\lambda_n\bar J_n(\eta_{0,n})\}^{\frac{1}{2}}]\}~.
\end{align*}
By arguments analogous to those leading to \eqref{Eqn: Convergence rate aux10}, it follows from \eqref{Eqn: Convergence rate2 aux8} that, whenever $(\eta,\nu,\Pi)\in S_{n,j,M}\cap F_n$, we have that, for some constant $c_6>0$,
\begin{multline}\label{Eqn: Convergence rate2 aux12}
Pf(\Pi(\eta),\nu) -Pf(\Pi(\eta_{0,n}),\nu)-\lambda_n\bar J_n(\eta)+\lambda_n\bar J_n(\eta_{0,n})\\
\le -(\frac{\varkappa}{32}-\frac{2c_6\varpi_2}{2^{2j}}-\frac{2\varpi_2}{2^{2M}}-\frac{c_5}{2^{2M}}-\frac{1}{2^{2M}})(2^j\delta_n)^2~.
\end{multline}

Define $R_n\equiv \sup_{a\in\hat{\mathbf A}_n}\{\hat{\mathfrak C}_n(a)-\lambda_nJ_n(a)\}-[\hat{\mathfrak C}_n(\hat\alpha_n)-\lambda_nJ_n(\hat\alpha_n)]$. Then by the definition of $\hat\eta_n$ we note that 
$0\le R_n\le \delta_n^2$ and that
\begin{align}\label{Eqn: Convergence rate2 aux13}
\mathbb P_nf(\hat\Pi_n(\hat\eta_n),\hat\nu_n)-\mathbb P_nf(\hat\Pi_n(\eta_{0,n}),\hat\nu_n) \ge \lambda_n\bar J_n(\hat\eta_n)-\lambda_n\bar J_n(\eta_{0,n})-R_n~.
\end{align}
Combining \eqref{Eqn: Convergence rate2 aux12} and \eqref{Eqn: Convergence rate2 aux13} yields that, whenever $(\hat\eta_n,\hat\nu_n,\hat\Pi_n)\in S_{n,j,M}\cap F_n$,
\begin{multline}\label{Eqn: Convergence rate2 aux14}
\sup_{(\eta,\nu,\Pi)\in S_{n,j,M}} \mathbb G_n(f(\Pi(\eta),\nu)-f(\Pi(\eta_{0,n}),\nu))\\
\ge \sqrt n \{(\frac{\varkappa}{32}-\frac{2c_6\varpi_2}{2^{2j}}-\frac{2\varpi_2}{2^{2M}}-\frac{c_5}{2^{2M}}-\frac{1}{2^{2M}})(2^j\delta_n)^2-\delta_n^2\}~.
\end{multline}
Note that the right hand side of the above inequality is bounded below by $c_7\sqrt n (2^j\delta_n)^2$ for some constant $c_7>0$ and for all large $j$ and $M$. Moreover, observe that
\begin{multline}\label{Eqn: Convergence rate2 aux15}
\sup_{(\eta,\nu,\Pi)\in S_{n,j,M}} \mathbb G_n(f(\Pi(\eta),\nu)-f(\Pi(\eta_{0,n}),\nu)) \\
\le \sup_{(\eta,\nu,\Pi)\in U_{0,n}(2^j\delta_n) }|\mathbb G_n(f(\Pi(\eta),\nu)-f(\Pi(\eta_{0,n}),\nu))|~.
\end{multline}
By Assumptions \ref{Ass: Convergence rate, approximation}(i), \ref{Ass: Convergence rate, sample}(ii), and \ref{Ass: Convergence rate, Pi}(i)(ii), we know that $(\hat\eta_n,\hat\nu_n,\hat\Pi_n)\in F_n$ with probability approaching one. This, together with the results \eqref{Eqn: Convergence rate2 aux14} and \eqref{Eqn: Convergence rate2 aux15} and Assumption \ref{Ass: Convergence rate, Pi}(iii), allows us to conclude as in the proof of Theorem \ref{Thm: Convergence rate}. \qed

\end{appendices}

\phantomsection
\addcontentsline{toc}{section}{References}
\putbib[bibliography]

\end{bibunit}
\makeatletter
\gdef\@extra@b@citeb{}
\gdef\@extra@binfo{}
\makeatother

\newpage
\pagenumbering{arabic}

\null
\vskip 2em
\pdfbookmark[1]{Supplement Title}{Supplements}
\begin{center}
{\LARGE Supplement to ``Debiased Machine Learning: Identification, Estimation, and Shape Constraints'' \par}
\vskip 1.5em
{\large
\lineskip .5em
\begin{tabular}[t]{c}
Qihui Chen\\ School of Management and Economics \\CUHK--Shenzhen\\ qihuichen@cuhk.edu.cn
\end{tabular}
\hskip 1em plus .17fil
\begin{tabular}[t]{c}
Ka Yan Cheng \\ Department of Economics \\ Emory University\\ ka.yan.cheng@emory.edu
\end{tabular}
\hskip 1em plus .17fil
\begin{tabular}[t]{c}
Zheng Fang \\ Department of Economics \\ Emory University\\ zheng.fang@emory.edu
\end{tabular}
\par}
\end{center}
\par
\vskip 1.5em

\begin{appendices}
\setcounter{section}{0}
\renewcommand{\thesection}{S.\arabic{section}}
\titleformat{\section}{\normalfont\Large\bfseries}{\thesection}{1em}{}
\renewcommand{\theequation}{\thesection.\arabic{equation}}
\numberwithin{equation}{section}

This supplement collects additional technical details and supporting results. 

\section{Additional Technical Details}\label{Sec: Details of examples}

In this section, we discuss an extension of Example \ref{Ex: NPIV} and illustrate verifications of some assumptions in Section \ref{Sec: Riesz regression}.

\subsection{Extension of Example \ref{Ex: NPIV}}\label{Sec: NPIV extensions}

Let $\gamma_0\in\mathbf H=L^2(Z)$ be a solution to $E[Y|W]=E[\gamma_0(Z)|W]$ that is not necessarily unique. Let $\theta_0=E[m(X,\gamma_0)]$ be an identified functional of $\gamma_0$, where $\gamma\mapsto E[m(X,\gamma)]$ is continuous and linear. By the Riesz representation theorem, there exists a unique $\nu_m\in\overline{\mathrm{lin}}(\Gamma-\gamma_0)$ such that $E[m(X,\gamma)]=E[\nu_m(Z)\gamma(Z)]$ for all $\gamma\in\mathbf H$. The debiasing term for orthogonalization continues to take the same form as in \eqref{Eqn: NPIVinfluence} \citep{BennettKallusMaoNeweySyrgkanisUehara2025StrongID}: for all $\gamma\in\mathbf H$ and $\alpha\in\mathbf A$,
\begin{align}
\phi(X,\theta,\gamma,\alpha) = \alpha(W)\{Y-\gamma(Z)\}~.
\end{align}
In particular, $\phi(\cdot,\theta,\gamma,\alpha)$ is linear in $\alpha\in\mathbf A$, allowing the Riesz representer $\alpha_0$ to be characterized through a quadratic optimization problem via Theorem \ref{Thm: Characterization}.

Specifically, assume that there exists some $\bar{\nu}_m\in\mathbf A$ such that  $E[\bar{\nu}_{m}(W)|Z] = \nu_{m}(Z)$. Then Assumption \ref{Ass: Riesz} holds for every $\gamma_0$ in the identified set, with the same $\Pi_0$, $\Psi_0$, and $\Upsilon_0$ as in Example \ref{Ex: NPIV} by the arguments in \eqref{Eqn: NPIV, Psi} and \eqref{Eqn: NPIV, Upsilon}. By Theorem \ref{Thm: Characterization}, there is a unique $\alpha_0\in\mathbf A_0$ satisfying \eqref{Eqn: Local Orthogonal wtr gamma} for all $\delta\in\Gamma-\gamma_0$ which also uniquely maximizes
\begin{align}
\mathfrak C_0(\alpha)
=
-\frac12 E[\alpha(W)^2]
+
E[\bar{\nu}_m(W)\alpha(W)]
\end{align}
over $\mathbf A_0$. Moreover, by Lemma \ref{Lem: linear span}, $\mathbf A_0=\overline{\Pi_0(\mathbf H_0)}$ with $\mathbf H_0=\mathrm{lin}(\Gamma-\gamma_0)$ is independent of the particular choice of $\gamma_0$ within the identified set. Consequently, the characterization of $\alpha_0$ depends only on $\nu_m$ and $\mathbf A_0$, and therefore does not rely on whether $\gamma_0$ is point identified. Hence, the same Riesz regression procedure applies under both point and partial identification of $\gamma_0$, a conclusion that is consistent with the findings of \citet{Chen2021RobustPLM} and \citet{BennettKallusMaoNeweySyrgkanisUehara2025StrongID} in the case without shape constraints.

Our framework differs from \citet{BennettKallusMaoNeweySyrgkanisUehara2025StrongID} in several important respects. First, we focus on a more general parameter $\theta_0$ defined by a moment condition and a more general first step, while \citet{BennettKallusMaoNeweySyrgkanisUehara2025StrongID} study a strongly identified linear functional $\theta_0$ so as to focus on a partially identified first step defined by a class of linear conditional moment restrictions. Second, we accommodate general nonlinear shape constraints on $\gamma_0$. While \citet{BennettKallusMaoNeweySyrgkanisUehara2025StrongID} allow the parameter space $\Gamma$ for $\gamma_0$ to be convex, they require $\Gamma$ to have nonempty interior, thereby ruling out common shape constraints.
%\footnote{In settings with exogenous first steps, \citet{LaanBibautKallusLuedtke2026AutoDML} also mentioned the possibility that their framework accommodates nonlinear constraints on $\gamma_0$. However, they require a first order optimality condition that may not hold under common shape constraints.} 
Third, specializing to the setup of \citet{BennettKallusMaoNeweySyrgkanisUehara2025StrongID}, we establish identification of the Riesz representer $\alpha_0$ and develop a general sieve-based Riesz regression procedure (see Remark \ref{Rem: linear term}). In contrast, \citet{BennettKallusMaoNeweySyrgkanisUehara2025StrongID} directly impose a ``strong identification'' condition to guarantee the existence of $\alpha_0$. Our analysis complements theirs by showing that a Riesz regression procedure can be developed without imposing this strong identification condition at the expense of a suitable sieve approximation. 

Although the same Riesz regression procedure applies under both point and partial identification, the estimation of the nuisance function $\gamma_0$ itself requires additional care when the NPIV model is only partially identified. In particular, one must select a representative element from the identified set of $\gamma_0$ before constructing the debiased estimator; see Section 3.1 of \citet{Chen2021RobustPLM} for a concrete example. Existing approaches typically achieve this by introducing a strictly convex penalty that uniquely selects an element from the identified set. From this perspective, shape restrictions can also be viewed as a form of regularization. By restricting the admissible set of solutions, shape constraints may substantially shrink the identified set and sharpen identification under partial identification; see, e.g., \citet{FreybergerHorowitz2015ShapeID}. Consequently, they may facilitate the selection of a stable representative element of $\gamma_0$ for subsequent debiasing and inference. Developing the estimation theory for $\gamma_0$ and establishing the convergence rates required in Section \ref{Sec: AN} under partial identification and shape restrictions are beyond the scope of the present paper and are left for future research.
  
\iffalse
We stress that, consistent with \citet{BennettKallusMaoNeweySyrgkanisUehara2025StrongID}, we do not require the identification of the first step $\gamma_0$ for $\sqrt n$-estimation and debiased machine learning of $\theta_0$---see also \citet{Santos2011Recover} and \citet{Chen2020RobustPLM}. However, unlike \citet{BennettKallusMaoNeweySyrgkanisUehara2025StrongID}, we do not require the functional strong identification condition either because we work with the instrument space $\mathbf A$ directly: the existence and uniqueness of the Riesz representer $\alpha_0$ is guaranteed under our assumptions. {\color{red} NEED TO LOOK INTO THE CONNECTIONS MORE CLOSELY.}
\fi 

\subsection{The Curvature Condition on \texorpdfstring{$\mathfrak C_0$}{C0}}\label{Sec: curvature}

Next, we show how the curvature condition in Assumption \ref{Ass: Convergence rate, Pi}(iv) holds automatically in Example \ref{Ex: NPIV}. We consider the general case when $\gamma\mapsto E[m(X,\gamma)]$ is not necessarily linear. Given equation \eqref{Eqn: NPIV Psi general}, we may assume $\bar\nu_m\in\mathbf A_0$ without loss of generality by the projection theorem. Moreover, we may rewrite $\mathfrak C_0$ in \eqref{Eqn: NPIV C0 general} by simple algebra:
\begin{align}\label{Eqn: NPIV C0 rewritten}
\mathfrak C_0(\alpha)=-\frac{1}{2}E[(\alpha(W)-\bar\nu_m(W))^2]+\frac{1}{2}E[\bar\nu_m(W)^2]~,
\end{align}
for all $\alpha\in\mathbf A$. Note that $\mathfrak C_0$ is trivially well defined on the whole of $\mathbf A=L^2(W)$ (and hence any enlargement of $\mathbf A_0$ in $\mathbf A$). Equation \eqref{Eqn: NPIV C0 rewritten} implies that $\alpha_0=\bar\nu_m$, $\mathfrak C_0(\alpha_0)=E[\bar\nu_m(W)^2]/2$, and hence, for all $\alpha\in\mathbf A$,
\begin{align}
\mathfrak C_0(\alpha)-\mathfrak C_0(\alpha_0)=-\frac{1}{2}E[(\alpha(W)-\alpha_0(W))^2]=-\frac{1}{2}\|\alpha-\alpha_0\|_{\mathbf A}^2~.
\end{align}

\subsection{The Entropy Conditions}\label{Sec: Entropies}

Finally, we discuss the modulus of continuity condition in Section \ref{Sec: Riesz regression}. In general, the modulus of continuity $\delta_n$ may be derived via symmetrization or bracketing. For $U_{0,n}(\delta)$ in \eqref{Eqn: sieve local size} and i.i.d.\ random variables $\{\varepsilon_i\}_{i=1}^n$ uniformly distributed on $\{-1,1\}$ and independent of $\{X_i\}_{i=1}^n$, symmetrization yields
\begin{multline}\label{Eqn: local Rademacher}
\frac{1}{\sqrt n}E[\sup_{(\eta,\nu,\Pi)\in U_{0,n}(\delta)}|\mathbb G_n(f(\Pi(\eta),\nu)-f(\Pi(\eta_{0,n}),\nu))|]\\
\lesssim E[\sup_{f\in\mathcal F_n(\delta)}|\frac{1}{n}\sum_{i=1}^{n}\varepsilon_if(X_i)|]\equiv \mathscr R_n(\mathcal F_n(\delta))~,
\end{multline}
where $\mathcal F_n(\delta)\equiv \{f(\Pi(\eta),\nu)-f(\Pi(\eta_{0,n}),\nu)\colon (\eta,\nu,\Pi)\in U_{0,n}(\delta)\}$ \citep{VaartWellner1996Book}. Let $\mathrm{star}(\mathcal F_n(\delta))\equiv\{af\colon f\in \mathcal F_n(\delta), a\in[0,1]\}$ be the star hull of $\mathcal F_n(\delta)$. Since $\mathscr R_n(\mathcal F_n(\delta))\le \mathscr R_n(\mathrm{star}(\mathcal F_n(\delta)))$ and $\mathscr R_n(\mathrm{star}(\mathcal F_n(\delta)))/\delta$ is nonincreasing in $\delta\in(0,\infty)$ \citep{Wainwright2019HighStats}, the critical radius $\delta_n^*\equiv\inf\{\delta>0\colon \mathscr R_n(\mathrm{star}(\mathcal F_n(\delta)))\le\delta^2\}$ serves as a valid candidate for $\delta_n$ satisfying Assumption \ref{Ass: Convergence rate, approximation}(v). Fortunately, upper bounds of critical radii have been derived for a variety of sieve spaces including neural networks \citep{FosterSyrgkanis2023Orthogonal,ChernozhukovNeweySinghSyrgkanis2024Adversarial}. 

 % , and $\mathscr R_n(\mathcal F_n(\delta))$ may be viewed as a version of the local Rademacher complexity \citep{KoltchinskiiPanchenko2002Classifier,BartlettBousquetMendelson2005Rademacher,RaskuttiWainwrightYu2012Minimax}  

To illustrate, recall that, for $U_{0,n}(\delta)$ defined in \eqref{Eqn: sieve local size},
\begin{align}
\mathcal F_n(\delta)=\{f(\Pi(\eta),\nu)-f(\Pi(\eta_{0,n}),\nu)\colon (\eta,\nu,\Pi)\in U_{0,n}(\delta)\}~.
\end{align}
The size of $\mathcal F_n(\delta)$ may be measured by uniform entropy or bracketing entropy. Let $N_{[\,]}(\epsilon,\mathcal F_n(\delta),\|\cdot\|_{P,2})$ be the minimum number of $\epsilon$-brackets (measured by $\|\cdot\|_{P,2}$) needed to cover $\mathcal F_n(\delta)$ and define the bracketing entropy integral 
\begin{align}
J_{[\,]}(\delta,\mathcal F_n(\delta),\|\cdot\|_{P,2})\equiv \int_0^\delta \sqrt{1+\log N_{[\,]}(\epsilon,\mathcal F_n(\delta),\|\cdot\|_{P,2})}\,\mathrm d\epsilon~.
\end{align}
Let $N(\epsilon,\mathcal F_n(\delta),\|\cdot\|_{Q,2})$  be the minimal number of $\|\cdot\|_{Q,2}$-balls of radius $\epsilon$ needed to cover $\mathcal F_n(\delta)$ and define the uniform entropy integral 
\begin{align}
J(\delta,\mathcal F_n(\delta),\|\cdot\|_2)\equiv \sup_{Q}\int_0^\delta\sqrt{1+\log N(\epsilon,\mathcal F_n(\delta),\|\cdot\|_{Q,2})}\,\mathrm d\epsilon~,
\end{align}
where the supremum is taken over all discrete probability measures $Q$. Also, define $\mathbf A_n\equiv \{\Pi(\eta)\colon \eta\in\Delta\Gamma_n,\Pi\in\mathbf D_n\}$ and recall that $\mathbf A_{0,n}\equiv \{\Pi_0(\eta)\colon \eta\in\Delta\Gamma_n\}$.

To formalize our discussions, we impose the following assumption. 

\begin{ass}\label{Ass: entropy1}
(i) $\mathcal F_n(\delta)$ is bounded under the uniform norm by some $\tau_n\ge 1$ and $\sup_{f\in\mathcal F_n(\delta)}\|f\|_{P,2}\lesssim \tau_n\delta$; (ii) $\sup_{f\in\mathcal F_n(\delta)}|\sum_{i=1}^{n}e_if(X_i)|$ and $\sup_{f\in\mathcal F_n(\delta)}|\sum_{i=1}^{n}e_if^2(X_i)|$ are measurable for every $n\in\mathbf N$, $(e_1,\ldots,e_n)\in\mathbf R^n$, and $\delta>0$.
\end{ass}

Assumption \ref{Ass: entropy1}(ii) is satisfied by neural networks and other common sieves, which may be verified by appealing to Example 2.3.4 in \citet{VaartWellner2023Book}. The uniform boundedness condition in Assumption \ref{Ass: entropy1}(i) is mild and allows for the bound to diverge. The moment condition $\sup_{f\in\mathcal F_n(\delta)}\|f\|_{P,2}\lesssim \tau_n\delta$ in Assumption \ref{Ass: entropy1}(i) follows by exploiting the particular structures of $\mathcal F_n(\delta)$. Specifically, by Assumption \ref{Ass: Convergence rate, population} and linearity of $\Pi\in\mathbf D_n$, we may obtain that, for $(\eta,\nu,\Pi)\in U_{0,n}(\delta)$,
\begin{multline}\label{Eqn: moment bound of sieve}
f(X,\Pi(\eta),\nu)-f(X,\Pi(\eta_{0,n}),\nu)\\
 = \frac{1}{2}[l(X,\Pi(\eta-\eta_{0,n}),\Pi(\eta+\eta_{0,n}),\nu)]+h(X,\Pi(\eta-\eta_{0,n}),\nu)~.
\end{multline}
In Examples \ref{Ex: average linear effects}--\ref{Ex: NPIV}, if $\mathbf A_n$ and $\mathbf B_n$ are bounded by $\sqrt{\tau_n}$ under the uniform norm for some $\tau_n\ge  1$, then \eqref{Eqn: moment bound of sieve} implies $\sup_{f\in\mathcal F_n(\delta)}\|f\|_{P,2}\lesssim \tau_n\delta$.

The following well known lemma (see, e.g., Section 3.4.2 in \citet{VaartWellner2023Book}) links the moment $E[\sup_{f\in\mathcal F_n(\delta)}|\mathbb G_nf|]$ to the sizes of $\mathcal F_n(\delta)$ measured by the uniform entropy and the bracketing entropy. 

\begin{lem}\label{Lem: entropy general}
If Assumptions \ref{Ass: Convergence rate, sample}(i) and \ref{Ass: entropy1}(i) hold, then $\delta_n>0$ satisfies the inequality $\omega_n(\delta_n)/\sqrt n\le \delta_n^2$ in Assumption \ref{Ass: Convergence rate, Pi}(iii) whenever
\begin{align}\label{Eqn: moment bound bracketing}
J_{[\,]}(\tau_n\delta_n,\mathcal F_n(\delta_n),\|\cdot\|_{P,2})\le \sqrt n \delta_n^2~. 
\end{align}
If in addition Assumption \ref{Ass: entropy1}(ii) holds, then the same holds whenever
\begin{align}\label{Eqn: moment bound symmetrization} % 
J(\tau_n\delta_n,\mathcal F_n(\delta_n),\|\cdot\|_2)\le \sqrt n \delta_n^2~. 
\end{align}
% \footnote{It suffices that the two suprema are measurable with respect to the completion of the underlying probability space for every $n\in\mathbf N$, $(e_1,\ldots,e_n)\in\mathbf R^n$, and $\delta>0$.} 
% The link of rates of convergence to the modulus of continuity of empirical processes: \citet{Geer1995Sieve}. 
\end{lem}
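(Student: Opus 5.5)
The plan is to invoke the standard maximal inequalities of Section 3.4.2 in \citet{VaartWellner2023Book}, applied to the class $\mathcal F_n(\delta)$. Each inequality bounds $E[\sup_{f\in\mathcal F_n(\delta)}|\mathbb G_nf|]$ by an entropy integral evaluated at the $L^2(P)$ radius of the class, plus a correction term involving the uniform bound. The modulus of continuity $\omega_n$ is then defined as this bound, and the inequality $\omega_n(\delta_n)/\sqrt n\le\delta_n^2$ is checked using the hypotheses \eqref{Eqn: moment bound bracketing} or \eqref{Eqn: moment bound symmetrization}.

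\textbf{Bracketing case.} Assumption \ref{Ass: Convergence rate, sample}(i) supplies i.i.d.\ data. Assumption \ref{Ass: entropy1}(i) gives a uniform envelope $\tau_n$ and the radius bound $\sup_f\|f\|_{P,2}\lesssim\tau_n\delta$. Lemma 3.4.2 of \citet{VaartWellner2023Book} then yields
\begin{align*}
E\Big[\sup_{f\in\mathcal F_n(\delta)}|\mathbb G_nf|\Big]\lesssim J_{[\,]}(\tau_n\delta,\mathcal F_n(\delta),\|\cdot\|_{P,2})\Big(1+\frac{J_{[\,]}(\tau_n\delta,\mathcal F_n(\delta),\|\cdot\|_{P,2})}{(\tau_n\delta)^2\sqrt n}\tau_n\Big).
\end{align*}
Take $\omega_n(\delta)$ to be the right-hand side. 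At $\delta=\delta_n$, condition \eqref{Eqn: moment bound bracketing} gives $J_{[\,]}\le\sqrt n\delta_n^2$. The correction factor is then at most $1+\delta_n^2\tau_n/(\tau_n^2\delta_n^2)=1+1/\tau_n\le2$. Hence $\omega_n(\delta_n)\lesssim\sqrt n\delta_n^2$, which is the required inequality up to a constant; the constant can be absorbed into $\lesssim$ in Assumption \ref{Ass: Convergence rate, Pi}(iii).

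\textbf{Uniform entropy case.} Assumption \ref{Ass: entropy1}(ii) supplies the measurability needed for symmetrization. The uniform-entropy analogue, Theorem 2.14.1 / Lemma 3.4.3-type bound of \citet{VaartWellner2023Book}, gives the same bound with $J(\tau_n\delta,\mathcal F_n(\delta),\|\cdot\|_2)$ in place of $J_{[\,]}$, after normalizing by the envelope. The identical algebra under \eqref{Eqn: moment bound symmetrization} closes the argument.

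\textbf{Main obstacle.} The delicate step is the monotonicity requirement in Assumption \ref{Ass: Convergence rate, Pi}(iii), namely that $\omega_n(\delta)/\delta^\varsigma$ be nonincreasing for some $\varsigma<2$. Two things complicate it. The class $\mathcal F_n(\delta)$ itself depends on $\delta$, and the correction term is not obviously well behaved. Since the lemma only asserts the inequality at $\delta_n$, I would first argue monotonicity through the standard fact that $\delta\mapsto J(\delta)/\delta$ is nonincreasing, so one may take $\varsigma=1$. If that fails, I would replace $\omega_n$ by its least majorant of the form $\delta^\varsigma\sup_{\delta'\ge\delta}\omega_n(\delta')/\delta'^\varsigma$, which does not change the bound at $\delta_n$ provided the integral grows at most linearly.
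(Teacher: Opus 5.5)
Your argument is essentially the paper's: apply the localized bracketing maximal inequality (the paper cites Theorem 2.14.17$'$ of \citet{VaartWellner2023Book}) and, in the uniform-entropy case, the local maximal inequality (Theorem 2.14.2 there, rescaled by the envelope $\tau_n$). Then use the entropy condition at $\delta_n$ together with $\tau_n\ge 1$ to bound the correction factor by $2$. Your choice of $\omega_n$ as the full right-hand side is, if anything, slightly cleaner, because it makes $E[\sup_{f\in\mathcal F_n(\delta)}|\mathbb G_nf|]\lesssim\omega_n(\delta)$ hold for every $\delta$. The monotonicity discussion is unnecessary: the lemma only asserts $\omega_n(\delta_n)/\sqrt n\le\delta_n^2$, and the paper's proof does not address monotonicity either.
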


Given Lemma \ref{Lem: entropy general}, one may then verify Assumption \ref{Ass: Convergence rate, Pi}(iii) by leveraging recent entropy results of deep neural networks---see, e.g., \citet{KajiManresaPouliot2023Adversarial}, \citet{OuBolcskei2026Covering}, and references therein. We omit the details for brevity.

\iffalse
In deep learning, one may further bound the bracketing numbers of $\mathcal F_n(\delta)$. Suppose that $|l(X,a,b,\nu)|\le \{\|a\|_{\mathbf A,s}\vee \|\nu\|_{\mathbf B,s}\} F_n(X)$ and  $|h(X,a,\nu)|\le \{\|a\|_{\mathbf A,s}\vee \|\nu\|_{\mathbf B,s}\} F_n(X)$ for all $a,b\in \mathbf A_n-\mathbf A_n$, all $\nu\in \mathbf B_n-\mathbf B_n$, and some function $F_n$, where $\|\cdot\|_{\mathbf A,s}$ and $\|\cdot\|_{\mathbf B,s}$ are some norms (e.g., the uniform norm). By simple algebra we may then obtain that, for any $(\eta,\nu,\Pi),(\eta',\nu',\Pi')\in U_{0,n}(\delta)$,

\begin{align}
|[f(X,\Pi(\eta),\nu)-f(X,\Pi(\eta_{0,n}),\nu)]-[f(X,\Pi'(\eta'),\nu')-f(X,\Pi'(\eta_{0,n}),\nu')]|\le 
\end{align}

\textcolor{red}{MORE DETAILS FOR OUR ARCHITECTURE OF THE  NEURAL NETWORKS.}

% https://arxiv.org/abs/2410.06378
% Kaji, Shimotsu & Suzuki: https://arxiv.org/pdf/2007.06169
\fi

\noindent{\sc Proof of Lemma \ref{Lem: entropy general}:}  By Theorem 2.14.17' in \citet{VaartWellner2023Book} (with $\delta$ there identified with $\tau_n\delta$), we note that
\begin{multline}\label{Eqn: moment bound bracketing1}
E[\sup_{f\in\mathcal F_n(\delta)}|\mathbb G_nf|]\lesssim J_{[\,]}(\tau_n\delta,\mathcal F_n(\delta),\|\cdot\|_{P,2})(1+\frac{J_{[\,]}(\tau_n\delta,\mathcal F_n(\delta),\|\cdot\|_{P,2})}{\sqrt n\tau_n^2\delta^2}\tau_n)\\
\le J_{[\,]}(\tau_n\delta,\mathcal F_n(\delta),\|\cdot\|_{P,2})(1+\frac{J_{[\,]}(\tau_n\delta,\mathcal F_n(\delta),\|\cdot\|_{P,2})}{\sqrt n\delta^2})~,
\end{multline}
where the second inequality is due to $\tau_n\ge 1$. If \eqref{Eqn: moment bound bracketing} holds, then \eqref{Eqn: moment bound bracketing1} implies
\begin{align}\label{Eqn: moment bound bracketing2}
E[\sup_{f\in\mathcal F_n(\delta)}|\mathbb G_nf|]\lesssim J_{[\,]}(\tau_n\delta,\mathcal F_n(\delta),\|\cdot\|_{P,2})~.
\end{align}
The conclusion of the lemma then follows by taking the right hand side of \eqref{Eqn: moment bound bracketing2} to be $\omega_n(\delta)$. If the measurability condition holds, then by Theorem 2.14.2 in \citet{VaartWellner2023Book}, rescaling, and change of variables, we may obtain that
\begin{multline}\label{Eqn: moment bound symmetrization1}
E[\sup_{f\in\mathcal F_n(\delta)}|\mathbb G_nf|]\lesssim J(\tau_n \delta,\mathcal F_n(\delta),\|\cdot\|_2)(1+\frac{J(\tau_n \delta,\mathcal F_n(\delta),\|\cdot\|_2)}{\sqrt n\tau_n\delta^2})\\
\le J(\tau_n \delta,\mathcal F_n(\delta),\|\cdot\|_2)(1+\frac{J(\tau_n \delta,\mathcal F_n(\delta),\|\cdot\|_2)}{\sqrt n\delta^2})~.
\end{multline}
Thus, if \eqref{Eqn: moment bound symmetrization} holds, then arguments analogous to those leading to the first conclusion of the lemma allow us to obtain the second conclusion. \qed

\section{Supporting Results}\label{Sec: Supporting}

The theorem below is concerned with conditions under which there exists for each $\Psi\in\mathbf A_0^*$ (the topological dual of $\mathbf A_0$) a unique $\alpha=\alpha_0$ such that, for all $b\in\mathbf A_0$ ,
\begin{align}\label{Eqn: Riesz well-posedness}
\Psi (b)+\Upsilon_0(\alpha_0,b)=0~.
\end{align}

\begin{thm}\label{Thm: Riesz ID2}
Let $\Upsilon_0\colon \mathbf A_0\times\mathbf A_0\to\mathbf R$ be continuous, bilinear, symmetric, and positive/negative for a Hilbert space $\mathbf A_0$ with norm $\|\cdot\|_{\mathbf A}$. Then $\Upsilon_0$ is coercive if and only if there exists for each $\Psi\in\mathbf A_0^*$ a unique $\alpha_0\in\mathbf A_0$ satisfying \eqref{Eqn: Riesz well-posedness} for all $b\in\mathbf A_0$.
\end{thm}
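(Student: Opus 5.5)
Since $\Upsilon_0$ is negative exactly when $-\Upsilon_0$ is positive, and replacing $(\Upsilon_0,\Psi)$ by $(-\Upsilon_0,-\Psi)$ leaves \eqref{Eqn: Riesz well-posedness} unchanged, I will assume throughout that $\Upsilon_0$ is positive. The "only if" direction then follows directly from the Lax--Milgram theorem, exactly as in the proof of Theorem \ref{Thm: Riesz ID}. If $\Upsilon_0$ is continuous, bilinear and coercive on the Hilbert space $\mathbf A_0$, then for each $\Psi\in\mathbf A_0^*$ the functional $b\mapsto-\Psi(b)$ is continuous and linear. Lax--Milgram therefore gives a unique $\alpha_0\in\mathbf A_0$ with $\Upsilon_0(\alpha_0,b)=-\Psi(b)$ for all $b\in\mathbf A_0$. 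Symmetry and positivity are not needed for this direction.

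For the "if" direction I will use symmetry and positivity to turn $\Upsilon_0$ into an inner product. Define $\langle a,b\rangle_\Upsilon\equiv\Upsilon_0(a,b)$, which is a symmetric positive semidefinite bilinear form. Let $T\colon\mathbf A_0\to\mathbf A_0$ be the bounded self-adjoint operator given by the Riesz representation theorem, so that $\Upsilon_0(a,b)=\langle Ta,b\rangle_{\mathbf A}$. Positivity of $\Upsilon_0$ makes $T$ a positive operator. In operator form the hypothesis says that for every $\Psi\in\mathbf A_0^*$, represented by some $\psi\in\mathbf A_0$, the equation $T\alpha=-\psi$ has a unique solution. Hence $T$ is a bijection of $\mathbf A_0$ onto itself.

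By the bounded inverse theorem (open mapping), $T^{-1}$ is bounded, so $\|Ta\|_{\mathbf A}\ge c\|a\|_{\mathbf A}$ for some $c>0$. The remaining step is to upgrade this lower bound on $\|Ta\|_{\mathbf A}$ to the lower bound $\langle Ta,a\rangle_{\mathbf A}\ge\varkappa\|a\|_{\mathbf A}^2$, and this is where positivity and self-adjointness are essential. I would argue through the spectrum: for a positive self-adjoint $T$, $\sigma(T)\subset[0,\|T\|]$. Invertibility excludes $0$, and since the spectrum is closed, $\sigma(T)\subset[m,\|T\|]$ for some $m>0$. The spectral theorem then gives $\langle Ta,a\rangle_{\mathbf A}\ge m\|a\|_{\mathbf A}^2$, so $\Upsilon_0$ is coercive with $\varkappa=m$.

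If I want to avoid spectral theory, there is a more elementary route. Take the positive square root $S=T^{1/2}$; then $S$ is injective with closed range, because $T=S^2$ is surjective and $\|Ta\|\le\|S\|\|Sa\|$. It follows that $\|Sa\|_{\mathbf A}\ge c\|a\|_{\mathbf A}/\|S\|$, and so $\langle Ta,a\rangle_{\mathbf A}=\|Sa\|_{\mathbf A}^2\ge (c/\|S\|)^2\|a\|_{\mathbf A}^2$. The main obstacle is this passage from invertibility to coercivity. The counterexample $T=\begin{pmatrix}0&1\\-1&0\end{pmatrix}$, which is invertible but not coercive, shows that the step genuinely depends on symmetry and positivity, and the plan above is designed so that both hypotheses are used explicitly there.
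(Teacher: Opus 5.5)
Your proof is correct. Its skeleton matches the paper's: reduce to positive $\Upsilon_0$, get the easy direction from Lax--Milgram, and for the converse use the bounded inverse theorem to get $\|Ta\|_{\mathbf A}\ge c\|a\|_{\mathbf A}$. (The paper works with $\Phi_0\colon\mathbf A_0\to\mathbf A_0^*$, $\Phi_0(a)(b)=\Upsilon_0(a,b)$, which is your $T$ composed with the Riesz isomorphism.)

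Where you differ is the upgrade from a lower bound on $\|Ta\|_{\mathbf A}$ to one on $\langle Ta,a\rangle_{\mathbf A}$. The paper stays with the bilinear form. It applies Cauchy--Schwarz to the positive semidefinite symmetric form $\Upsilon_0$, which is exactly the semi-inner product $\langle\cdot,\cdot\rangle_\Upsilon$ you define and then never use. This gives $|\Upsilon_0(a,b)|^2\le\Upsilon_0(a,a)\Upsilon_0(b,b)\le C\,\Upsilon_0(a,a)\|b\|_{\mathbf A}^2$. Hence $\Upsilon_0(a,a)\ge C^{-1}\|\Phi_0(a)\|_{op}^2\ge C^{-1}C'^2\|a\|_{\mathbf A}^2$.

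Your square-root route proves the same inequality, $\|Ta\|_{\mathbf A}^2\le\|T\|\,\langle Ta,a\rangle_{\mathbf A}$ (since $\|S\|^2=\|T\|$), with the same resulting constant. But it needs the existence of $T^{1/2}$, so it is not really more elementary than your spectral route; Cauchy--Schwarz for $\Upsilon_0$ is the genuinely elementary version.

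Your spectral argument does buy something. It gives the sharp constant $\varkappa=\min\sigma(T)=\inf_{\|a\|_{\mathbf A}=1}\langle Ta,a\rangle_{\mathbf A}$, whereas the paper's chain gives only $\min\sigma(T)^2/\|T\|$. It also gives the clean reading that coercivity means the spectrum of $T$ is bounded away from zero. The paper's argument is shorter and self-contained: it uses only Cauchy--Schwarz, continuity, and the bounded inverse theorem, with no Riesz representation of $\Upsilon_0$ as an operator and no spectral theory.
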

\noindent{\sc Proof:} This is essentially Exercise 25.7 in \citet{ErnGuermond2021FiniteII} and we include a proof here for completeness. Without loss of generality, assume that $\Upsilon_0\colon \mathbf A_0\times\mathbf A_0\to\mathbf R$ is continuous, bilinear, symmetric, and positive; the case with negative $\Upsilon_0$ can be dealt with by considering $-\Upsilon_0$. Define $\Phi_0\colon \mathbf A_0\to \mathbf A_0^*$ by $\Phi_0(a)(b)=\Upsilon_0(a,b)$. It is clear that the bijectivity of $\Phi_0$ is equivalent to the existence and uniqueness of $\alpha_0$ for each $\Psi\in\mathbf A_0^*$ such that \eqref{Eqn: Riesz well-posedness} holds for all $b\in\mathbf A_0$. By $\Upsilon_0$ being bilinear, continuous, symmetric, and positive and Lemma 4.2.1 in \citet{BoffiBrezziFortin2013MixedFE}, we have: 
\begin{align}\label{Thm: Riesz ID2, aux1}
|\Upsilon_0(a,b)|^2\le  \Upsilon_0(a,a) \Upsilon_0(b,b)~,
\end{align}
for all $a,b\in\mathbf A_0$. Moreover, there exists some constant $C>0$ such that $\Upsilon_0(b,b)\le C\|b\|_{\mathbf A}^2$. Therefore, we may deduce from \eqref{Thm: Riesz ID2, aux1} that
\begin{align}\label{Thm: Riesz ID2, aux2}
\Upsilon_0(a,a)\ge C^{-1} \sup_{b\in\mathbf A_0\backslash\{0\}}\frac{|\Upsilon_0(a,b)|^2}{\|b\|_{\mathbf A}^2}=C^{-1} \|\Phi_0(a)\|_{op}^2~,
\end{align}
where $\|\Phi_0(a)\|_{op}$ is the operator norm of $\Phi_0(a)\in\mathbf A_0^*$. 

Note that $\Phi_0$ is continuous and linear since $\Upsilon_0$ is continuous and bilinear; see, e.g., \citet[p.~14]{ErnGuermond2021FiniteII}. Suppose that $\Phi_0\colon \mathbf A_0\to \mathbf A_0^*$ is bijective. Then by Theorem C.49 in \citet{ErnGuermond2021FiniteII}, there is some $C'>0$ such that 
\begin{align}\label{Thm: Riesz ID2, aux3}
\|\Phi_0(a)\|_{op}\ge C'\|a\|_{\mathbf A}~.
\end{align}
The coercivity of $\Upsilon_0$ then follows by combining results \eqref{Thm: Riesz ID2, aux2} and \eqref{Thm: Riesz ID2, aux3}. The converse follows directly by the Lax-Milgram theorem.\qed 

\begin{lem}\label{Lem: linear span}
Let $\Gamma$ be a nonempty set in a vector space and $\gamma_0\in\Gamma$. Then $\mathrm{lin}(\Gamma-\gamma_0)=\mathrm{lin}(\Gamma-\Gamma)$. If in addition $0\in\Gamma$, then $\mathrm{lin}(\Gamma-\gamma_0)=\mathrm{con}(\Gamma)-\mathrm{con}(\Gamma)$.
\end{lem}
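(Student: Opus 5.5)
The plan is to prove both identities by double inclusion, using only the definitions of $\mathrm{lin}$ (the set of finite linear combinations) and $\mathrm{con}$ (the set of finite nonnegative combinations, i.e., the smallest convex cone). Everything is purely algebraic; no topology is involved.

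For the first claim, I note that $\Gamma-\gamma_0\subset\Gamma-\Gamma$, because $\gamma_0\in\Gamma$. Hence $\mathrm{lin}(\Gamma-\gamma_0)\subset\mathrm{lin}(\Gamma-\Gamma)$. Conversely, every generator of $\Gamma-\Gamma$ can be written as
\begin{align*}
\gamma-\gamma'=(\gamma-\gamma_0)-(\gamma'-\gamma_0)\in\mathrm{lin}(\Gamma-\gamma_0)~,
\end{align*}
so $\Gamma-\Gamma\subset\mathrm{lin}(\Gamma-\gamma_0)$. Since the right-hand side is a subspace, minimality of $\mathrm{lin}(\Gamma-\Gamma)$ gives the reverse inclusion.

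For the second claim, assume $0\in\Gamma$. By the first part, applied with the element $0\in\Gamma$ in place of $\gamma_0$, we get $\mathrm{lin}(\Gamma-\gamma_0)=\mathrm{lin}(\Gamma-\Gamma)=\mathrm{lin}(\Gamma-0)=\mathrm{lin}(\Gamma)$. It therefore suffices to show $\mathrm{lin}(\Gamma)=\mathrm{con}(\Gamma)-\mathrm{con}(\Gamma)$.

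\begin{itemize}
\item[$\supseteq$:] $\mathrm{con}(\Gamma)\subset\mathrm{lin}(\Gamma)$, and $\mathrm{lin}(\Gamma)$ is closed under subtraction, so $\mathrm{con}(\Gamma)-\mathrm{con}(\Gamma)\subset\mathrm{lin}(\Gamma)$.
\item[$\subseteq$:] Take a finite combination $\sum_j c_j\gamma_j$ with $\gamma_j\in\Gamma$ and split the coefficients by sign. This gives $\sum_{c_j>0}c_j\gamma_j-\sum_{c_j<0}|c_j|\gamma_j$, which is a difference of two elements of $\mathrm{con}(\Gamma)$.
\end{itemize}

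The only subtle point is the empty sum that arises when all coefficients share one sign. We need $0\in\mathrm{con}(\Gamma)$, which holds because a cone contains $0$; in any case $0\in\Gamma\subset\mathrm{con}(\Gamma)$ by assumption.

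The main obstacle is making sure $\mathrm{con}(\Gamma)$ is identified with the set of nonnegative finite combinations, as the paper's definition ("smallest convex cone containing $A$") implies. I would state this identification briefly: that set is a convex cone containing $\Gamma$, and any convex cone containing $\Gamma$ contains all such combinations.
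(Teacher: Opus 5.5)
Your proof is correct and takes essentially the same approach as the paper: the first identity is proved by the same two inclusions, and the second by splitting coefficients into positive and negative parts. The only difference is organizational. You reapply the first claim at the element $0\in\Gamma$ to reduce to $\mathrm{lin}(\Gamma)=\mathrm{con}(\Gamma)-\mathrm{con}(\Gamma)$, whereas the paper sign-splits combinations of $\gamma_j-\gamma_0$ directly and, for the converse, uses the generator $0-\gamma_0\in\Gamma-\gamma_0$; your reduction makes the role of the assumption $0\in\Gamma$ a bit more transparent.
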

\noindent{\sc Proof:} Clearly, $\mathrm{lin}(\Gamma-\gamma_0)\subset\mathrm{lin}(\Gamma-\Gamma)$ since $\gamma_0\in\Gamma$. The reverse is also obvious since $\gamma_1-\gamma_2=(\gamma_1-\gamma_0)-(\gamma_2-\gamma_0)$ for any $\gamma_1,\gamma_2\in\Gamma$. For the second claim, suppose that $0\in\Gamma$ and let $\gamma\in\mathrm{lin}(\Gamma-\gamma_0)$. Then, for some $a_j\in\mathbf R$, $\gamma_j\in\Gamma$, and $J<\infty$, we have:   
\begin{align}\label{Eqn: linear span, aux1}
\gamma=\sum_{j=1}^{J} a_j\{\gamma_j-\gamma_0\}=\sum_{j=1}^{J} a_j^+\gamma_j+\sum_{j=1}^{J} a_j^-\gamma_0 -[\sum_{j=1}^{J} a_j^-\gamma_j+\sum_{j=1}^{J} a_j^+\gamma_0]~,
\end{align}
where $a^+\equiv \max\{a,0\}$ and $a^-\equiv \max\{-a,0\}$ so that $a=a^+-a^-$ for any $a\in\mathbf R$. It follows from \eqref{Eqn: linear span, aux1} and $\gamma_0\in\Gamma$ that $\gamma\in\mathrm{con}(\Gamma)-\mathrm{con}(\Gamma)$. Conversely, suppose $\gamma\in \mathrm{con}(\Gamma)-\mathrm{con}(\Gamma)$. Then for some $a_j,b_j\in\mathbf R_+$, $\gamma_j\in\Gamma$, and $J<\infty$, we may write
\begin{align}\label{Eqn: linear span, aux2}
\gamma=\sum_{j=1}^{J}a_j\gamma_j-\sum_{j=1}^{J}b_j\gamma_j=\sum_{j=1}^{J}(a_j-b_j)\{\gamma_j-\gamma_0\}-\sum_{j=1}^{J}(a_j-b_j)\{0-\gamma_0\}~,
\end{align}
which implies $\gamma\in\mathrm{lin}(\Gamma-\gamma_0)$ since $0\in\Gamma$. This completes the proof of the lemma. \qed

\iffalse
\begin{lem}\label{Lem: Riesz consistency}
Let Assumption \ref{Ass: Convergence rate, population} hold. If $J_n\colon\mathbf A_{0,n}\to\mathbf R$ convex and $\alpha\mapsto l(X,\alpha,\alpha,\nu)$ concave whenever $\|\nu-\nu_0\|_{\mathbf B}<\delta$ for some $\delta>0$, then it follows that $\|\hat\alpha_n-\alpha_{0,n}\|_{\mathbf A}=o_p(1)$.
\end{lem}
\noindent{\sc Proof:} By Proposition 25.25 in \citet{Zeidler1990IIB}, $\mathfrak C_0(\alpha)$ is concave in $\alpha\in\mathbf A_{0,n}$.

and $\hat{\mathfrak C}_n(\alpha)-\lambda_nJ(\alpha)$ are both

This lemma is a variant of Problem 3.2.6 in \citet{VaartWellner2023Book}---note that $\hat\alpha_n$ is an approximate maximizer.

The proof strategy outlined by van der Vaart remains valid, we thus omit the proof for simplicity. \qed
\fi

\begin{lem}\label{Lem: error bound}
Let the conditions of Theorem \ref{Thm: Riesz ID} hold with $\Upsilon_0$ negative. If $\alpha_{0,n}\in\mathbf A_{0,n}\subset\mathbf A_0$ maximizes $\mathfrak C_0$ over $\mathbf A_{0,n}$ and $\mathbf A_{0,n}$ is starshaped at $\alpha_{0,n}$,\footnote{A set $A$ in a vector space is starshaped at $a_0$ if $a_0+t(a-a_0)\in A$ for any $a\in A$ and $t\in[0,1]$.} then it follows that 
\begin{align}
\|\alpha_{0,n}-\alpha_0\|_{\mathbf A}\le \frac{\|\Upsilon_0\|_{op}}{\varkappa}d_{\mathbf A}(\alpha_0,\mathbf A_{0,n})~.
\end{align}
\end{lem}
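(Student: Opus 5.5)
This is a Céa-type estimate. The plan is to combine a variational inequality for $\alpha_{0,n}$ on the starshaped set $\mathbf A_{0,n}$ with the Galerkin-type orthogonality \eqref{Thm: Riesz ID, aux2} satisfied by $\alpha_0$ on $\mathbf A_0$.

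First, take any $a\in\mathbf A_{0,n}$ and $t\in(0,1]$. Since $\mathbf A_{0,n}$ is starshaped at $\alpha_{0,n}$, the point $\alpha_{0,n}+t(a-\alpha_{0,n})$ lies in $\mathbf A_{0,n}$, so maximality of $\alpha_{0,n}$ gives
\[
0\ge \mathfrak C_0(\alpha_{0,n}+t(a-\alpha_{0,n}))-\mathfrak C_0(\alpha_{0,n}) = t\{\Upsilon_0(\alpha_{0,n},a-\alpha_{0,n})+\Psi_0(a-\alpha_{0,n})\}+\tfrac{t^2}{2}\Upsilon_0(a-\alpha_{0,n},a-\alpha_{0,n}).
\]
The expansion uses bilinearity and symmetry of $\Upsilon_0$ from Theorem \ref{Thm: Characterization}. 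Dividing by $t$ and letting $t\downarrow0$ yields
\[
\Upsilon_0(\alpha_{0,n},a-\alpha_{0,n})+\Psi_0(a-\alpha_{0,n})\le 0 .
\]

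Second, because $a-\alpha_{0,n}\in\mathbf A_0$ (as $\mathbf A_{0,n}\subset\mathbf A_0$ and $\mathbf A_0$ is linear), \eqref{Thm: Riesz ID, aux2} gives
\[
\Psi_0(a-\alpha_{0,n})=-\Upsilon_0(\alpha_0,a-\alpha_{0,n}).
\]
Substituting, we get $\Upsilon_0(\alpha_{0,n}-\alpha_0,a-\alpha_{0,n})\le0$. Equivalently, with $-\Upsilon_0$ positive, this reads $(-\Upsilon_0)(\alpha_0-\alpha_{0,n},a-\alpha_{0,n})\le 0$.

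Third, write $e\equiv\alpha_0-\alpha_{0,n}$ and decompose $e=(\alpha_0-a)+(a-\alpha_{0,n})$. Coercivity and negativity of $\Upsilon_0$ give
\[
\varkappa\|e\|_{\mathbf A}^2\le -\Upsilon_0(e,e)=-\Upsilon_0(e,\alpha_0-a)-\Upsilon_0(e,a-\alpha_{0,n})\le -\Upsilon_0(e,\alpha_0-a)\le\|\Upsilon_0\|_{op}\|e\|_{\mathbf A}\|\alpha_0-a\|_{\mathbf A}.
\]
The step discarding $-\Upsilon_0(e,a-\alpha_{0,n})$ uses the inequality from the second step. Dividing by $\|e\|_{\mathbf A}$ (the case $e=0$ is trivial) and taking the infimum over $a\in\mathbf A_{0,n}$ gives the bound $\|\alpha_{0,n}-\alpha_0\|_{\mathbf A}\le\|\Upsilon_0\|_{op}\,d_{\mathbf A}(\alpha_0,\mathbf A_{0,n})/\varkappa$.

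The main obstacle is the first step. Since $\mathbf A_{0,n}$ is not a subspace, Galerkin orthogonality does not hold exactly. Starshapedness is precisely what is needed to get the one-sided variational inequality. That inequality in turn has the right sign to be dropped in the Céa argument, once one checks carefully that the sign conventions, with $\Upsilon_0$ negative, line up.
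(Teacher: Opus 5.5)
Your proposal is correct and follows essentially the same C\'{e}a-type route as the paper's proof: coercivity, the one-sided inequality $\Upsilon_0(\alpha_{0,n},b)+\Psi_0(b)\le 0$ for $b\in\mathbf A_{0,n}-\alpha_{0,n}$ (from starshapedness and maximality), the identity $\Psi_0(b)+\Upsilon_0(\alpha_0,b)=0$ on $\mathbf A_0$, and the operator-norm bound followed by an infimum over $\mathbf A_{0,n}$. Like the paper, you use symmetry of $\Upsilon_0$, which is not literally among the hypotheses of Theorem \ref{Thm: Riesz ID} but is needed for the lemma to hold; your final bound correctly carries $\|\alpha_0-a\|_{\mathbf A}$, which is what the paper's last two displays should read in place of $\|\alpha-\alpha_{0,n}\|_{\mathbf A}$.
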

\noindent{\sc Proof:} This is a slight generalization of C\'{e}a's lemma (see, e.g., Lemma 26.13 in \citet{ErnGuermond2021FiniteII}). By coercivity of $\Upsilon_0$,  we note that, for all $\alpha\in\mathbf A_{0,n}$, 
\begin{align}\label{Eqn: error bound, aux1}
\varkappa \|\alpha_{0,n}-\alpha_0\|_{\mathbf A}^2 & \le -\Upsilon_0(\alpha_{0,n}-\alpha_0,\alpha_{0,n}-\alpha_0)\notag\\
&\le -\Upsilon_0(\alpha_{0,n}-\alpha_0,\alpha-\alpha_0)-\Upsilon_0(\alpha_{0,n}-\alpha_0,\alpha_{0,n}-\alpha)~,
\end{align}
where the second inequality is due to the linearity of $b\mapsto \Upsilon_0(a,b)$. Since $\mathbf A_{0,n}$ is starshaped at $\alpha_{0,n}$ and $\alpha_{0,n}$ maximizes $\mathfrak C_0$ over $\mathbf A_{0,n}$, we obtain by the linearity and symmetry in Assumption \ref{Ass: Riesz} that, for any $b\in\mathbf A_{0,n}-\alpha_{0,n}$,
\begin{align}\label{Eqn: error bound, aux2}
0\ge\lim_{t\downarrow 0} \{\mathfrak C_0(\alpha_{0,n}+tb)-\mathfrak C_0(\alpha_{0,n})\}=\Upsilon_0(\alpha_{0,n},b)+\Psi_0(b)~.
\end{align}
Moreover, by Theorem \ref{Thm: Riesz ID}, $\alpha_0$ satisfies $\Psi_0(b)+\Upsilon_0(\alpha_0,b)=0$ for all $b\in\mathbf A_{0,n}$. This, together with \eqref{Eqn: error bound, aux2} and linearity of $a\mapsto \Upsilon_0(a,b)$, implies that, for all $\alpha\in\mathbf A_{0,n}$,
\begin{multline}\label{Eqn: error bound, aux3}
-\Upsilon_0(\alpha_{0,n}-\alpha_0,\alpha_{0,n}-\alpha)=\Upsilon_0(\alpha_{0,n},\alpha-\alpha_{0,n})-\Upsilon_0(\alpha_0,\alpha-\alpha_{0,n})\\
=\Upsilon_0(\alpha_{0,n},\alpha-\alpha_{0,n})+\Psi_0(\alpha-\alpha_{0,n})\le 0~.
\end{multline}
Combining results \eqref{Eqn: error bound, aux1} and \eqref{Eqn: error bound, aux3} we may then obtain that
\begin{align}\label{Eqn: error bound, aux4}
\varkappa \|\alpha_{0,n}-\alpha_0\|_{\mathbf A}^2\le -\Upsilon_0(\alpha_{0,n}-\alpha_0,\alpha-\alpha_0)\le \|\Upsilon_0\|_{op}\|\alpha_{0,n}-\alpha_0\|_{\mathbf A}\|\alpha-\alpha_{0,n}\|_{\mathbf A}~,
\end{align}
where $\|\Upsilon_0\|_{op}<\infty$ since $\Upsilon_0$ is continuous and bilinear. In turn, \eqref{Eqn: error bound, aux4} implies 
\begin{align}\label{Eqn: error bound, aux5}
\varkappa \|\alpha_{0,n}-\alpha_0\|_{\mathbf A}\le \|\Upsilon_0\|_{op}\|\alpha-\alpha_{0,n}\|_{\mathbf A}
\end{align}
for all $\alpha\in\mathbf A_{0,n}$. The conclusion of the lemma follows by taking the infimum of the right hand side of \eqref{Eqn: error bound, aux5} over $\alpha\in\mathbf A_{0,n}$ and then dividing both sides by $\varkappa>0$. \qed

The next lemma establishes the consistency of the debiased GMM estimator. 

\begin{ass}\label{Ass: Model for consistency}
(i) $\theta_0\in\Theta$ uniquely satisfies \eqref{Eqn: GMM} for a compact $\Theta\subset\mathbf R^{d_\theta}$ and $\gamma_0\in\Gamma$ a nuisance parameter with $\Gamma$ a closed set in a normed space $\mathbf H$; (ii) $\phi\colon\mathcal X\times\Theta\times\mathbf H\times\mathbf A\to\mathbf R^{d_g}$ for $\mathbf A$ a normed space satisfies $E[\phi(X,\theta_0,\gamma_0,\alpha_0)]=0$ for some $\alpha_0\in\mathbf A$; (iii) there exists a neighborhood $U_0$ of $\gamma_0$ such that $\|g(X,\theta,\gamma)-g(X,\theta',\gamma)\|\le c(X,\gamma)\|\theta-\theta'\|^\varpi$ for all $\gamma\in U_0$, all $\theta,\theta'\in\Theta$, and some $\varpi\in (0,1]$ where $E[\sup_{\gamma\in U_0}c(X,\gamma)]<\infty$; (iv) $E[g(X,\theta,\cdot)]$ is continuous at $\gamma_0$ for all $\theta\in\Theta$; (v) $E[\phi(X,\cdot,\cdot,\cdot)]$ is continuous at $(\theta_0,\gamma_0,\alpha_0)$; (vi) $E[\sup_{\gamma\in U_0}\|g(X,\theta,\gamma)\|^{1+\varsigma}]<\infty$ for all $\theta\in\Theta$ and some $\varsigma>0$; (vii) there is a neighborhood $V_0$ of $(\theta_0,\gamma_0,\alpha_0)$ such that $E[\sup_{(\theta,\gamma,\alpha)\in V_0}\|\phi(X,\theta,\gamma,\alpha)\|^{1+\varsigma}]<\infty$.
\end{ass}

\begin{ass}\label{Ass: Data for consistency}
(i) $\{X_i\}_{i=1}^n$ is an i.i.d.\ sample of $X$; (ii) $\hat\gamma_{-k}\overset{p}{\to}\gamma_0$,  $\hat\alpha_{-k}\overset{p}{\to}\alpha_0$, and $\tilde\theta_{-k}\overset{p}{\to}\theta_0$ as $n\to\infty$ for each $k=1,\ldots, K$; (iii) $\min_{k=1}^K n_k\to\infty$ as $n\to\infty$; (iv) $\hat\Omega_n\overset{p}{\to}\Omega_0$ as $n\to\infty$ for some positive definite $\Omega_0$.
\end{ass}

\begin{lem}\label{Lem: DGMM consistency}
Let $\hat\theta_n\colon\{X_i\}_{i=1}^n\to\Theta$ be defined as in \eqref{Eqn: Debiased GMM}. If Assumptions \ref{Ass: Model for consistency} and \ref{Ass: Data for consistency} hold, then it follows that $\hat\theta_n\overset{p}{\to}\theta_0$ as $n\to\infty$.
\end{lem}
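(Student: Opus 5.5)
The plan is the standard consistency argument for (approximate) minimizers of a GMM criterion: show that $\hat\psi_n(\theta)$ converges in probability to $\psi_0(\theta)\equiv E[g(X,\theta,\gamma_0)]$ uniformly over the compact $\Theta$, then invoke the argmin consistency theorem (e.g., Theorem 2.1 of Newey and McFadden, or Corollary 3.2.3 in \citet{VaartWellner2023Book}). The population criterion is $Q_0(\theta)\equiv \psi_0(\theta)^\intercal\Omega_0\psi_0(\theta)$. By Assumption \ref{Ass: Model for consistency}(i) and positive definiteness of $\Omega_0$, it is uniquely minimized at $\theta_0$. It is continuous in $\theta$ by the H\"older condition (iii) with $\gamma=\gamma_0$ and integrability of $c$. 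Compactness then gives the well-separated minimum. Given uniform convergence of $\hat\psi_n$ and $\hat\Omega_n\overset{p}{\to}\Omega_0$, we get $\sup_\theta|\hat\psi_n(\theta)^\intercal\hat\Omega_n\hat\psi_n(\theta)-Q_0(\theta)|=o_p(1)$, because $\hat\psi_n$ is uniformly bounded in probability.

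For the uniform convergence, decompose fold by fold. Since $\hat\psi_n(\theta)=\sum_k (n_k/n)\hat\psi_{k}(\theta)$ with $K$ fixed, it suffices to treat one fold $k$. Write
\[\frac{1}{n_k}\sum_{i\in I_k} g(X_i,\theta,\hat\gamma_{-k}) - E[g(X,\theta,\gamma_0)] = A_{k}(\theta)+B_{k}(\theta),\]
where $A_k(\theta)$ is the centered empirical average conditional on the out-of-fold sample and $B_k(\theta)\equiv E[g(X,\theta,\gamma)]|_{\gamma=\hat\gamma_{-k}}-E[g(X,\theta,\gamma_0)]$.

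For $B_k$, proceed in two steps. Pointwise, $B_k(\theta)=o_p(1)$ by continuity (iv) and $\hat\gamma_{-k}\overset{p}{\to}\gamma_0$. This is upgraded to uniformity on a finite grid $\{\theta_1,\ldots,\theta_L\}$ that is an $\epsilon$-net of $\Theta$, using the equicontinuity from (iii): on the event $\hat\gamma_{-k}\in U_0$, which has probability approaching one, the H\"older bound gives
\[|E[g(X,\theta,\hat\gamma_{-k})]-E[g(X,\theta',\hat\gamma_{-k})]|\le E[\sup_{\gamma\in U_0}c(X,\gamma)]\,\|\theta-\theta'\|^\varpi.\]

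For $A_k$, the key device is conditioning on the out-of-fold data, so that $\hat\gamma_{-k}$ is fixed and $\{X_i\}_{i\in I_k}$ are i.i.d. and independent of it. At each grid point, a conditional weak law of large numbers for triangular arrays applies. It uses the $1+\varsigma$ moment bound (vi) via a von Bahr--Esseen or truncation argument, which yields $P(|A_k(\theta_l)|>\eta\,\mid\,\text{out-of-fold})\to0$ on $\{\hat\gamma_{-k}\in U_0\}$. Dominated convergence then removes the conditioning. Between grid points, the increment is bounded by $\frac{1}{n_k}\sum_{i\in I_k}\sup_{\gamma\in U_0}c(X_i,\gamma)\,\epsilon^\varpi$ plus its mean. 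This is $O_p(\epsilon^\varpi)$ by the ordinary law of large numbers and (iii), because the envelope $\sup_{\gamma\in U_0}c$ does not depend on $\hat\gamma_{-k}$. Letting $\epsilon\downarrow0$ gives $\sup_\theta|A_k(\theta)|=o_p(1)$.

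The $\phi$ term does not depend on $\theta$, since it is evaluated at $\tilde\theta_{-k}$. It therefore only needs to be shown $o_p(1)$. Split it as a centered conditional average plus $E[\phi(X,\theta,\gamma,\alpha)]|_{(\tilde\theta_{-k},\hat\gamma_{-k},\hat\alpha_{-k})}$. The first part vanishes by the same conditional weak law, using (vii) on the event that the preliminary estimators lie in $V_0$. The second part converges to $E[\phi(X,\theta_0,\gamma_0,\alpha_0)]=0$ by continuity (v), Assumption \ref{Ass: Data for consistency}(ii), and (ii). I expect the main obstacle to be the conditional triangular-array law of large numbers under only $1+\varsigma$ moments, with a data-dependent index. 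The plan is to handle it uniformly in $\gamma\in U_0$ via the sup-envelope in (vi) and (vii), bounding $E|A_k|^{1+\varsigma}\lesssim n_k^{-\varsigma}E[\sup\|g\|^{1+\varsigma}]$ and letting $n_k\to\infty$ by Assumption \ref{Ass: Data for consistency}(iii).
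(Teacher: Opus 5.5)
Your proposal is correct and follows essentially the same route as the paper. Both split each fold's average into a conditionally centered part and the bias term $E_X[g(X,\theta,\hat\gamma_{-k})]-E[g(X,\theta,\gamma_0)]$, get pointwise convergence from a conditional weak law under the $1+\varsigma$ envelope moments together with continuity (iv)--(v), obtain uniformity over a finite net from the H\"older bound in (iii), and conclude with Theorem 2.1 of Newey and McFadden.

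The differences are minor. You derive the conditional weak law directly from a von Bahr--Esseen moment bound (taking $\varsigma\le 1$ without loss of generality) plus conditional Markov, whereas the paper cites an off-the-shelf conditional law of large numbers. You also apply the equicontinuity bound to the centered and bias pieces separately, rather than to the pooled average and to $\theta\mapsto E[g(X,\theta,\gamma_0)]$.
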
 % \citet{Rao2009Conditional}, \citet{NiuChakrabortyDukesKatsevich2024X}, \citet{BulinskiAlexander2017ConditionalCLT}
\noindent{\sc Proof:} First, we observe by Assumption \ref{Ass: Data for consistency}(i) the simple identity:
\begin{align}\label{Lem: DGMM consistency, aux1}
\frac{1}{n}\sum_{k=1}^{K}\sum_{i\in I_k }g(X_i,\theta,\hat\gamma_{-k})  &- E[g(X,\theta,\gamma_0)]\notag\\
& =\sum_{k=1}^{K}\frac{n_k}{n}\frac{1}{n_k}\sum_{i\in I_k }\{g(X_i,\theta,\hat\gamma_{-k}) - E[g(X_i,\theta,\hat\gamma_{-k})|\{X_j\colon j\notin I_k\}]\}\notag\\
& \qquad +\sum_{k=1}^{K} \frac{n_k}{n} \{E_X[g(X,\theta,\hat\gamma_{-k})]- E[g(X,\theta,\gamma_0)]\}~,
\end{align}
where $E_X$ denotes expectation taken with respect to $X$ with $\hat\gamma_{-k}$ held fixed. By Assumptions \ref{Ass: Model for consistency}(iv) and \ref{Ass: Data for consistency}(ii), the continuous mapping theorem gives
\begin{align}\label{Lem: DGMM consistency, aux2}
E_X[g(X,\theta,\hat\gamma_{-k})]- E[g(X,\theta,\gamma_0)] \overset{p}{\to} 0~,
\end{align}
for each $k=1,\ldots,K$. Result \eqref{Lem: DGMM consistency, aux2}, $ n_k/n\le 1$, and the triangle inequality then imply:
\begin{align}\label{Lem: DGMM consistency, aux3}
\|\sum_{k=1}^{K} \frac{n_k}{n} \{E_X[g(X,\theta,\hat\gamma_{-k})]- E[g(X,\theta,\gamma_0)]\}\|\overset{p}{\to} 0~,
\end{align}
for all $\theta\in\Theta$. Moreover, we have by Assumption \ref{Ass: Data for consistency}(i) that
\begin{align}\label{Lem: DGMM consistency, aux4}
\frac{1}{n_k^{1+\varsigma}}\sum_{i\in I_k} E[\|g(X_i,\theta,\hat\gamma_{-k})\|^{1+\varsigma}|\{X_j\colon j\notin I_k\}]=\frac{1}{n_k^{\varsigma}} E_X[\|g(X,\theta,\hat\gamma_{-k})\|^{1+\varsigma}]~,
\end{align}
and by Assumptions \ref{Ass: Model for consistency}(vi) and \ref{Ass: Data for consistency}(ii)(iii) that, for any $\epsilon>0$, 
\begin{align}\label{Lem: DGMM consistency, aux5}
P(\frac{1}{n_k^{\varsigma}} E_X[\|g(X,\theta,\hat\gamma_{-k})\|^{1+\varsigma}]>\epsilon) &\le P(\frac{1}{n_k^{\varsigma}} E[\sup_{\gamma\in U_0}\|g(X,\theta,\gamma)\|^{1+\varsigma}]>\epsilon)+P(\hat\gamma_{-k}\notin U_0)\notag\\
&\overset{p}{\to} 0~.
\end{align}
Combining results \eqref{Lem: DGMM consistency, aux4} and \eqref{Lem: DGMM consistency, aux5} then yields that
\begin{align}\label{Lem: DGMM consistency, aux6}
\frac{1}{n_k^{1+\varsigma}}\sum_{i\in I_k} E[\|g(X_i,\theta,\hat\gamma_{-k})\|^{1+\varsigma}|\{X_j\colon j\notin I_k\}]\overset{p}{\to} 0~.
\end{align}
By Assumption \ref{Ass: Data for consistency}(i)(iii) and result \eqref{Lem: DGMM consistency, aux6}, we thus obtain by Theorem B.4 in \citet[p.~2]{NiuChakrabortyDukesKatsevich2024X} that, for all $k=1,\ldots,K$ and all $\theta\in\Theta$,
\begin{align}\label{Lem: DGMM consistency, aux7}
\frac{1}{n_k}\sum_{i\in I_k }\{g(X_i,\theta,\hat\gamma_{-k}) - E[g(X_i,\theta,\hat\gamma_{-k})|\{X_j\colon j\notin I_k\}]\} \overset{p}{\to} 0
\end{align}
conditionally on $\{X_j\colon j\notin I_k\}$ and hence unconditionally by Lemma D.5 in \citet[p.~7]{NiuChakrabortyDukesKatsevich2024X}. Since $0\le n_k/n\le 1$, \eqref{Lem: DGMM consistency, aux7} and the triangle inequality imply
\begin{align}\label{Lem: DGMM consistency, aux8}
\|\sum_{k=1}^{K}\frac{n_k}{n}\frac{1}{n_k}\sum_{i\in I_k }\{g(X_i,\theta,\hat\gamma_{-k}) - E[g(X_i,\theta,\hat\gamma_{-k})|\{X_j\colon j\notin I_k\}]\}\|\overset{p}{\to}0~.
\end{align}
By results \eqref{Lem: DGMM consistency, aux3} and \eqref{Lem: DGMM consistency, aux8}, it follows from \eqref{Lem: DGMM consistency, aux1} that,  for each fixed $\theta\in\Theta$,
\begin{align}\label{Lem: DGMM consistency, aux9}
\frac{1}{n}\sum_{k=1}^{K}\sum_{i\in I_k }g(X_i,\theta,\hat\gamma_{-k})  - E[g(X,\theta,\gamma_0)]\overset{p}{\to}0~.
\end{align}
By Assumptions \ref{Ass: Model for consistency}(ii)(v)(vii) and \ref{Ass: Data for consistency}(i)(ii)(iii), we may apply arguments identical to those leading to \eqref{Lem: DGMM consistency, aux9} to deduce that
\begin{align}\label{Lem: DGMM consistency, aux10}
\frac{1}{n}\sum_{k=1}^{K}\sum_{i\in I_k }\phi(X_i,\tilde\theta_{-k},\hat\gamma_{-k},\hat\alpha_{-k})  \overset{p}{\to}  E[\phi(X,\theta_0,\gamma_0,\alpha_0)]=0~.
\end{align}

Next, by Assumption \ref{Ass: Model for consistency}(iii) and the triangle inequality, we have:
\begin{align}\label{Lem: DGMM consistency, aux11}
\|\frac{1}{n}\sum_{k=1}^{K}\sum_{i\in I_k }g(X_i,\theta,\gamma) -\frac{1}{n}\sum_{k=1}^{K}\sum_{i\in I_k }g(X_i,\theta',\gamma)\|\le \frac{1}{n}\sum_{k=1}^{K}\sum_{i\in I_k }c(X_i,
\gamma)\|\theta-\theta'\|^\varpi~,
\end{align}
whenever $\gamma\in U_0$. Fix $\epsilon,\eta>0$. By Assumptions \ref{Ass: Data for consistency}(i)(ii), result \eqref{Lem: DGMM consistency, aux11}, and Markov's inequality, we may obtain that, for $\delta\equiv\{\eta\epsilon/E[1\vee\sup_{\gamma\in U_0}c(X,\gamma)]\}^{1/\varpi}$,
\begin{multline}\label{Lem: DGMM consistency, aux12}
\limsup_{n\to\infty} P(\sup_{\|\theta-\theta'\|<\delta}\|\frac{1}{n}\sum_{k=1}^{K}\sum_{i\in I_k }g(X_i,\theta,\hat\gamma_{-k}) -\frac{1}{n}\sum_{k=1}^{K}\sum_{i\in I_k }g(X_i,\theta',\hat\gamma_{-k})\|>3\epsilon)\\
\le \limsup_{n\to\infty} \frac{\sum_{k=1}^{K}\frac{n_k}{n}E[E_X[c(X,\hat\gamma_{-k})1\{\hat\gamma_{-k}\in U_0\}]]\delta^{\varpi}}{\epsilon}=\eta~.
\end{multline}
By Assumption \ref{Ass: Model for consistency}(iii), the map $\theta\mapsto E[g(X,\theta,\gamma_0)]$ is uniformly continuous and hence, by making $\delta$ smaller if necessary, it follows that
\begin{align}\label{Lem: DGMM consistency, aux13}
\sup_{\|\theta-\theta'\|<\delta} \|E[g(X,\theta,\gamma_0)-g(X,\theta',\gamma_0)]\|\le\epsilon~.
\end{align}
By Assumption \ref{Ass: Model for consistency}(i), there exist $\theta_1,\ldots,\theta_J\in\Theta$ with $J<\infty$ such that $\min_{j=1}^J\|\theta-\theta_j\|<\delta$ for all $\theta\in\Theta$. Thus, the triangle inequality and \eqref{Lem: DGMM consistency, aux13} imply
\begin{align}\label{Lem: DGMM consistency, aux14}
\sup_{\theta\in\Theta} \|\frac{1}{n}\sum_{k=1}^{K}\sum_{i\in I_k }&g(X_i,\theta,\hat\gamma_{-k})-E[g(X,\theta,\gamma_0)]\|\notag\\
&\le \sup_{\|\theta-\theta'\|<\delta}\|\frac{1}{n}\sum_{k=1}^{K}\sum_{i\in I_k }g(X_i,\theta,\hat\gamma_{-k}) -\frac{1}{n}\sum_{k=1}^{K}\sum_{i\in I_k }g(X_i,\theta',\hat\gamma_{-k})\| \notag\\
&\quad + \max_{j=1}^J \|\frac{1}{n}\sum_{k=1}^{K}\sum_{i\in I_k }g(X_i,\theta_j,\hat\gamma_{-k})  - E[g(X,\theta_j,\gamma_0)]\|+\epsilon~.
\end{align}
Combining results \eqref{Lem: DGMM consistency, aux9}, \eqref{Lem: DGMM consistency, aux12}, and \eqref{Lem: DGMM consistency, aux14}, we may then conclude that 
\begin{align}\label{Lem: DGMM consistency, aux15}
\sup_{\theta\in\Theta} \|\frac{1}{n}\sum_{k=1}^{K}\sum_{i\in I_k }g(X_i,\theta,\hat\gamma_{-k})-E[g(X,\theta,\gamma_0)]\| = o_p(1)~.
\end{align}

For notational simplicity, define $\psi_0(\theta)\equiv E[g(X,\theta,\gamma_0)]$ and
\begin{align*}
\hat\psi_n(\theta)\equiv\frac{1}{n}\sum_{k=1}^{K}\sum_{i\in I_k }\{g(X_i,\theta,\hat\gamma_{-k}) + \phi(X_i,\tilde\theta_{-k},\hat\gamma_{-k}, \hat\alpha_{-k})\}~.
\end{align*}
By results \eqref{Lem: DGMM consistency, aux10} and \eqref{Lem: DGMM consistency, aux15}, we note by the triangle inequality that
\begin{align}\label{Lem: DGMM consistency, aux16}
\sup_{\theta\in\Theta}\|\hat\psi_n(\theta)-\psi_0(\theta)\|=o_p(1)~.
\end{align}
Moreover, $\sup_{\theta\in\Theta}\|\psi_0(\theta)\|<\infty$ by compactness of $\Theta$ in Assumption \ref{Ass: Model for consistency}(i) and continuity of $\theta\mapsto\psi_0(\theta)$ implied by Assumption \ref{Ass: Model for consistency}(iii). Therefore, it follows from result \eqref{Lem: DGMM consistency, aux16} that $\sup_{\theta\in\Theta}\|\hat\psi_n(\theta)\|=O_p(1)$. These facts, together with the triangle inequality, Assumption \ref{Ass: Data for consistency}(iv), and result \eqref{Lem: DGMM consistency, aux15}, imply that
\begin{align}\label{Lem: DGMM consistency, aux17}
\sup_{\theta\in\Theta}|\hat\psi_n(\theta)^\intercal \hat\Omega_n \hat\psi_n(\theta) -\psi_0(\theta)^\intercal \Omega_0\psi_0(\theta)| = o_p(1)~.
\end{align}
By Assumptions \ref{Ass: Model for consistency}(i) and \ref{Ass: Data for consistency}(iv), the map $\theta\mapsto \psi_0(\theta)^\intercal \Omega_0\psi_0(\theta)$ is uniquely minimized at $\theta_0$. This, together with the compactness of $\Theta$ in Assumption \ref{Ass: Model for consistency}(i), continuity of $\theta\mapsto E[g(X,\theta,\gamma_0)]$ and hence of $\theta\mapsto \psi_0(\theta)^\intercal \Omega_0\psi_0(\theta)$, and result \eqref{Lem: DGMM consistency, aux17} allows us to conclude by Theorem 2.1 in \citet{NeweyMcFadden1994Handbook} that $\hat\theta_n\overset{p}{\to}\theta_0$. \qed

The lemma below is a law of large numbers for cross-fitted sample averages where, as usual, $\hat\eta_{-k}$ depends on observations not in the $k$th fold of a partition $\{1,\ldots,n\}=\bigcup_{k=1}^K I_k$ for some fixed $K>1$ and $n_k$ is the sample size of the $k$th fold. 
% Note that in addition to cross-fitting, Lemma \ref{Lem: LLN} generalizes Lemma 4.3 in \citet{NeweyMcFadden1994Handbook} in that $f(X,\eta)$ may be discontinuous in $\eta$. 

\begin{lem}\label{Lem: LLN}
Let $f\colon\mathcal X\times\mathbf D_n\to\mathbf R^m$ for a normed space $\mathbf D_n$. If (i) $\{X_i\}_{i=1}^n$ is an i.i.d.\ sample of $X\in\mathcal X$; (ii) $\hat\eta_{-k}\overset{p}{\to}\eta_0$ as $n\to\infty$ for each $k=1,\ldots, K$; (iii) $E[f(X,\cdot)]$ is continuous at $\eta_0$; (iv) there is a neighborhood $U_0$ of $\eta_0$ such that $E[\sup_{\eta\in U_0}\|f(X,\eta)\|^{1+\varsigma}]<\infty$ for some $\varsigma>0$; and (v) $\min_{k=1}^K n_k\to\infty$ as $n\to\infty$, then
\begin{align}
\frac{1}{n}\sum_{k=1}^{K}\sum_{i\in I_k} f(X_i,\hat\eta_{-k})\overset{p}{\to} E[f(X,\eta_0)]~.
\end{align}
\end{lem}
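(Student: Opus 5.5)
The plan is to reduce the cross-fitted average to a fold-by-fold problem and, within each fold, to decompose it into a conditional centered term and a bias term. This mirrors the first step of the proof of Lemma \ref{Lem: DGMM consistency}. Since $K$ is fixed and $n_k/n\in[0,1]$, it suffices to show that for each $k$
\begin{align*}
\frac{1}{n_k}\sum_{i\in I_k} f(X_i,\hat\eta_{-k})-E[f(X,\eta_0)]\overset{p}{\to}0~.
\end{align*}
Multiplying by $n_k/n$ and summing over $k$, using $\sum_k n_k/n=1$ and the triangle inequality, then gives the claim. Write $\mathcal D_k\equiv\{X_j\colon j\notin I_k\}$ and let $E_X[f(X,\hat\eta_{-k})]$ denote the expectation over a fresh $X$ with $\hat\eta_{-k}$ held fixed. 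I would split the fold-$k$ average into two pieces:
\begin{align*}
A_{n,k}&\equiv\frac{1}{n_k}\sum_{i\in I_k}\{f(X_i,\hat\eta_{-k})-E[f(X_i,\hat\eta_{-k})|\mathcal D_k]\}~,\\
B_{n,k}&\equiv E_X[f(X,\hat\eta_{-k})]-E[f(X,\eta_0)]~.
\end{align*}
By independence of $\{X_i\}_{i\in I_k}$ from $\mathcal D_k$ under condition (i), we have $E[f(X_i,\hat\eta_{-k})|\mathcal D_k]=E_X[f(X,\hat\eta_{-k})]$ for $i\in I_k$.

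The bias term is immediate. Conditions (ii) and (iii) together with the continuous mapping theorem give $B_{n,k}\overset{p}{\to}0$, because $\eta\mapsto E[f(X,\eta)]$ is continuous at $\eta_0$ and $\hat\eta_{-k}\overset{p}{\to}\eta_0$. Condition (iv) guarantees that $E[f(X,\eta)]$ is finite for $\eta\in U_0$, and $\hat\eta_{-k}\in U_0$ with probability approaching one.

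The centered term is the main step. Conditionally on $\mathcal D_k$, the summands of $A_{n,k}$ are i.i.d. with mean zero. They form a triangular array, however, because $\hat\eta_{-k}$ changes with $n$. I would therefore control $A_{n,k}$ through a conditional weak law of large numbers for triangular arrays under a Lyapunov-type $(1+\varsigma)$-moment condition, exactly as in the proof of Lemma \ref{Lem: DGMM consistency}. The sufficient condition there is
\begin{align*}
\frac{1}{n_k^{1+\varsigma}}\sum_{i\in I_k}E[\|f(X_i,\hat\eta_{-k})\|^{1+\varsigma}|\mathcal D_k]=\frac{1}{n_k^{\varsigma}}E_X[\|f(X,\hat\eta_{-k})\|^{1+\varsigma}]\overset{p}{\to}0~.
\end{align*}
On the event $\{\hat\eta_{-k}\in U_0\}$, the middle quantity is bounded by $n_k^{-\varsigma}E[\sup_{\eta\in U_0}\|f(X,\eta)\|^{1+\varsigma}]$. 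This bound is deterministic and tends to zero by conditions (iv) and (v), and the complementary event has vanishing probability by condition (ii). Theorem B.4 in \citet{NiuChakrabortyDukesKatsevich2024X} then yields $A_{n,k}\overset{p}{\to}0$ conditionally on $\mathcal D_k$. Lemma D.5 there lifts this to unconditional convergence in probability. Combining the limits of $A_{n,k}$ and $B_{n,k}$ and aggregating over the $K$ folds completes the argument; the vector case is handled componentwise.

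The main obstacle is the centered term $A_{n,k}$. Because $f$ need not be continuous in $\eta$, no uniform law over $U_0$ is available. The argument must instead rely on sample splitting, which makes $\hat\eta_{-k}$ fixed conditionally on $\mathcal D_k$, together with a triangular-array weak law of large numbers. If invoking the cited conditional result is undesirable, I would fall back on a direct truncation argument for the $(1+\varsigma)$-moment weak law of large numbers: truncate the summands at level $n_k$, bound the variance of the truncated part by $n_k^{1-\varsigma}E_X\|f\|^{1+\varsigma}$, and bound the remainder by Markov's inequality. I would then conclude by dominated convergence of conditional probabilities.
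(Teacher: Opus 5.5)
Your proposal is correct and follows essentially the same route as the paper: the same split of each fold average into a conditionally centered term and a bias term $E_X[f(X,\hat\eta_{-k})]-E[f(X,\eta_0)]$, and the continuous mapping theorem for the bias. For the centered term you also use the same $(1+\varsigma)$-moment Lyapunov bound on the event $\{\hat\eta_{-k}\in U_0\}$, followed by Theorem B.4 and Lemma D.5 of Niu, Chakraborty, Dukes, and Katsevich.
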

\noindent{\sc Proof:} Without loss of generality, assume that $f$ is scalar-valued (i.e., $m=1$). By conditions (ii)(iii) and the continuous mapping theorem, we have:
\begin{align}\label{Lem: LLN, aux1}
E_X[f(X,\hat\eta_{-k})] \overset{p}{\to}E[f(X,\eta_0)]~,
\end{align}
for all $k=1,\ldots,K$, where $E_X$  indicates expectation taken with respect to $X$ holding $\hat\eta_{-k}$ fixed. By condition (i), we also note that
\begin{align}\label{Lem: LLN, aux2}
\frac{1}{n_k^{1+\varsigma}}\sum_{i\in I_k} E[|f(X_i,\hat\eta_{-k})|^{1+\varsigma}|\{X_j\colon j\notin I_k\}]=\frac{1}{n_k^{\varsigma}} E_X[|f(X,\hat\eta_{-k})|^{1+\varsigma}]~.
\end{align}
In turn, observe that conditions (ii)(iv)(v) imply, for any $\epsilon>0$, 
\begin{align}\label{Lem: LLN, aux3}
P(\frac{1}{n_k^{\varsigma}} E_X[|f(X,\hat\eta_{-k})|^{1+\varsigma}]>\epsilon) &\le P(\frac{1}{n_k^{\varsigma}} E[\sup_{\eta\in U_0}|f(X,\eta)|^{1+\varsigma}]>\epsilon)+P(\hat\eta_{-k}\notin U_0)\notag\\
&\overset{p}{\to} 0~.
\end{align}
Combining results \eqref{Lem: LLN, aux2} and \eqref{Lem: LLN, aux3} then yields that
\begin{align}\label{Lem: LLN, aux4}
\frac{1}{n_k^{1+\varsigma}}\sum_{i\in I_k} E[|f(X_i,\hat\eta_{-k})|^{1+\varsigma}|\{X_j\colon j\notin I_k\}]\overset{p}{\to} 0~.
\end{align}
By conditions (i)(v) and result \eqref{Lem: LLN, aux4}, we thus obtain by Theorem B.4 in \citet[p.~2]{NiuChakrabortyDukesKatsevich2024X} that, for all $k=1,\ldots,K$,
\begin{align}\label{Lem: LLN, aux5}
\frac{1}{n_k}\sum_{i\in I_k }\{f(X_i,\hat\eta_{-k}) - E[f(X_i,\hat\eta_{-k})|\{X_j\colon j\notin I_k\}]\} \overset{p}{\to} 0
\end{align}
conditionally on $\{X_j\colon j\notin I_k\}$ and hence also unconditionally by Lemma D.5 in \citet[p.~7]{NiuChakrabortyDukesKatsevich2024X}. Observing the simple identity
\begin{align}\label{Lem: LLN, aux6}
\frac{1}{n}\sum_{k=1}^{K}\sum_{i\in I_k} \{f(X_i,\hat\eta_{-k})&-E[f(X,\eta_0)]\}\notag\\
&=\sum_{k=1}^{K}\frac{n_k}{n}\frac{1}{n_k}\sum_{i\in I_k} \{f(X_i,\hat\eta_{-k})-E[f(X_i,\hat\eta_{-k})|\{X_j\colon j\notin I_k\}]\}\notag\\
&\quad+ \sum_{k=1}^{K}\frac{n_k}{n}\{ E_X[f(X,\hat\eta_{-k})]-E[f(X,\eta_0)]\}~,
\end{align}
we may then conclude the proof of the lemma by combining results \eqref{Lem: LLN, aux1} and \eqref{Lem: LLN, aux5} with the triangle inequality and the fact $0\le n_k/n\le 1$. \qed

\iffalse
\section{Discussions of Examples \ref{Ex: average linear effects}--\ref{Ex: synthetic control}}

\begin{ass}\label{Ass: average effects}
(i) $\gamma_0\in\Gamma\subset L^2(Z)$ for a nonempty closed set $\Gamma$; (ii) $E[m(X,\gamma)]$ is Gateaux differentiable at $\gamma=\gamma_0$ tangentially to $\Gamma-\gamma_0$; (iii) $\nabla_\gamma E[m(X,\gamma_0)][\delta]=\Psi_0(\delta)$ for a map $\Psi_0\colon \overline{\mathrm{lin}}(\Gamma-\gamma_0)\to \mathbf R$ that is continuous and linear; 
\end{ass}

\begin{pro}
Assumption \ref{Ass: average effects} implies Assumption \ref{Ass: Riesz} with $\Psi_0\colon $, $\Upsilon_0\colon $ is coercive and symmetric and
\begin{align}
J_n(\alpha)= 
\end{align}
\end{pro}

\fi

\noindent{\sc Proof of Proposition \ref{Pro: AN}:} By Assumptions \ref{Ass: Model}(i)(ii) and \ref{Ass: Data}(i), we may apply the mean value theorem to obtain for each $j=1,\ldots,d_g$ some $\check\theta_{n,j}\in\Theta^\circ$ lying between $\theta_0$ and $\hat\theta_n$ such that, whenever $(\hat\theta_n,\hat\gamma_{-k})\in U_0\times V_0$ for all $k=1,\ldots,K$, 
\begin{align}\label{Pro: AN, aux1}
\frac{1}{n}\sum_{k=1}^{K}\sum_{i\in I_k} g(X_i,\hat\theta_n,\hat\gamma_{-k}) =  \frac{1}{n}\sum_{k=1}^{K}\sum_{i\in I_k} g(X_i,\theta_0,\hat\gamma_{-k}) + \check{G}_n\{\hat\theta_n-\theta_0\}~,
\end{align}
where, for $f_j$ the $j$th coordinate of a vector-valued function $f$ here and in what follows,
\begin{align*}
\check{G}_n\equiv \frac{1}{n}\sum_{k=1}^{K}\sum_{i\in I_k} \begin{bmatrix}
\frac{\partial}{\partial\theta^\intercal} g_1(X_i,\theta,\hat\gamma_{-k}) \big|_{\theta=\check\theta_{n,1}}\\
\vdots\\
\frac{\partial}{\partial\theta^\intercal} g_{d_g}(X_i,\theta,\hat\gamma_{-k}) \big|_{\theta=\check\theta_{n,d_g}}\\
\end{bmatrix}~.
\end{align*}
By Assumption \ref{Ass: Data}(ii), $(\hat\theta_n,\hat\gamma_{-k})\in U_0\times V_0$ with probability approaching one and hence we obtain the first order condition that, for $\hat\phi_{-k}(X_i)\equiv \phi(X_i,\tilde\theta_{-k},\hat\gamma_{-k}, \hat\alpha_{-k})$,
\begin{align}\label{Pro: AN, aux2}
\{\frac{1}{n}\sum_{k=1}^{K}\sum_{i\in I_k} \frac{\partial}{\partial \theta^\intercal}g(X_i,\theta,\hat\gamma_{-k})\Big|_{\theta=\hat\theta_n}\}^\intercal\hat\Omega_n \frac{1}{n}\sum_{k=1}^{K}\sum_{i\in I_k} \{g(X_i,\hat\theta_n,\hat\gamma_{-k})+\hat\phi_{-k}(X_i)\}=0~.
\end{align}
Combining results \eqref{Pro: AN, aux1} and \eqref{Pro: AN, aux2} we thus obtain that
\begin{align}\label{Pro: AN, aux3}
\frac{1}{n}\sum_{k=1}^{K}&\sum_{i\in I_k} \nabla_\theta g(X_i,\hat\theta_n,\hat\gamma_{-k})^\intercal\hat\Omega_n \check{G}_n\sqrt n\{\hat\theta_n-\theta_0\}\notag\\
&= -\frac{1}{n}\sum_{k=1}^{K}\sum_{i\in I_k} \nabla_\theta g(X_i,\hat\theta_n,\hat\gamma_{-k})^\intercal\hat\Omega_n \frac{1}{\sqrt n}\sum_{k=1}^{K}\sum_{i\in I_k}\{ g(X_i,\theta_0,\hat\gamma_{-k})+\hat\phi_{-k}(X_i)\}\notag\\
&\qquad +o_p(1)~.
\end{align}
Since each $\check\theta_{n,j}$ lies between $\theta_0$ and $\hat\theta_n$, we must have $\check\theta_{n,j}\overset{p}{\to} \theta_0$ by Assumption \ref{Ass: Data}(ii). By Assumptions \ref{Ass: Model}(iii)(iv) and \ref{Ass: Data}(i)(ii)(iii)(iv), Lemma \ref{Lem: LLN} then implies
\begin{align}\label{Pro: AN, aux4}
\frac{1}{n}\sum_{k=1}^{K}\sum_{i\in I_k} \nabla_\theta g(X_i,\hat\theta_n,\hat\gamma_{-k}) \overset{p}{\to} G_0~,\quad \check{G}_n\overset{p}{\to} G_0~.
\end{align}
By Assumption \ref{Ass: Data}(v), it follows from result \eqref{Pro: AN, aux4} that
\begin{gather}
\frac{1}{n}\sum_{k=1}^{K}\sum_{i\in I_k} \nabla_\theta g(X_i,\hat\theta_n,\hat\gamma_{-k}) \hat\Omega_n \overset{p}{\to} G_0^\intercal\Omega_0~, \label{Pro: AN, aux5}\\
\frac{1}{n}\sum_{k=1}^{K}\sum_{i\in I_k} \nabla_\theta g(X_i,\hat\theta_n,\hat\gamma_{-k}) \hat\Omega_n \check{G}_n\overset{p}{\to} G_0^\intercal\Omega_0 G_0~,\label{Pro: AN, aux6}
\end{gather}
where the limit $G_0^\intercal\Omega_0 G_0$ is nonsingular by Assumption \ref{Ass: Model}(iii). 

By simple algebra, we next observe the identity:
\begin{align}\label{Pro: AN, aux7}
\frac{1}{\sqrt n}&\sum_{k=1}^{K}\sum_{i\in I_k}  \{ g(X_i,\theta_0,\hat\gamma_{-k}) +\hat\phi_{-k}(X_i)\}-\frac{1}{\sqrt n}\sum_{i=1}^{n} \psi_0(X_i)  \notag\\
& = \sum_{k=1}^{K}\frac{1}{\sqrt n}\sum_{i\in I_k}\{ \psi(X_i,\theta_0,\hat\gamma_{-k},\alpha_0)-\psi_0(X_i)\}\notag\\
& \quad + \sum_{k=1}^{K}\frac{1}{\sqrt n}\sum_{i\in I_k}\{ \phi(X_i,\tilde\theta_{-k},\gamma_0,\hat\alpha_{-k})-\phi_0(X_i)\} + \sum_{k=1}^{K}\frac{1}{\sqrt n}\sum_{i\in I_k}\hat\Delta_{-k}(X_i)~,
\end{align}
where $\phi_0(X_i)\equiv \phi(X_i,\theta_0,\gamma_0,\alpha_0)$, and
\begin{align*}
\hat\Delta_{-k}(X_i) \equiv \hat\phi_{-k}(X_i) - \phi(X_i,\tilde\theta_{-k},\gamma_0,\hat\alpha_{-k}) - \phi(X_i,\theta_0,\hat\gamma_{-k},\alpha_0) +\phi_0(X_i)~.
\end{align*}
Since $n_k\le n$, we may obtain by Assumptions \ref{Ass: FSIF}(i)(ii) and \ref{Ass: Data}(i)(iii) that
\begin{align}\label{Pro: AN, aux8}
\|E[\frac{1}{\sqrt n}\sum_{i\in I_k}&\{ \psi(X_i,\theta_0,\hat\gamma_{-k},\alpha_0)-\psi_0(X_i)\}\big|\{X_j\colon j\notin I_k\}]\|\notag\\
&\quad =\frac{n_k}{\sqrt n} \|E_X[\psi(X_i,\theta_0,\hat\gamma_{-k},\alpha_0)]\|\lesssim \sqrt{n} \|\hat{\gamma}_{-k}-\gamma_0\|_{\mathbf H}^2=o_p(1)~.
\end{align}
Moreover, we have by Assumptions \ref{Ass: Data}(i) and $n_k\le n$ that, for each $j=1,\ldots,d_g$,
\begin{multline}\label{Pro: AN, aux9}
\mathrm{Var}(\frac{1}{\sqrt n}\sum_{i\in I_k}\{\psi_j(X_i,\theta_0,\hat\gamma_{-k},\alpha_0)-\psi_{0,j}(X_i)\}\big|\{X_j\colon j\notin I_k\})\\
\le E_X[(\psi_j(X,\theta_0,\hat\gamma_{-k},\alpha_0)-\psi_{0,j}(X_i))^2]=o_p(1)~,
\end{multline}
where the last step follows by Assumptions \ref{Ass: FSIF}(iv) and \ref{Ass: Data}(iii). Combining results \eqref{Pro: AN, aux8} and \eqref{Pro: AN, aux9}, we thus have: for each $j=1,\ldots,d_g$,
\begin{align}\label{Pro: AN, aux10}
\frac{1}{\sqrt n}\sum_{i\in I_k}\{\psi_j(X_i,\theta_0,&\hat\gamma_{-k},\alpha_0)-\psi_{0,j}(X_i)\}=o_p(1)
\end{align}
conditional on $\{X_j\colon j\notin I_k\}$ and hence unconditionally by the Fubini theorem and the dominated convergence theorem. Since $K$ is fixed, it follows from result \eqref{Pro: AN, aux10} that
\begin{align}\label{Pro: AN, aux11}
\frac{1}{\sqrt n}\sum_{k=1}^{K}\sum_{i\in I_k}\{\psi(X_i,\theta_0,&\hat\gamma_{-k},\alpha_0)-\psi_0(X_i)\} = o_p(1)~.
\end{align}

By Assumptions \ref{Ass: FSIF}(iii) and \ref{Ass: Data}(i), we note that, almost surely,
\begin{align}\label{Pro: AN, aux12}
E[\frac{1}{\sqrt n}\sum_{i\in I_k}\{ \phi(X_i,\tilde\theta_{-k},\gamma_0,\hat\alpha_{-k})-\phi_0(X_i)\}|\{X_j\colon j\notin I_k\}]=0~.
\end{align}
By arguments analogous to those leading to \eqref{Pro: AN, aux11}, it follows from \eqref{Pro: AN, aux12} that 
\begin{align}\label{Pro: AN, aux13}
\frac{1}{\sqrt n}\sum_{k=1}^{K}\sum_{i\in I_k}\{ \phi(X_i,\tilde\theta_{-k},\gamma_0,\hat\alpha_{-k})-\phi_0(X_i)\} = o_p(1)~.
\end{align}
By Assumptions \ref{Ass: FSIF}(vi)(vii) and \ref{Ass: Data}(i)(iii), we may also employ arguments analogous to those leading to result \eqref{Pro: AN, aux11} to deduce that
\begin{align}\label{Pro: AN, aux14}
\sum_{k=1}^{K}\frac{1}{\sqrt n}\sum_{i\in I_k}\hat\Delta_{-k}(X_i) = o_p(1)~.
\end{align}
Combining results \eqref{Pro: AN, aux7}, \eqref{Pro: AN, aux11}, \eqref{Pro: AN, aux13}, and \eqref{Pro: AN, aux14}, we thus arrive at
\begin{align}\label{Pro: AN, aux15}
\frac{1}{\sqrt n}\sum_{k=1}^{K}\sum_{i\in I_k}  \{ g(X_i,\theta_0,\hat\gamma_{-k}) +\hat\phi_{-k}(X_i)\}=\frac{1}{\sqrt n}\sum_{i=1}^{n} \psi_0(X_i) +o_p(1)~.
\end{align}
In turn, \eqref{Pro: AN, aux3}, \eqref{Pro: AN, aux5}, \eqref{Pro: AN, aux6}, and \eqref{Pro: AN, aux15}, together with Assumptions \ref{Ass: FSIF}(i) and \ref{Ass: Data}(i) and the continuous mapping theorem, allow us to conclude the proof of part (i).

For part (ii), Lemma \ref{Lem: LLN} gives by Assumptions \ref{Ass: Data}(i)(iii)(iv) and \ref{Ass: variance estimation}(i)(ii) that
\begin{align}\label{Pro: AN, aux16}
\hat\Sigma_n\overset{p}{\to} \Sigma_0~,
\end{align} 
and by Assumptions \ref{Ass: variance estimation}(iii) and \ref{Ass: Data}(i)(iii)(iv) that
\begin{align}\label{Pro: AN, aux17}
\tilde G_n\equiv \frac{1}{n}\sum_{k=1}^{K}\sum_{i\in I_k} \nabla_\theta g(X_i,\theta_0,\hat\gamma_{-k})\overset{p}{\to} G_0~.
\end{align}
By Assumption \ref{Ass: variance estimation}(iii), we have that whenever $(\hat\theta_n,\hat\gamma_{-k})\in U_0\times V_0$,
\begin{multline}\label{Pro: AN, aux18}
\|\hat G_n-\tilde G_n\| = \|\frac{1}{n}\sum_{k=1}^{K}\sum_{i\in I_k} \{\nabla_\theta g(X_i,\hat\theta_n,\hat\gamma_{-k})-\nabla_\theta g(X_i,\theta_0,\hat\gamma_{-k})\}\|\\
\le \frac{1}{n}\sum_{k=1}^{K}\sum_{i\in I_k} b(X_i,\hat\gamma_{-k})\|\hat\theta_n-\theta_0\|~.
\end{multline}
By Assumption \ref{Ass: Data}(iii), we note that, for any $\epsilon>0$,
\begin{align}\label{Pro: AN, aux19}
\limsup_{n\to\infty}P(\frac{1}{n}\sum_{k=1}^{K}&\sum_{i\in I_k} b(X_i,\hat\gamma_{-k})>\epsilon)\notag\\
&\le \limsup_{n\to\infty}P(\frac{1}{n}\sum_{k=1}^{K}\sum_{i\in I_k} b(X_i,\hat\gamma_{-k})>\epsilon,\hat\gamma_{-k}\in V_0\text{ for all }k)\notag\\
&\le \limsup_{n\to\infty}P(\frac{1}{n}\sum_{k=1}^{K}\sum_{i\in I_k}\sup_{\gamma\in V_0} b(X_i,\gamma)>\epsilon)~.
\end{align}
By Markov's inequality, Assumptions \ref{Ass: Data}(i) and \ref{Ass: variance estimation}(iii), and $\epsilon>0$ being arbitrary, result \eqref{Pro: AN, aux19} implies $\frac{1}{n}\sum_{k=1}^{K}\sum_{i\in I_k} b(X_i,\hat\gamma_{-k})=O_p(1)$. This, together with Assumption \ref{Ass: Data}(ii) and results \eqref{Pro: AN, aux17} and \eqref{Pro: AN, aux18}, implies that
\begin{align}\label{Pro: AN, aux20}
\hat G_n\overset{p}{\to} G_0~.
\end{align}
The conclusion $\hat V_{\theta,n}\overset{p}{\to}V_{\theta,0}$ of part (ii) then follows from \eqref{Pro: AN, aux16}, \eqref{Pro: AN, aux20}, Assumptions \ref{Ass: Model}(iii) and \ref{Ass: Data}(v), and the continuous mapping theorem. \qed
 
\end{appendices}

\phantomsection
\addcontentsline{toc}{section}{References}
\bibliographystyle{ecta}
\bibliography{bibliography}
%\bibliography{D:/Dropbox/Common/LaTeX/mybibliography}
\end{document}